\documentclass{article}

% Language setting
% Replace `english' with e.g. `spanish' to change the document language
\usepackage[english]{babel}

% Set page size and margins
% Replace `letterpaper' with `a4paper' for UK/EU standard size
\usepackage[letterpaper,top=2cm,bottom=2cm,left=3cm,right=3cm,marginparwidth=1.75cm]{geometry}

% Useful packages
\usepackage{amsmath}
\usepackage[colorlinks=true, allcolors=blue]{hyperref}

\usepackage{pre_all}
\usepackage{pre_CS}
\usepackage[blocks]{authblk}

%Begin Wojtek's macros

\usepackage{accents}

\theoremstyle{remark}

\newcommand\form[1]{\underaccent{\vec}{#1}}

\newcommand{\FF}{{\mathcal F}}
\let\C\relax
\newcommand{\C}{{\mathbb C}}
\newcommand{\ZZ}{{\mathbb Z}}

\newcommand{\Iota}{{\mathcal B}}

\newcommand{\coh}{{\rm coh}}
\newcommand{\hilb}{{\mathfrak H}}

\newcommand{\Stat}{{\mathcal St}}

\newcommand{\ro}{ro}
\DeclareMathOperator{\Aff}{Aff}

\DeclareMathOperator{\Sp}{Sp}
\DeclareMathOperator{\Met}{Met}
\newcommand{\met}{\Theta}

\newcommand{\SLCC}{\SL(2,\C)}
\newcommand{\slcc}{\operatorname{sl}(2,\C)}
\DeclareMathOperator{\Hol}{Hol}
\DeclareMathOperator{\Ad}{Ad}

\newcommand{\Blue}{}
\newcommand{\blue}{}

%\newcommand{\red}[1]{{\color{green}#1}}

%End Wojtek's macros

\title{Hessian in the spinfoam models with cosmological constant}

\author[1]{{\bf Wojciech Kami{\'n}ski}}
%\email{Wojciech.Kaminski@fuw.edu.pl}
\affil[1]{Faculty of Physics, University of Warsaw,
ul. Pasteura 5, 02-093 Warsaw, Poland}
\author[2,3]{{\bf Qiaoyin Pan\textsuperscript{$\ast$}}}
%\email{qpan@tsinghua.edu.cn}
\affil[2]{Yau Mathematical Sciences Center, Tsinghua University, Beijing 100084, China}
\affil[3]{Department of Physics, Florida Atlantic University, 777 Glades Road, Boca Raton, FL 33431, USA}

\begin{document}
\maketitle

\begin{abstract}
In this paper, we introduce a general method to prove the non-degeneracy of the Hessian in the spinfoam vertex amplitude for quantum gravity and apply it to the spinfoam models with a cosmological constant ($\Lambda$-SF model).  By reformulating the problem in terms of the transverse intersection of some submanifolds in the phase space of flat $\SL(2,\bC)$ connections, we demonstrate that the Hessian is non-degenerate for critical points corresponding to non-degenerate, geometric 4-simplices in de Sitter or anti-de Sitter space.  Non-degeneracy of the Hessian is an important necessary condition for the stationary phase method to be applicable. With a non-degenerate Hessian, this method not only confirms the connection of the $\Lambda$-SF model to semiclassical gravity, but also shows that there are no dominant contributions from exceptional configurations as in the Barrett-Crane model. Given its general nature, we expect our criterion to be applicable to other spinfoam models under mild adjustments. 
\end{abstract}
\tableofcontents

%%%%%%%%%%%%%%%%%%%%%%%%%%%%%%%%%%%%%%%%%%%%%%%%%%%%%%%%%%%%%%%%%%%%%%%%%%%%%%%%%%%%%%%%%%%%%%%%%%
\section{\blue{Introduction}}
%%%%%%%%%%%%%%%%%%%%%%%%%%%%%%%%%%%%%%%%%%%%%%%%%%%%%%%%%%%%%%%%%%%%%%%%%%%%%%%%%%%%%%%%%%%%%%%%%%

Spinfoam models are covariant formulations of Loop Quantum Gravity \cite{Rovelli:2014ssa, Perez:2012wv,Baez:1999sr,Bianchi:2017loop,Engle:2023qsu,Livine:2024hhc}. {\Blue Their amplitudes are defined as a sum over labels of the product of composite amplitudes for a given triangulation of the manifold. The precise definition of a state sum depends on the specific details of the model. It can be introduced as a limit over a refinement of triangulations \cite{Rovelli:2010qx}, sum over possible triangulations of a manifold  \cite{Han:2025emp}, or, in the Group Field Theory approach, a sum over triangulations of all possible manifolds with given boundary} \cite{DePietri:1999bx,Reisenberger:2000zc,Freidel:2005qe,Oriti:2006se}. 

The fundamental building block of these theories is the  vertex amplitude that attracted much attention. In particular, the semiclassical relation to quantum gravity relies crucially on the asymptotic behavior of this amplitude. {\Blue The relation of spinfoam models, which are defined using group (or quantum group) theory, to Einstein's gravity has a long history.  It started with the seminal observation by Ponzano and Regge relating the asymptotics of $6j$-symbols to 3D gravity \cite{Ponzano:1968se}, a result subsequently proven through different methods \cite{Roberts:1998zka,Schulten:1975sem,Freidel:2002mj,Barrett:1993db}. This correspondence was later extended to include a cosmological constant \cite{taylor20066}, linking the asymptotics of $\SU_q(2)$ $6j$-symbol to 3D gravity with a cosmological constant.  }

\blue{In the four-dimensional context, the first significant development was the Barrett-Crane model \cite{Barrett:1997gw,Barrett:1999qw}, the asymptotics of which were analyzed in \cite{Barrett:1997gw,Barrett:1998gs,Barrett:1999qw, Freidel:2002mj}. Fundamental result of \cite{Barrett:2011xa, Barrett:2009mw} demonstrated that the asymptotics of the 4D EPRL vertex amplitude recover the Einstein-Hilbert action on a suitably constructed 4-simplex. This analysis confirmed that spinfoam models could successfully describe gravitational degrees of freedom in the semiclassical limit.}

\blue{Developing a physical 4D theory requires navigating several distinctions in model construction. While 3D models are topological, 4D models must capture local degrees of freedom. Furthermore, while Euclidean models offer mathematical simplicity, Lorentzian models are necessary to describe gravity with the correct physical signature, though they introduce the complexities of the non-compact Lorentz group. A third, critical distinction, and the focus of this work, lies in the treatment of the cosmological constant, $\Lambda$. We distinguish between ``flat models" ($\Lambda=0$) and ``$\Lambda$-models" ($\Lambda\neq 0$). Flat models, defined using classical groups, typically suffer from infrared divergences requiring renormalization \cite{Freidel:2004vi,Bonzom:2009zd,Bonzom:2012mb,Goeller:2019zpz,Barrett:2008wh}}\footnote{\blue{Let us note that the na\"ive divergence of the Ponzano-Regge model (3D Euclidean spinfoam without a cosmological constant) has been well understood \cite{Freidel:2004vi,Bonzom:2009zd,Goeller:2019zpz}. It is related to the translation group of the internal vertices of a triangulation, whose volume is infinite. It can be regularized by a gauge-fixing process \cite{Freidel:2004vi,Bonzom:2012mb} or using the cohomological criterion \cite{Barrett:2008wh}, leading to well-defined topological invariants. }}.
\blue{In contrast, $\Lambda$-models are typically defined in terms of quantum groups or Chern-Simons theory \cite{Turaev:1992hq,Noui:2002ag,Han:2010pz,Fairbairn:2010cp,Han:2021tzw,Han:2025mkc}, where $\Lambda$ is treated as a deformation parameter or a coupling constant}.

The inclusion of a cosmological constant has various motivations. For specific values of $\Lambda\not=0$, the labels in a sum turn out to be of a finite range and the model does not need renormalization. 
This happens in the case of the Turaev-Viro model in 3D \cite{Turaev:1992hq}, \blue{resulting from the use of quantum group $\SU_q(2)$ at the root of unity}, and in Lorentzian 4D models considered in this paper \cite{Han:2021tzw,Han:2025mkc} \blue{as a result of the finite Chern-Simons level}. Additionally, observations \cite{SupernovaSearchTeam:1998fmf,SupernovaCosmologyProject:1998vns,Planck:2018vyg} indicate the presence of a cosmological constant in our universe, motivating the study of 4D Lorentzian spinfoam with $\Lambda\neq 0$.

However, there are subtleties in the definition of the models for a non-zero $\Lambda$. It was proposed quite early \cite{Baez:1999sr} that the \blue{4D spinfoam} models \blue{with a $\Lambda\neq 0$} should be related to Chern-Simons theory \blue{from the formal path integral formalism of the BF theory with a $B\w B$ term} \cite{Horowitz:1989ng,Cotta-Ramusino:1994nhr,Cattaneo:1995tw}\footnotemark{}, but explicit constructions were plagued by some problems. \blue{The initial attempts \cite{Noui:2002ag,Han:2010pz,Fairbairn:2010cp} employed $q$-deformation of flat models. They are manifestly finite, but their semiclassical limit is complicated to study. In contrast, the models developed in \cite{Haggard:2015sl,Haggard:2015yda,Haggard:2015nat} maintain good control over the semiclassical regime. 
However, because they rely on a formal path integral formulation, they lack a rigorous mathematical definition.}

In this paper, we concentrate on a recent development \cite{Han:2021tzw, Han:2023hbe, Han:2024reo, Han:2025mkc, Pan:2025sut} based on $\SLCC$ Chern-Simons theory, as formulated in a series of works \cite{Gaiotto:2009hg,Dimofte:2011gm,Dimofte:2011ju,Dimofte:2013lba,Dimofte:2014zga,andersen2014complex}. We will call this spinfoam model the $\Lambda$-SF model. \blue{The most outstanding benefit of the $\Lambda$-SF model is that the amplitude is not only finite by construction but also recovers 4D Lorentzian Regge action with a comological constant in the semiclassical approximation. Recent developments of the $\Lambda$-SF model have been focusing on the semiclassical behaviors and geometrical interpretation of the amplitude corresponding to different triangulations of the spacetime  \cite{Han:2023hbe,Han:2024reo,Pan:2025sut}, where asymptotic analysis is the key for deriving the results.  }
\footnotetext{\blue{Another relation between 4D quantum gravity and Chern-Simons action from a different point of view is given in \cite{Smolin:1995vq}. It relies on imposing self-dual boundary conditions on a finite boundary in 4D Euclidean gravity, which leads to a boundary term in the action that is exactly the Chern–Simons action. }}

\blue{The asymptotic analysis of spinfoam vertex amplitudes has a well-established history, central to their geometric interpretation. 
For the flat models, it starts from studying the asymptotics of the Euclidean and Lorentzian $10j$-symbol, the building blocks of the Barrett-Crane model \cite{Barrett:1997gw,Barrett:1999qw}. These were analyzed both analytically \cite{Barrett:1998gs, Freidel:2002mj,Barrett:2002ur} and numerically \cite{Baez:2002rx,Baez:2002aw,Christensen:2007rv,Christensen:2009bi}, establishing the initial relation with 4-simplex geometry.
In the more recent EPRL model \cite{Engle:2007wy}, the use of coherent states, developed by Livine and Speziale \cite{Livine:2007vk,Livine:2007ya} and advanced by Freidel and Krasnov \cite{Freidel:2007py}, was crucial in recovering the geometry of a 4-simplex in the semiclassical regime, as shown in detail in \cite{Barrett:2011xa, Barrett:2009mw}.}
Later, the \blue{asymptotic analysis} results were extended to the case with a cosmological constant \cite{Han:2021tzw,Han:2025mkc} (based on geometric results of \cite{Haggard:2015ima} and \cite{Haggard:2015sl}). For the $\Lambda$-models, the $4$-simplices are embedded in constantly curved de Sitter or Anti-de Sitter spacetimes\footnote{In fact, by analysis of Plebanski action, one can show that both signs of cosmological constant should be expected \cite{Haggard:2014xoa}.}. 

In all these cases, the analysis is based on the stationary phase approximation. The method of the stationary phase is a basic tool in asymptotic analysis. It provides an asymptotic expansion based on a few assumptions. One of the important points among these assumptions is the non-degeneracy of the matrix of second derivatives of some action\footnote{\blue{We consider the action after removing gauge freedom by some gauge fixing procedure.}}. We will call this object a Hessian, and it is the main character of our work. 
This condition is more important than it may appear at first sight. In the case when the Hessian is degenerate, the asymptotic of the vertex amplitude is less suppressed. Such configurations are thus expected to be dominant, spoiling the good asymptotic behavior (recovering simplex geometries) of the models. Hence, it is desirable to exclude these pathological behaviors.
\blue{The role of the Hessian has been thoroughly stressed in prior literature, from the analysis of the $10j$-symbol \cite{Barrett:1998gs, Freidel:2002mj,Christensen:2007rv,Christensen:2009bi} to graviton propagator calculations in spinfoams \cite{Alesci:2008ff,Bianchi:2009ri, Bianchi:2011hp}. }

The Hessian in spinfoam models is typically a large matrix, and the determinant is difficult to compute. There are a few results about this object. First, it can be explicitly computed in the case of Ponzano-Regge \cite{Kaminski:2013gaa} and Barrett-Crane models \cite{Kaminski:2013yca}. In the first case, it is non-degenerate \blue{(for geometric configurations that correspond to non-degenerate tetrahedra)}, leading to the known Ponzano-Regge asymptotic \cite{Ponzano:1968se}. In the second case, however, there exist geometric configurations for which the Hessian is degenerate {\Blue \cite{Kaminski:2013yca}. More precisely, the Hessian is degenerate for certain special lengths for which the map from edge lengths to face areas is not a local diffeomorphism (such configurations were described explicitly, for example, in \cite{Dittrich:2007wm})}. In the EPRL model, the computation of the Hessian is virtually impossible, leading to numerical studies \cite{Han:2020fil,Han:2021kll,Han:2023cen}. These studies showed that the Hessian is non-degenerate for generic situation. However, results from the Barrett-Crane model cast doubts on whether the determinant is always non-zero. Quite surprisingly, in \cite{Kaminski:2019dld}, it was proven that the Hessian is non-degenerate at the critical points corresponding to non-degenerate 4-simplices in the EPRL cases (both Lorentzian and Euclidean). 

This leads to a natural question: does the same property hold for the $\Lambda$-SF  model? \blue{Similar to the EPRL case, it is difficult to compute the determinant analytically due to the large size of the Hessian matrix. Non-degeneracy of the Hessian in the $\Lambda$-SF model \cite{Han:2021tzw,Han:2025mkc} was only assumed and passed some numerical tests. Such an assumption continued to be applied in later developments of the models \cite{Han:2023hbe,Han:2024reo,Pan:2025sut}. The good news is}, the method of \cite{Kaminski:2019dld} does not include the actual computation of the Hessian, but only utilizes special properties of the action. \blue{Namely, the action therein satisfies the ``reality conditions'', allowing us to study its geometric theory using the theory of positive Lagrangians introduced in \cite{hormander1973existence}. This gives us the hope that similar techniques could be used in the $\Lambda$-SF model to study the Hessian in generic cases. }

In this paper, we show that, under standard assumptions about boundary states considered in \cite{Han:2021tzw, Han:2025mkc}, the Hessian for the $\Lambda$-SF  model \cite{Han:2021tzw, Han:2025mkc} is non-degenerate. In fact, our result can be divided into two independent parts. The first part concerns restating the condition of non-degeneracy of the Hessian in terms of the intersection of some submanifolds in the space of flat connections. This part is model-dependent, but we expect that every $\Lambda$-model \blue{defined as a constrained topological quantum field theory (TQFT)} will allow for such a reduction. 
\blue{This is based on the natural assumption that, in the semiclassical expansion of the vertex amplitude,  the action can be separated into two parts, one coming from the topological theory itself, and the other coming from the imposition of constraints. These two parts should correspond to different submanifolds of the phase space of the classical topological theory, and the critical points lie in the intersection of these submanifolds, whose feature is related to the Hessian.} The method introduced in this paper is quite general in nature and can be applied to various models. The second part is geometric in nature. It concerns certain properties of holonomies of a constantly curved non-degenerate $4$-simplex together with a well-adapted description of the moduli space of flat connections. The final argument turns out to be very similar to the work in the flat case \cite{Kaminski:2019dld}, but a bit more complex. A difficulty here is a lack of a global frame \blue{(a set of orthonormal basis vectors defined over the entire simplex by the parallel propagation) when a non-zero curvature is present. Therefore, vectors alone cannot encode the geometry consistently, and holonomies are fundamentally necessary} \cite{Haggard:2015ima}.

%%%%%%%%%%%%%%%%%%%%%%%%%

\subsection*{\blue{Description of the results}}
\label{sec:description}
%%%%%%%%%%%%%%%%%%%%%%%%%%%%%%%5

The main objective of this paper is to extend the analysis of Hessian non-degeneracy from the flat EPRL spin foam model (established in \cite{Kaminski:2019dld}) to the $\Lambda$-SF model with a cosmological constant \cite{Han:2021tzw,Han:2025mkc}. Our main result confirms that the pathological configurations discussed in the introduction are absent in the semiclassical limit of this model. Informally, the result can be stated as follows (see Theorem \ref{thm-1-formal} for the formal statement):

\begin{theorem}\label{thm-1}
The Hessian at the stationary point for the action of \cite{Han:2021tzw, Han:2025mkc} is non-degenerate for any boundary data corresponding to a non-degenerate curved $4$-simplex.
\end{theorem}

Proving this theorem directly is computationally intractable. Unlike the Ponzano-Regge \cite{Kaminski:2013gaa} or Barrett-Crane \cite{Kaminski:2013yca} models, where asymptotic formulas are explicit, the EPRL-type models rely on boundary coherent states. Consequently, the Hessian depends non-trivially on the ``spread" of these states, rendering explicit geometric formulas for the matrix entries essentially impossible to derive. To overcome this, we adopt the strategy introduced in \cite{Kaminski:2019dld}, which avoids direct computation of the determinant by exploiting the structural properties of the action (reality condition). The additional difficulty compared to \cite{Kaminski:2019dld} is that the variables used in the construction of the $\Lambda$-SF model (the Fock-Goncharov and Fensel-Nielsen (FG-FN) coordinates) do not have a direct interpretation in the geometry of the reconstructed curved $4$-simplex \cite{Han:2021tzw,Han:2025mkc}. We overcome this problem by expressing special properties of the Hessian in differential geometry terms of the phase space (without direct reference to the actions).
The logic of our proof proceeds in three distinct steps:

{\bf 1. Reduction to Lagrangian Intersections:} 
First, we utilize the ``reality conditions" of the action. As shown in \cite{Kaminski:2019dld}, a vector lies in the kernel of the Hessian if and only if it is annihilated separately by both the real and imaginary parts of the Hessian. Since the imaginary part is a sum of positive semi-definite matrices, this imposes strong constraints. We elevate this analytical condition to a geometric one by introducing  {\it real Lagrangian parts} (see Definition \ref{df:real-Lagriangian-part}). It is well-known that non-degeneracy of the Hessian in the case of real actions can be expressed by properties of intersections of corresponding Lagrangian submanifolds (see \eg \cite{guillemin2013semi}). With complex actions, one associates so-called positive Lagrangians introduced by Mellin, Sj{\"o}strand and H{\"o}rmander \cite{Mellin-Sjostrand-75,hormander-IV}. Their definition by integrable distributions is harder to handle. However, only small amounts of data of these objects enter the computation of the Hessian. The real Lagrangian parts describe this part of the data encoded in the actions. We demonstrate that non-degeneracy of the Hessian is equivalent to the transverse intersection of two real Lagrangian part submanifolds in the phase space (Propositions \ref{prop:clean-0} and \ref{lm:U-asym}).
The theory of real Lagrangian parts will be developed in Section \ref{sec:stationary_phase}, together with an additional detail of the possible addition of a metaplectic transformation. We expect that this part will have broader application not only to the theory of spin foam models.

{\bf 2. Mapping to the Space of Flat Connections:} 
The phase space of the $\Lambda$-SF model is defined using the FG-FN coordinates. While convenient for quantization, these coordinates lack a direct interpretation in terms of the geometry of the reconstructed 4-simplex. However, there is a natural map from the FG-FN coordinates to the phase space $\cP_\Sigma$ of moduli space of $\SL(2,\bC)$ flat connections on a surface $\Sigma$ introduced in Section \ref{sec:critical_spinfoam}.
We show that the problem can be pushed forward via a local diffeomorphism to $\cP_{\Sigma}$. In this symplectic space, the critical point of the oscillatory integral corresponds to the intersection of two submanifolds: 
 \begin{itemize}
 	\item $\cL_{\rm coh}(m)$: determined by the boundary coherent states labelled by $m$\footnote{
The subscript ``coh" indicates that this submanifold $\cL_{\coh}(m)$ comes from an action encoding coherent states on the boundary. };
 	\item $\cL_{M_3}$: determined by the topological Chern-Simons theory (representing the dynamics of the bulk).
 \end{itemize}
 
{\bf  3. Geometric Proof of Transversality:}
The problem is thus reduced to proving that the tangent spaces of these two submanifolds intersect trivially at the critical point. This is the content of our second major theorem:

\begin{theorem}\label{thm:geometric}
For geometric boundary data $m$ (see Definition \ref{df:boundary-data}), the intersection of the tangent spaces of the boundary real Lagrangian part and the bulk real Lagrangian part is trivial:
\begin{equation}
T_x{\mathcal L}_{\coh}(m)\cap T_x{\mathcal L}_{M_3}=\{0\}.
\end{equation}
where $x\in {\mathcal L}_{\coh}(m)\cap {\mathcal L}_{M_3}$ is the point in the phase space corresponding to the geometric 4-simplex. 
\end{theorem}

This geometric part of the proof is intricate. In the flat case \cite{Kaminski:2019dld}, geometry could be encoded using bi-vectors. However, in the presence of a nonzero cosmological constant, bi-vectors alone cannot consistently describe the geometry, and we must instead rely fundamentally on holonomies \cite{Haggard:2015ima}. By analyzing the variations of holonomies around the faces of a constantly curved 4-simplex, we demonstrate that the only tangent vector satisfying the linearization of both the closure constraints (relevant to $\cL_{\rm coh}(m)$) and the flatness constraints (relevant to $\cL_{M_3}$) is the zero vector.

Both $\cL_{\rm coh}(m)$ and $\cL_{M_3}$ can be defined purely geometrically in terms of the phase space of Chern-Simons theory. Thus, the statement and the proof of the Theorem \ref{thm:geometric} do not rely on the particular construction of the model. This is the result which we expect to have a more general scope and to be applicable in various spin foam models based on Chern-Simons theory. For brevity, we first prove Theorem \ref{thm:geometric} in Section \ref{sec:classical} and only after that consider a particular model in Section \ref{sec:critical_spinfoam}.

%%%%%%%%%%%%%%%%%%%%%%%%%%%%%%%%%%%%%%%%%%%%%%%%%%%%%%%%%%%%%%%%%%%%%%%%%%%%%%%%%%%%%%%%%%%%%%%%%%
\section{Stationary phase}
\label{sec:stationary_phase}
%%%%%%%%%%%%%%%%%%%%%%%%%%%%%%%%%%%%%%%%%%%%%%%%%%%%%%%%%%%%%%%%%%%%%%%%%%%%%%%%%%%%%%%%%%%%%%%%%%

In this section, we introduce certain objects (real Lagrangian part) which play a crucial role in semiclassical analysis of spinfoam models. We will describe stationary point analysis in symplectic geometry terms and relate non-degeneracy of the Hessian to some property of the intersection of real Lagrangian parts of the actions of semiclassical states. We start with the simplest situation of states in $L^2(\R^N)$, but extend it to more complicated situations later.

%%%%%%%%%%%%%%%%%%%%%%%%%%%%%%%%%%%%%%%%%%%%%%%%%%%%%%%%%%%%%%%%%%%%%%%%%%
\subsection{Stationary phase on $\R^N$}
\label{sec;stationary-RN}
%%%%%%%%%%%%%%%%%%%%%%%%%%%%%%%%%%%%%%%%%%%%%%%%%%%%%%%%%%%%%%%%%%%%%%%%%%

In the simplest situation, we have two states (or generalized states) on $\R^N$ parametrized by $k\in \ZZ_+$ and given by 
\begin{equation}
\psi_k^\pm(\vec{q})=A^\pm(k,\vec{q})e^{ikS^\pm(\vec{q})}\,,
\end{equation}
where $\Im S^\pm\geq 0$ and $A^\pm(k,\vec{q})$, $S^\pm(\vec{q})$ are smooth functions on $\R^N$.  Throughout this paper, we only consider actions $S$ that satisfy $\Im S\geq 0$. We denote 
\begin{equation}
\vec{q}=(q^1,q^2,\ldots ,q^{N}).
\end{equation}
Typically, $A^\pm$ admits expansions in powers of $k$ (it is in the so-called symbol class \cite{Grigis_Sjoestrand_1994}).

We are interested in studying the asymptotic regime of a scalar product of two states, where $k\rightarrow \infty$.
\begin{equation}
\langle \psi^+_k,\psi^-_k\rangle_{\R^N}=\int_{\R^{N}}\rd^N\vec{q}\ \overline{\psi^+_k(\vec{q})}\psi^-_k(\vec{q}), \quad \rd^N\vec{q}:=\prod_{i=1}^{N}\rd q^i \,.
\end{equation}
The standard method is the stationary point analysis.
We denote
\begin{equation}
S_{tot}(\vec{q})=S_--\overline{S_+}\,.
\end{equation}
\begin{enumerate}
\item A real stationary point $\vec{q}_\ast$ is defined by conditions that
\begin{equation}
\left.\frac{\partial S_{tot}}{\partial q^i}\right|_{\vec{q}_\ast}=0,\ i=1,\ldots, N,\quad \Im S_{tot}(\vec{q}_\ast)=0\,.
\end{equation}
We will denote the set of such points by $\Stat(S_{tot})\subset \R^N$.
\item Hessian at the stationary point $\vec{q}_\ast$ is the matrix of second derivatives
\begin{equation}
{\bf H}(S_{tot})_{\vec{q}_\ast}=\partial^2 S_{tot}|_{\vec{q}_\ast}\,.
\end{equation}
The important condition for the stationary phase approximation is that ${\bf H}_{\vec{q}_\ast}$ is non-degenerate ($\det {\bf H}_{\vec{q}_\ast}\not=0$). If this condition is satisfied, the set of stationary points is discrete.
\end{enumerate}
One can show by the integration by parts technique that, if $A^\pm$ are in the symbol class $S^m_{\rho,\delta}$ (with $k$ treated as momentum)\footnote{
\blue{Let $X\subset \R^N$ be an open set, $0< \rho\leq 1, 0\leq \delta< 1, m\in\R, M\in\N\backslash\{0\}$. The symbol class $S^m_{\rho,\delta}(X,\R^M)$  of order $m$ and of type $(\rho,\delta)$ is the space of all $A\in C^\infty (X\times \R^M)$ such that for any compact $K\subset X$ and for all $\alpha\in\N^N,\beta\in \N^M$, there is a constant $C_{\alpha,\beta}$ such that 
\be
\left|\partial_x^\alpha \partial_{\xi}^\beta A(x, \xi)\right| \leq C_{\alpha \beta}(1+|\xi|)^{m-\rho|\beta|+\delta|\alpha|}\,.
\nn\ee
It guarantees sufficient decay of derivatives in $x$ (controlled by $\delta$) to justify the stationary phase approximation using integration by parts. See \cite{Grigis_Sjoestrand_1994} for more details. Here, $M=1$ and $\xi=k$.}}
 and they are functions of compact support in variables $\vec{q}$, then the stationary phase approximation gives the correct asymptotic expansion. However, in the situations considered in this paper, the assumption of compact support needs to be relaxed; thus, the applicability of the stationary phase method is an open question. We only mention that the absolute convergence of the integrals is not sufficient. \blue{The second generalization that we will need is 
to consider actions that admit singularities. We will assume
that $S_\pm$ are smooth and only defined on open subsets $X_\pm\subset \R^N$. The total action $S_{tot}$ is then defined on $X_+\cap X_-$ and $\Stat(S_{tot})\subset X_+\cap X_-$.}

We would like to express the stationary point set in terms of intersections of some \blue{subsets of the phase space} associated to $S_+$ and $S_-$. 
In the case where actions are real, the corresponding submanifolds are Lagrangian submanifolds of the phase space $T^\ast \R^N$ with symplectic form $\Omega$,
\begin{equation}\label{eq:Omega-TR}
\Omega=\frac{1}{2\pi}\sum_{i=1}^N \rd p_i\wedge \rd q^i\,.
\end{equation}
The situation is more complicated if the action is complex. The proper concept of positive Lagrangians was introduced in \cite{Mellin-Sjostrand-75}. It captures the whole asymptotic expansion, but for us, what is important will be only a fraction of the information encoded in these objects. This piece of information is given by the real Lagrangian part.

Let us introduce the notation for $T^\ast\R^N$: we list first momenta and then positions in the same order. We will also use the following notation for momenta and positions,
\begin{equation}
\form{p}=(p_1,p_2,\ldots ,p_{N}),\quad \vec{q}=(q^1,\ldots, q^N)\,.
\end{equation}
We also introduce the projection into positions $\pi\colon T^\ast\R^N\rightarrow \R^N$.

\begin{definition}\label{df:real-Lagriangian-part}
An action on \blue{an open subset $X\subset \R^N$} is a complex smooth function $S$ \blue{on $X$} satisfying $\Im S\geq 0$.
The {\bf real Lagrangian part} ${\mathcal L}^r_S$ for the action $S$ is a subset of $T^\ast\R^N$,
\begin{equation}
{\mathcal L}^r_S=\left\{(\form{p},\vec{q})\in T^\ast\R^N\colon \blue{\vec{q}\in X,\ }\frac{p_i}{2\pi}=\frac{\partial \Re S}{\partial q^i},\ i=1,\ldots, N,\quad \Im S=0\right\}.
\label{eq:def-Lr}
\end{equation}
\end{definition}

We can justify the introduction of our definition by the following fact:

\begin{prop}\label{prop:stat-0}
Let $S_\pm$ be actions on $\blue{X_\pm\subset} \R^N$. Denote $S_{tot}=S_--\overline{S_+}$ \blue{(defined on $X_+\cap X_-$)}. The map $\pi\colon T^\ast\R^N\rightarrow \R^N$ provides a bijection
\begin{equation}
\pi \colon {\mathcal L}^r_{S_+}\cap {\mathcal L}^r_{S_-}\rightarrow \Stat(S_{tot})\,.
\end{equation}
\end{prop}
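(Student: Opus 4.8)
The plan is to prove the equivalence pointwise: show that a point $(\form{p},\vec q)$ lies in ${\mathcal L}^r_{S_+}\cap{\mathcal L}^r_{S_-}$ if and only if $\vec q\in\Stat(S_{tot})$ and the momenta $\form p$ are determined by $\vec q$. Since $\pi$ simply forgets the momentum coordinates, it suffices to check that (i) for every $\vec q\in\Stat(S_{tot})$ there is a unique $\form p$ with $(\form p,\vec q)\in{\mathcal L}^r_{S_+}\cap{\mathcal L}^r_{S_-}$, and (ii) for every point of the intersection, its image $\vec q$ lies in $\Stat(S_{tot})$. Uniqueness of $\form p$ is immediate since membership in ${\mathcal L}^r_{S_+}$ already forces $p_i = 2\pi\,\partial\Re S_+/\partial q^i$.

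First I would unwind the definitions. A point $(\form p,\vec q)\in{\mathcal L}^r_{S_+}$ means $p_i/2\pi = \partial\Re S_+/\partial q^i$ and $\Im S_+(\vec q)=0$; similarly for $S_-$. If a single $(\form p,\vec q)$ lies in both, then subtracting the two momentum conditions gives $\partial\Re S_+/\partial q^i = \partial\Re S_-/\partial q^i$ for all $i$, i.e. $\partial\Re(S_- - S_+)/\partial q^i = 0$. Next I would use the two reality conditions $\Im S_+=\Im S_-=0$ at $\vec q$: these give $\Im(S_- - S_+)=0$ there, hence $\Im S_{tot}(\vec q) = \Im(S_- - \overline{S_+}) = \Im S_- + \Im S_+ = 0$, which is the second stationarity condition. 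For the gradient condition, note $S_{tot} = S_- - \overline{S_+}$, so $\partial S_{tot}/\partial q^i = \partial S_-/\partial q^i - \overline{\partial S_+/\partial q^i}$; taking real and imaginary parts, the real part vanishes by the momentum-matching just derived, and the imaginary part is $\partial\Im S_-/\partial q^i + \partial\Im S_+/\partial q^i$. Here I would invoke the standard fact that a smooth function with $\Im S\ge 0$ which vanishes (in imaginary part) at a point $\vec q$ must have $\nabla\Im S(\vec q)=0$, since $\vec q$ is a minimum of $\Im S$; applying this to both $S_+$ and $S_-$ kills the remaining term, so $\partial S_{tot}/\partial q^i|_{\vec q}=0$. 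Thus $\vec q\in\Stat(S_{tot})$.

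Conversely, given $\vec q\in\Stat(S_{tot})$, I define $p_i := 2\pi\,\partial\Re S_+/\partial q^i|_{\vec q}$ and must check $(\form p,\vec q)$ lands in both real Lagrangian parts. For ${\mathcal L}^r_{S_+}$ this requires $\Im S_+(\vec q)=0$; for ${\mathcal L}^r_{S_-}$ it requires both $\Im S_-(\vec q)=0$ and $p_i/2\pi = \partial\Re S_-/\partial q^i$. From $\Im S_{tot}(\vec q)=0$ we get $\Im S_+(\vec q)+\Im S_-(\vec q)=0$; since each summand is $\ge 0$, both vanish. Then, as in the forward direction, $\nabla\Im S_+(\vec q)=\nabla\Im S_-(\vec q)=0$, so the vanishing of $\partial S_{tot}/\partial q^i$ reduces to $\partial\Re S_-/\partial q^i - \partial\Re S_+/\partial q^i = 0$, giving exactly $p_i/2\pi=\partial\Re S_-/\partial q^i$. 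Hence $(\form p,\vec q)\in{\mathcal L}^r_{S_+}\cap{\mathcal L}^r_{S_-}$, and it is the unique preimage of $\vec q$ under $\pi$ restricted to the intersection.

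The only nontrivial input is the lemma that $\Im S\ge 0$ together with $\Im S(\vec q)=0$ forces $\nabla\Im S(\vec q)=0$ — this is just the first-order condition at an interior minimum of the smooth real function $\Im S$, so it is elementary; the rest is bookkeeping with real and imaginary parts. I expect the main (very minor) obstacle to be stating cleanly that no regularity or transversality hypothesis is needed for this set-theoretic bijection — non-degeneracy of the Hessian enters only later when one wants the stationary points to be isolated — so the proposition as stated is a soft identity of sets and its proof should be short.
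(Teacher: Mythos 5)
Your proof is correct and follows essentially the same route as the paper's: both hinge on the observation that $\Im S_{tot}=\Im S_-+\Im S_+\geq 0$ vanishes iff each term does, that such points are interior minima of $\Im S_\pm$ so $\rd\,\Im S_\pm=0$ there, and that the stationarity condition then reduces to matching of the real momenta $p_i=2\pi\,\partial\Re S_\pm/\partial q^i$. Your write-up is merely more explicit about the two directions and the uniqueness of the momentum lift; no substantive difference.
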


\begin{proof}
Let us notice that
\begin{equation}
\Im S_{tot}=\Im \left(S_--\overline{S_+}\right)=\Im S_-+\Im S_+\geq 0\,.
\end{equation}
Moreover, the equality holds if and only if $\Im S_\pm=0$.  Additionally, at such points, $\rd \Im S_\pm=0$. 

Taking this into account, the condition for a stationary point is equivalent to
\begin{equation}
\frac{\partial \Re S_-}{\partial q^i}-\frac{\partial \Re S_+}{\partial q^i}=0\,,\quad \forall\, i=1,\cdots,N
\end{equation}
together with $\Im S_\pm=0$. Introducing 
\begin{equation}
p_i=2\pi\frac{\partial \Re S_\pm}{\partial q^i}\,,
\end{equation}
we associate with every element of $\Stat(S_{tot})$ an element of ${\mathcal L}^r_{S_+}\cap {\mathcal L}^r_{S_-}$, proving the bijection.
\end{proof}

The goal of this section is to analyze the condition for non-degeneracy of the Hessian in terms of real Lagrangian parts of corresponding actions. In general, these objects may have many singularities. We need to impose certain regularity conditions. 

\begin{definition}\label{df:real-Lagriangian-smooth}
Let $S$ be an action on $\blue{X\subset \R^N}$.
A point $x\in {\mathcal L}^r_S$ is {\bf regular} for action $S$ if the matrix of second derivatives $\partial^2\Im S$ has constant rank in an open neighborhood of $\pi(x)\in \R^N$. We denote the set of regular points by ${\mathcal L}^{\ro}_S$. It is an open and dense subset of ${\mathcal L}^r_S$.
\end{definition}

The condition ensures that around $\vec{q}_*\in {\mathcal L}^{\ro}_S$, the set $\{\vec{q}\colon \Im S=0\}$ is a submanifold, thus ${\mathcal L}^r_S$ is a smooth submanifold around regular points. We can say even more:

\begin{lemma}\label{lm:easy-S-0}
Let $S$ be an action on $\blue{X\subset \R^N}$ and $x\in T^\ast\R^N$ a regular point for $S$. Then, there exists an open neighborhood $U$ of $\pi(x)$ such that 
\begin{equation}
\{\vec{q}\in U\colon \Im S=0\}=\{\vec{q}\in U\colon \rd \Im S=0\}\,.
\end{equation}
Moreover, $\{\vec{q}\in U\colon \Im S=0\}$ is a manifold.
\end{lemma}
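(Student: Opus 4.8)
The goal is to show that, near a regular point $x$ for the action $S$, the zero set $\{\vec q\in U\colon \Im S=0\}$ coincides with the critical set $\{\vec q\in U\colon \rd\Im S=0\}$ and that this common set is a manifold. The key structural fact is that $\Im S\geq 0$, so every point of the zero set is a global (hence local) minimum of $\Im S$; at such a point $\rd\Im S=0$ automatically, giving one inclusion $\{\Im S=0\}\subseteq\{\rd\Im S=0\}$ for free. The real content is the reverse inclusion, and here the constant-rank hypothesis on $\partial^2\Im S$ is exactly what is needed.

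The plan is as follows. First I would set $f=\Im S$, a smooth nonnegative function, let $r$ be the constant rank of $\partial^2 f$ on a neighborhood of $\pi(x)$, and pass to a possibly smaller neighborhood $U$ on which this rank is constant. Next, I would invoke the Morse--Bott normal form (or, more elementarily, the parametrized Morse lemma / a Hadamard-type factorization) for a nonnegative function whose Hessian has constant rank $r$: after a smooth change of coordinates near $\pi(x)$, one can write $f(\vec y)=\sum_{j=1}^{r}\varepsilon_j (y^j)^2$ with the understanding that $f$ depends only on the first $r$ coordinates and vanishes identically in the remaining $N-r$ directions. Since $f\geq 0$, all signs $\varepsilon_j$ must be $+1$, so $f(\vec y)=\sum_{j=1}^{r}(y^j)^2$ in these coordinates. (To justify the normal form: the kernel of $\partial^2 f$ has constant dimension $N-r$, so it defines a smooth distribution; along the leaves $f$ is constant to second order, and combined with $f\geq 0$ and the constant-rank condition one shows $f$ is genuinely a function of the $r$ transverse coordinates, to which the usual Morse lemma with parameters applies. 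This is the step I expect to be the main obstacle — not because it is deep, but because writing the normal form carefully for a degenerate-but-constant-rank nonnegative function requires some care; the nonnegativity is what rescues us, forcing a positive-definite transverse Hessian.)

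Once $f(\vec y)=\sum_{j=1}^r (y^j)^2$ in suitable coordinates on $U$, both sets are immediate to identify: $\{\vec y\in U\colon f=0\}=\{y^1=\cdots=y^r=0\}$, and $\rd f = 2\sum_{j=1}^r y^j\,\rd y^j$, so $\{\vec y\in U\colon \rd f=0\}=\{y^1=\cdots=y^r=0\}$ as well. Hence the two sets coincide, and in these coordinates the common set is manifestly the coordinate submanifold $\{y^1=\cdots=y^r=0\}$, of dimension $N-r$; transporting back by the (smooth, invertible) change of coordinates shows $\{\vec q\in U\colon \Im S=0\}$ is a smooth submanifold of $U$. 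This completes the argument; the inclusion $\{\Im S=0\}\subseteq\{\rd\Im S=0\}$ could alternatively be noted first from positivity, with the normal form then used only for the reverse inclusion and the manifold claim, but presenting the full normal form at once handles everything uniformly.
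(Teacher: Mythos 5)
Your easy inclusion ($\Im S\ge 0$ forces $\rd\,\Im S=0$ on the zero set) matches the paper, and the normal form $f=\sum_{j=1}^{r}(y^j)^2$ is indeed true under the hypotheses and would finish the proof. The gap is in the mechanism you propose for obtaining it. You assert that the kernel of $\partial^2 f$ defines a distribution along whose leaves $f$ is constant; this is false. Take $f(x,y)=x^2/(1+y)$ near the origin: $f\ge 0$, $\partial^2 f$ has constant rank $1$, and $f=u^2$ with $u=x/\sqrt{1+y}$, so the lemma's conclusion certainly holds. But at a point with $x\neq 0$ the kernel of $\partial^2 f$ is spanned by $(x,\,1+y)$, whose integral curves are $\{x/(1+y)=c\}$, and along such a curve $f=c^2(1+y)$ is not constant. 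The function is constant to second order along the kernel only at critical points; away from the critical locus the kernel distribution has nothing to do with the fibration appearing in the eventual normal form. The clean route to the Morse--Bott normal form here is to first establish that the critical locus $C=\{\rd\,\Im S=0\}$ is a manifold of dimension $N-r$ on which $\Im S\equiv 0$ (then $\ker\partial^2\Im S=TC$ along $C$ by a dimension count, and the transverse Hessian is positive definite because $\Im S\ge 0$); in other words, the normal form is downstream of the statement you are trying to prove, not a tool for proving it.

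The paper's proof is shorter and avoids normal forms entirely: apply the constant rank theorem to the map $\vec q\mapsto \rd\,\Im S$, whose derivative $\partial^2\Im S$ has constant rank by regularity, to conclude that $B_U=\{\vec q\in U\colon \rd\,\Im S=0\}$ is a manifold; shrink $U$ so that $B_U$ is connected; since $\rd\,\Im S$ vanishes on $B_U$, the function $\Im S$ is constant there and equals $0$ at $\pi(x)$, hence $B_U\subset\{\Im S=0\}$. Combined with the easy inclusion from positivity, this gives both the equality of the two sets and the manifold claim. I recommend replacing the normal-form argument with this one: it uses only the constant rank theorem and a connectedness observation.
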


\begin{proof}
As $\Im S\geq 0$, the set $\{\vec{q}\in U\colon\Im S=0\}$ consists of critical points of $\Im S$ thus
\begin{equation}
\{\vec{q}\in \blue{X}\colon \Im S=0\}\subset \{\vec{q}\in \blue{X}\colon \rd \Im S=0\}\,.
\end{equation}
From regularity, there exists an open neighborhood $U'$ of $\pi(x)$ such that $\partial^2\Im S$ has constant rank. By the constant rank theorem, this means that 
\begin{equation}
B_{U'}=\{\vec{q}\in U'\colon \rd \Im S=0\}
\end{equation}
is a manifold. Taking a smaller neighborhood $U$ of $\pi(x)$, we can assume that $B_U$ is connected. On $B_U$ manifold, $\rd \Im S=0$ thus $\Im S$ is constant. However, $\Im S|_{\pi(x)}=0$ so 
\begin{equation}
\vec{q}\in B_U\Longrightarrow\left.\Im S\right|_{\vec{q}}=0\,,
\end{equation}
thus $B_U\subset \{\vec{q}\in U:\Im S=0\}$.
\end{proof}

We can state our main tool in its simplest version:

\begin{prop}\label{prop:clean-0}
Let $S_\pm$ be actions on $\blue{X\subset \R^N}$ and denote $S_{tot}=S_--\overline{S_+}$ \blue{(defined on $X_+\cap X_-$)}. Consider a point $x\in {\mathcal L}^{\ro}_{S_+}\cap {\mathcal L}^{\ro}_{S_-}$. Then
\begin{equation}
\det {\bf H}(S_{tot})_{\pi(x)}\not=0\Longleftrightarrow T_x{\mathcal L}^{\ro}_{S_+}\cap T_x{\mathcal L}^{\ro}_{S_-}=\{0\}.
\end{equation}
where $T{\mathcal L}^{\ro}_{S_\pm}$ are tangent spaces of ${\mathcal L}^{\ro}_{S_\pm}$ respectively as submanifolds of $T^\ast\R^N$.
\end{prop}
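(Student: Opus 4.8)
The plan is to reduce everything to a linear-algebra statement about the Hessian of $S_{tot}$ by exploiting the structure exposed in Lemma~\ref{lm:easy-S-0}. First I would use Lemma~\ref{lm:easy-S-0} to choose a common open neighborhood $U$ of $\pi(x)$ on which, for each sign, $\{\Im S_\pm=0\}=\{\rd\Im S_\pm=0\}$ is a manifold, call it $B_\pm$. Since $\pi$ restricts to a diffeomorphism from ${\mathcal L}^{\ro}_{S_\pm}$ onto $B_\pm$ (the momentum coordinates are determined by $\form p/2\pi=\partial\Re S_\pm$), I can transport the whole question down to $\R^N$: the graph map $\iota_\pm\colon \vec q\mapsto (2\pi\,\partial\Re S_\pm(\vec q),\vec q)$ identifies $T_x{\mathcal L}^{\ro}_{S_\pm}$ with the image of $T_{\pi(x)}B_\pm$ under $D\iota_\pm$. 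Because $D\iota_\pm$ is injective, the condition $T_x{\mathcal L}^{\ro}_{S_+}\cap T_x{\mathcal L}^{\ro}_{S_-}=\{0\}$ is equivalent to: there is no nonzero $v\in T_{\pi(x)}B_+\cap T_{\pi(x)}B_-$ with $D\iota_+(v)=D\iota_-(v)$, i.e. no nonzero $v$ tangent to both $B_\pm$ with $\partial^2\Re S_+(v)=\partial^2\Re S_-(v)$ (equality of the covectors $2\pi\,\partial^2\Re S_\pm\cdot v$).

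Next I would analyze $\ker {\bf H}(S_{tot})_{\pi(x)}$ directly. Writing $S_{tot}=S_--\overline{S_+}$ and using that at $\pi(x)$ both $\Im S_\pm=0$ and (by Lemma~\ref{lm:easy-S-0}) $\rd\Im S_\pm=0$, one has $\partial^2 S_{tot}=\partial^2\Re S_- - \partial^2\Re S_+ + i(\partial^2\Im S_- + \partial^2\Im S_+)$. A vector $v\in\R^N$ lies in $\ker{\bf H}(S_{tot})_{\pi(x)}$ iff $(\partial^2\Re S_- -\partial^2\Re S_+)v=0$ and $(\partial^2\Im S_- +\partial^2\Im S_+)v=0$. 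For the imaginary part: since $\partial^2\Im S_\pm$ are both positive semidefinite near $\pi(x)$ (as $\Im S_\pm\geq 0$ attains its minimum $0$ there), their sum annihilates $v$ iff each does, i.e. $v\in\ker\partial^2\Im S_+\cap\ker\partial^2\Im S_-$. I would then identify $\ker\partial^2\Im S_\pm$ with $T_{\pi(x)}B_\pm$: by the constant-rank theorem argument in Lemma~\ref{lm:easy-S-0}, $B_\pm=\{\rd\Im S_\pm=0\}$ near $\pi(x)$ has tangent space exactly $\ker\partial^2\Im S_\pm$. So the imaginary-part condition says precisely $v\in T_{\pi(x)}B_+\cap T_{\pi(x)}B_-$, and the real-part condition says $\partial^2\Re S_+(v)=\partial^2\Re S_-(v)$ as covectors — which is exactly the obstruction to $v$ giving a common tangent vector of ${\mathcal L}^{\ro}_{S_\pm}$ identified above. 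Hence $\ker{\bf H}(S_{tot})_{\pi(x)}\neq\{0\}$ iff $T_x{\mathcal L}^{\ro}_{S_+}\cap T_x{\mathcal L}^{\ro}_{S_-}\neq\{0\}$, which is the claim.

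The step I expect to be the main obstacle is the clean identification of the tangent space $T_x{\mathcal L}^{\ro}_{S_\pm}$ inside $T^\ast\R^N$ and verifying it really is $D\iota_\pm(\ker\partial^2\Im S_\pm)$; this requires care because ${\mathcal L}^{r}_{S_\pm}$ is cut out by the equations $p_i=2\pi\partial\Re S_\pm$ together with $\Im S_\pm=0$, and one must check that near a regular point these equations are clean (the differentials of $\Im S_\pm$ and $\rd\Im S_\pm$ cut out the same tangent directions, via Lemma~\ref{lm:easy-S-0}) so that the tangent space is the expected image and not something larger. Once the tangent space is pinned down, the positive-semidefiniteness trick ($A,B\succeq 0$, $v^TAv+v^TBv=0 \Rightarrow Av=Bv=0$) handles the imaginary part and the remainder is bookkeeping about how $\overline{S_+}$ contributes a sign to the real part but not to the (positive) imaginary part. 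I would also remark that the vanishing $\rd\Im S_\pm|_{\pi(x)}=0$ is what makes the mixed second-derivative cross terms in $\partial^2 S_{tot}$ irrelevant, so that the Hessian splits cleanly into its real and imaginary Hessians as written.
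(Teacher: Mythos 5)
Your proposal is correct and follows essentially the same route as the paper's proof: split $\partial^2 S_{tot}$ into its real part and the two positive-semidefinite imaginary Hessians, use the positivity trick (the paper's Lemma \ref{lm:easy-S-1}) to reduce the kernel condition to $\partial^2\Im S_\pm\,v=0$ together with $(\partial^2\Re S_--\partial^2\Re S_+)v=0$, and use regularity via Lemma \ref{lm:easy-S-0} to identify these with the conditions for a common tangent vector of ${\mathcal L}^{\ro}_{S_+}$ and ${\mathcal L}^{\ro}_{S_-}$. The only cosmetic difference is that you transport everything to $\R^N$ via the graph maps $\iota_\pm$, whereas the paper works directly in $T^\ast\R^N$ and eliminates the momentum component $\form{w}$.
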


For brevity of exposition, we first prove a simple lemma.

\begin{lemma}\label{lm:easy-S-1}
Let $R, M_\pm$ be three real symmetric $n\times n$ matrices such that $M_\pm\geq 0$, namely,
\begin{equation}
\forall v\in \R^n,\quad v^T M_\pm v\geq 0\,.
\end{equation}
Then, the following conditions are equivalent for $v^\C\in \C^n$:
\begin{enumerate}
\item \label{it:1} $(R+iM_++iM_-)v^\C=0$.
\item \label{it:2}$Rv^\C=0$ and $M_\pm v^\C=0$.
\end{enumerate}
\end{lemma}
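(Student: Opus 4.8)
The plan is to reduce the complex-linear equation to a pair of real conditions by exploiting the positive-semidefiniteness of $M_\pm$. The implication \ref{it:2} $\Rightarrow$ \ref{it:1} is immediate, so the content is \ref{it:1} $\Rightarrow$ \ref{it:2}.

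First I would write $v^\C = a + ib$ with $a, b \in \R^n$, and separate $(R + iM_+ + iM_-)v^\C = 0$ into real and imaginary parts. Setting $M = M_+ + M_-$ (still symmetric and $\geq 0$), this gives $Ra - Mb = 0$ and $Rb + Ma = 0$. The trick is to pair these with $a$ and $b$ appropriately: compute $a^T(Rb + Ma) + b^T(-(Ra - Mb))$... actually the cleaner route is to contract the original equation with $\overline{v^\C}^T$ on the left. That is, consider $\overline{v^\C}^T (R + iM)v^\C = 0$. Since $R$ and $M$ are real symmetric, $\overline{v^\C}^T R v^\C$ and $\overline{v^\C}^T M v^\C$ are both real. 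Hence taking the imaginary part of the scalar equation $\overline{v^\C}^T(R+iM)v^\C = 0$ yields $\overline{v^\C}^T M v^\C = 0$.

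Now I would use positivity: $\overline{v^\C}^T M v^\C = a^T M a + b^T M b$ because the cross terms $\pm i(a^T M b - b^T M a)$ vanish by symmetry of $M$. Since $M_+, M_- \geq 0$ we get $a^T M_\pm a + b^T M_\pm b = 0$ with each of the four terms non-negative, forcing $M_\pm^{1/2} a = M_\pm^{1/2} b = 0$ for each sign, hence $M_\pm a = M_\pm b = 0$, i.e. $M_\pm v^\C = 0$. Plugging $M_\pm v^\C = 0$ back into \ref{it:1} leaves $R v^\C = 0$, which is exactly \ref{it:2}.

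The argument is essentially routine; the only thing to be careful about is the bookkeeping with the semidefinite forms — specifically noting that a sum of non-negative terms vanishing forces each to vanish, and that $w^T M_\pm w = 0$ together with $M_\pm \geq 0$ implies $M_\pm w = 0$ (via the Cauchy–Schwarz inequality for the semi-inner product $w_1^T M_\pm w_2$, or by diagonalizing $M_\pm$). There is no real obstacle here; this lemma is the elementary linear-algebra core that will later be applied with $R$ playing the role of the real part of the total Hessian and $M_\pm$ the imaginary parts coming from $S_\pm$ in the proof of Proposition~\ref{prop:clean-0}.
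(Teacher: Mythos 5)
Your proposal is correct and follows essentially the same route as the paper: contract the equation with $\overline{v^\C}^T$, take the imaginary part to isolate $\overline{v^\C}^T M_\pm v^\C = 0$, then use positive-semidefiniteness to conclude $M_\pm v^\C = 0$ and hence $Rv^\C = 0$. You supply slightly more detail on why a vanishing semidefinite quadratic form forces $M_\pm v^\C = 0$, which the paper leaves implicit, but the argument is the same.
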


\begin{proof}
The only non-trivial direction is $\ref{it:1}\Longrightarrow \ref{it:2}$. Suppose, $(R+iM_++iM_-)v^\C=0$ then as, $R$, $M_\pm$ are Hermitian,
\begin{equation}
\Im \left(\overline{v^\C}^T(R+iM_++iM_-)v^\C\right)=\overline{v^\C}^TM_+v^\C +\overline{v^\C}^TM_-v^\C=0\,.
\end{equation}
But from positivity, $M_\pm v^\C=0$ and, as a consequence, $Rv^\C=0$.
\end{proof}

\begin{proof}[Proof of Proposition \ref{prop:clean-0}]
Suppose that the Hessian $\partial^2S_{tot}$ is degenerate, then there exists a complex vector $\vec{u}=\sum_i u^i\frac{\partial}{\partial q^i}$ such that $\partial^2S_{tot}\vec{u}=0$. It can be written in the form
\begin{equation}
(\partial^2 \Re S_{tot}+i\partial^2 \Im S_++i\partial^2 \Im S_-)\vec{u}=0\,,
\end{equation}
where $\partial^2 \Re S_{tot}=\partial^2 \Re S_{-}-\partial^2 \Re S_{+}$.
Let us notice that $\Im S_\pm\geq 0$ and that, at the stationary point, $\Im S_\pm=0$, thus it is a local minimum for both functions ($\partial^2\Im S_\pm\geq 0$). Applying Lemma \ref{lm:easy-S-1}, we can show that the condition for $\vec{u}$ being in the kernel of Hessian is equivalent to
\begin{equation}
\partial^2 \Im S_\pm \vec{u}=0\,,\quad (\partial^2\Re S_--\partial^2\Re S_+)\vec{u}=0\,.
\label{eq:kernal_condition}
\end{equation}
Consider now a vector $v$ in $T_x(T^\ast \R^N)$ parametrized as
\begin{equation}
v=(\form{w},\vec{u}),\quad \form{w}=\sum_i w_i\frac{\partial}{\partial p_i},\quad \vec{u}=\sum_i u^i\frac{\partial}{\partial q^i}\,.
\end{equation}
We will find the conditions for $v\in T_x{\mathcal L}^{\ro}_{S_+}\cap T_x{\mathcal L}^{\ro}_{S_-}$.

From the assumption of regularity and Lemma \ref{lm:easy-S-0}, locally around the point $x$,
\begin{equation}
{\mathcal L}^{\ro}_{S_\pm}=\left\{\frac{p_i}{2\pi}=\frac{\partial \Re S_\pm}{\partial q^i},\ \frac{\partial \Im S_\pm}{\partial q^i}=0\right\}\,.
\end{equation}
The condition for $v\in T_x{\mathcal L}^{\ro}_{S_\pm}$ is given by annihilation of the equations determining these manifolds (regularity plays the role here). The conditions are given by
\begin{equation}
\f{\form{w}}{2\pi}=\partial^2\Re S_\pm \vec{u},\quad \partial^2\Im S_\pm \vec{u}=0\,.
\end{equation}
Eliminating $\form{w}$, the conditions for $\vec{u}$ reduce to the conditions for the kernel of the Hessian \eqref{eq:kernal_condition}. This shows equivalence.
\end{proof}

We now provide some basic examples.
\begin{example}
Let $S$ be an action on $\blue{X\subset \R^N}$ that is real. Then,
\begin{equation}
{\mathcal L}^r_S={\mathcal L}^{\ro}_S=\left\{(\form{p},\vec{q})\colon \frac{p_i}{2\pi}=\frac{\partial S}{\partial q^i}\blue{,\ \vec{q}\in X}\right\}\,.
\end{equation}
It is known that ${\mathcal L}^r_S$ is a Lagrangian submanifold.
\end{example}

We can now state an important example of a real Lagrangian part associated to a complex action, which is another extreme in comparison to the real action example.

\begin{example}\label{ex:coherent}
Let $S$ be an action on $\blue{X\subset \R^N}$ and $\vec{q}_\ast\in\blue{X}$. Suppose that
\begin{equation}
\{\vec{q}\in\blue{X}\colon\left.\Im S\right|_{\vec{q}}=0\}=\{\vec{q}_*\}\,,
\end{equation}
and that the Hessian of $\Im S$ at $\vec{q}_\ast$ is strictly positive. Let
\begin{equation}
\frac{p^\ast_i}{2\pi}=\left.\frac{\partial \Re S}{\partial q^i}\right|_{\vec{q}_\ast}.
\end{equation}
Then
\begin{equation}
{\mathcal L}^r_S={\mathcal L}^{\ro}_S=\{(\form{p}_\ast,\vec{q}_\ast)\}.
\end{equation}
We will call such $S$ a {\bf coherent state action} peaked at $(\form{p}_\ast,\vec{q}_\ast)$.
\end{example}

Indeed, in this case the matrix $\partial^2\Im S$ has maximal rank $N$ at $\vec{q}_\ast$. Thus, it has the same rank in some open neighborhood of $\vec{q}_\ast$.
The description of the action matches the semiclassical definition of a coherent state.

 Let us finish this subsection with a few observations. Firstly, if $S$ is an action, then $-\overline{S}$ is an action as well.  We introduce a map
\begin{equation}
I\colon T^\ast\R^N\rightarrow T^\ast\R^N,\quad I(\form{p},\vec{q})=(-\form{p},\vec{q})\,.
\end{equation}
Using $I$, we can describe the real Lagrangian part for $-\overline{S}$ as follows:
\begin{equation}
{\mathcal L}^r_{-\overline{S}}=I\left({\mathcal L}^r_S\right),\quad {\mathcal L}^{\ro}_{-\overline{S}}=I\left({\mathcal L}^{\ro}_S\right)\,.
\end{equation}
The second observation concerns the sum of actions. Let $S(\vec{q}_+,\vec{q}_-)=S_+(\vec{q}_+)+S_-(\vec{q}_-)$. Using identification $T^\ast \R^{2N}=T^\ast \R^{N}\times T^\ast \R^{N}$, we can write
\begin{equation}
{\mathcal L}^r_S={\mathcal L}^r_{S_+}\times {\mathcal L}^r_{S_-},\quad {\mathcal L}^{\ro}_S={\mathcal L}^{\ro}_{S_+}\times {\mathcal L}^{\ro}_{S_-}\,.
\end{equation}
We will use this property extensively.

\subsection{Integral kernels}
\label{sec:integral kernel}

In the semiclassical limit, the states are described by real Lagrangian parts. In the situation considered in this paper, this is achieved through Proposition \ref{prop:clean-0}. The semiclassical description of operators is expected to be in terms of canonical relations  \cite{guillemin2013semi, hormander-IV}. In particular, unitary operators should be described by symplectic transformations. We will now implement this general rule in a specific situation. Under certain conditions on the family of unitary operators $U_k$, we will associate with it a symplectic transformation $\chi_U\colon T^\ast \R^N\rightarrow T^\ast \R^N$. 
We can write $\langle \psi_k^+,U_k\psi_k^-\rangle_{\R^N}$ for any two semiclassical states $\psi^\pm_k$ as an oscillatory integral to which the stationary phase analysis can be applied. 
The symplectic transformation $\chi_U$ will appear in Proposition \ref{lm:U-asym} which describes stationary points of the action and conditions for non-degeneracy of the Hessian.

Let us start by considering a family of operators $U_k\colon L^2(\R^N)\rightarrow L^2(\R^N)$ labeled by an integer $k$. Suppose that their integral kernels are given by
\begin{equation}\label{eq:U_k}
U_k(\vec{q}_+,\vec{q}_-)=\int_{\R^{N_o}}\rd^{N_o}\vec{q}_o\ A(k,\vec{q}_+,\vec{q}_-,\vec{q}_o)e^{ikS_U(\vec{q}_+,\vec{q}_-,\vec{q}_o)}\,.
\end{equation}
For two asymptotic states $\psi^\pm(\vec{q}^\pm)=A_\pm e^{ikS_\pm}$ on $\R^{N}$, we are interested in the asymptotic expansion of $\langle \psi^+,U_k\psi^-\rangle_{\R^N}$.
This can be written in the form of an oscillatory integral: 
\begin{equation}
\langle \psi^+,U_k\psi^-\rangle_{\R^N}=\int_{\R^{2N+N_o}}\prod_{i=1}^{N}\rd q_+^i\prod_{i=1}^{N}\rd q_-^i\prod_{i=1}^{N_o}\rd q_o^i\ 
\overline{A_+(k,\vec{q}_+)}A_-(k,\vec{q}_-)A(k,\vec{q}_+,\vec{q}_-,\vec{q}_o)e^{ik S_{tot}}\,.
\end{equation}
\blue{Our goal is to study stationary points of this integral, restricting to the region where the integrated function is non-singular. We assume now that $S_\pm$ are actions on $X_\pm\subset \R^N$. The relevant action for the analysis} is $S_{tot}\colon \blue{X_+\times X_-}\times \R^{N_o}\rightarrow \C$,
\begin{equation}\label{S-tot-UU}
S_{tot}(\vec{q}_+,\vec{q}_-,\vec{q}_o)=-\overline{S}_+(\vec{q}_+)+S_-(\vec{q}_-)+S_U(\vec{q}_+,\vec{q}_-,\vec{q}_o)\,.
\end{equation}
We will now describe the stationary points for this action and the conditions for non-degeneracy of the Hessian under some assumptions about $S_U$. 

We introduce a submanifold of $T^\ast\R^{2N+N_o}$: 
\begin{equation}
    P_o:=\{(\form{p}_+,\vec{q}_+,\form{p}_-,\vec{q}_-,\form{p}_o,\vec{q}_o)\in T^\ast \R^{N}\times T^\ast \R^{N}\times T^\ast \R^{N_o}\colon \form{p}_o=0\}.
\end{equation}

\begin{definition}
Let $S\colon \R^{2N+N_o}\rightarrow \R$ be a real action. We call it a {\bf generating function of symplectic transformation} if there exists a diffeomorphism $\chi\colon T^\ast\R^{N}\rightarrow T^\ast\R^{N}$ and a map $ \vec{\xi}\colon T^\ast\R^{N}\rightarrow \R^{N_o}$ such that 
\begin{enumerate}
    \item The subspace $M_{S}={\mathcal L}^r_S\cap P_o\subset T^\ast \R^{N}\times T^\ast \R^{N}\times T^\ast \R^{N_o}$ can be expressed by
    \begin{equation}
 M_{S}=\left\{(\form{p}_+,\vec{q}_+,\form{p}_-,\vec{q}_-,0,\vec{q}_o)\colon (\form{p}_+,\vec{q}_+)=\chi\left(I(\form{p}_-,\vec{q}_-)\right),\ \vec{q}_o=\xi(\form{p}_-,\vec{q}_-)\right\}\,.
 \label{eq:M_S}
    \end{equation}
    \item $M_S$ is a clean intersection of $\cL_S^r$ and $P_0$. That is, for every point $x\in  M_{S}$,
    \begin{equation}
        T_x M_{S}=T_x{\mathcal L}^r_S\cap T_xP_o\,.
    \end{equation}
\end{enumerate}
We call $\chi$ a symplectic transformation generated by $S$.
\end{definition}

\blue{The variables $\vec{q}_o$ play an auxiliary role, and they will be called the auxiliary variables. The generating function depends on the initial variables $\vec{q}_-$, the final variables $\vec{q}_+$ and the auxiliary variables $\vec{q}_o$.}

 Such actions satisfy the {\it transverse generating function} condition of \cite{guillemin2013semi} (Chapter 5.1 and 5.2). In particular, $\chi$ is indeed a symplectic transformation, which justifies our notation. 

Additionally, if two actions $S_\pm\colon \R^{2N+N_o^\pm}\rightarrow \R$ are generating functions for symplectic transformations $\chi_\pm$ respectively, then an action
\begin{equation}
    S(\vec{q}_+,\vec{q}_-,\vec{Q}_0)=S_+(\vec{q}_+,\vec{q},\vec{q}_o^+)+S_-(\vec{q},\vec{q}_-,\vec{q}_o^-),\quad \vec{Q}_0=(\vec{q},\vec{q}_o^+,\vec{q}_o^-)
\end{equation}
is a generating function for \blue{the composition symplectic transformation} $\chi_+\circ \chi_-$. \blue{Here $\vec{q}_o^\pm$ are the auxiliary variables for actions $S_\pm$ respectively and $\vec{Q}_0$ are the auxiliary variables for  action $S$.}
Thus, the map that associates $\chi$ with an operator family $U_k$ is a morphism (it preserves the composition). For the theory of generating functions, we refer the reader to  Chapter 5 of \cite{guillemin2013semi}.

\begin{prop}\label{lm:U-asym}
 Let $S_U$ be a generating function of a symplectic transformation $\chi_U$, then the following holds:
\begin{enumerate}
\item Stationary points for an action $S_{tot}$ \eqref{S-tot-UU} are in bijection with
\begin{equation}
{\mathcal L}^r_{S_+}\cap \chi_U({\mathcal L}^r_{S_-})\,.
\end{equation}
\item For a point corresponding to $x\in {\mathcal L}^{\ro}_{S_+}\cap \chi_U({\mathcal L}^{\ro}_{S_-})$, the Hessian is non-degenerate if and only if
\begin{equation}
T_x{\mathcal L}^{\ro}_{S_+}\cap T_x\left(\chi_U({\mathcal L}^{\ro}_{S_-})\right)=\{0\}.
\end{equation}
\end{enumerate}
\end{prop}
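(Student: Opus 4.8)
The plan is to reduce Proposition \ref{lm:U-asym} to the already-established Proposition \ref{prop:clean-0} by the standard trick of treating the symplectic transformation $\chi_U$ as arising from a composition of relations, and by carefully tracking how the intermediate variables $\vec{q}_o$ and the momentum constraint $\form{p}_o=0$ interact with the real Lagrangian parts. First I would write the total action \eqref{S-tot-UU} as a sum: $S_{tot}=S'_++S_-+S_U$ where $S'_+:=-\overline{S}_+$ is itself an action (as noted after Example \ref{ex:coherent}), living on the $\vec{q}_+$ factor, $S_-$ lives on the $\vec{q}_-$ factor, and $S_U$ lives on all of $\R^{2N+N_o}$. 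By the product rule for real Lagrangian parts of sums of actions on disjoint variables, ${\mathcal L}^r_{S'_++S_-}={\mathcal L}^r_{S'_+}\times{\mathcal L}^r_{S_-}=I({\mathcal L}^r_{S_+})\times {\mathcal L}^r_{S_-}$ inside $T^\ast\R^N\times T^\ast\R^N$. The point of the generating function hypothesis on $S_U$ is that the real stationary points of $S_{tot}$ are governed by the intersection of this product Lagrangian (pulled back to $\R^{2N+N_o}$, i.e. crossed with the zero section in the $\vec{q}_o$ directions) with $M_{S_U}={\mathcal L}^r_{S_U}\cap P_o$.

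Next I would make the combinatorics precise. A stationary point of $S_{tot}$ on $\R^{2N+N_o}$ corresponds, via Proposition \ref{prop:stat-0} applied to the pair $(S_+\,,\,S_-+S_U)$ — or more symmetrically by a direct computation — to a point of ${\mathcal L}^r_{S'_++S_-+S_U}$ lying over the relevant configuration; the vanishing of $\partial S_{tot}/\partial q_o^i$ is exactly the condition $\form{p}_o=0$ coming from the $S'_++S_-$ pieces having no $\vec{q}_o$-dependence, so the stationary set sees precisely ${\mathcal L}^r_{S_U}\cap P_o=M_{S_U}$ in the $\vec{q}_o$-and-momenta slots, together with the matching of the $(\form{p}_\pm,\vec{q}_\pm)$ data. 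Using the explicit form of $M_{S_U}$ from the definition, $(\form{p}_+,\vec{q}_+)=\chi_U(I(\form{p}_-,\vec{q}_-))$, and remembering the $I$ coming from $S'_+=-\overline{S}_+$ (so ${\mathcal L}^r_{S'_+}=I({\mathcal L}^r_{S_+})$ and the leftover $I$'s cancel appropriately), one reads off that the stationary points are in bijection with pairs $(\form{p}_-,\vec{q}_-)\in{\mathcal L}^r_{S_-}$ with $\chi_U(\form{p}_-,\vec{q}_-)\in{\mathcal L}^r_{S_+}$ — i.e. with ${\mathcal L}^r_{S_+}\cap\chi_U({\mathcal L}^r_{S_-})$. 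This proves part (1); the bookkeeping of the two applications of $I$ and the direction of $\chi_U$ is the part most prone to sign/convention error, so I would do it slowly.

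For part (2), the cleanest route is to apply Proposition \ref{prop:clean-0} directly on $\R^{2N+N_o}$ to the pair of actions $\widetilde S_+:=S_+(\vec{q}_+)$ (extended trivially, as a function on $\R^{2N+N_o}$ depending only on $\vec{q}_+$) and $\widetilde S_-:=S_-(\vec{q}_-)+S_U(\vec{q}_+,\vec{q}_-,\vec{q}_o)$, so that $S_{tot}=\widetilde S_--\overline{\widetilde S_+}$. Proposition \ref{prop:clean-0} then says $\det{\bf H}(S_{tot})\neq 0$ iff $T_{\tilde x}{\mathcal L}^{\ro}_{\widetilde S_+}\cap T_{\tilde x}{\mathcal L}^{\ro}_{\widetilde S_-}=\{0\}$ at the lifted point $\tilde x$. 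It remains to show this tangent-space condition on $\R^{2N+N_o}$ is equivalent to the claimed condition $T_x{\mathcal L}^{\ro}_{S_+}\cap T_x(\chi_U({\mathcal L}^{\ro}_{S_-}))=\{0\}$ on $T^\ast\R^N$. Here ${\mathcal L}^{\ro}_{\widetilde S_+}={\mathcal L}^{\ro}_{S_+}\times T^\ast\R^N\times (\text{zero section in }\vec{q}_o)$, while ${\mathcal L}^{\ro}_{\widetilde S_-}$ is, by the clean-intersection hypothesis on $S_U$ (its cleanness guarantees the rank conditions behave well and the graph description of $M_{S_U}$ lifts to tangent spaces), the graph of $\chi_U$ composed with ${\mathcal L}^{\ro}_{S_-}$, thickened by the $\vec{q}_o=\xi(\cdot)$ relation and cut by $\form{p}_o=0$. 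The crucial observation is that the $\vec{q}_o$-directions are "free" on the $\widetilde S_+$ side and "pinned" (to $\xi$ and to $\form{p}_o=0$) on the $\widetilde S_-$ side in a transverse-in-those-coordinates way, so a vector in the intersection must have vanishing $\vec{q}_o$-components and vanishing $\form{p}_o$-components, whence it descends to a genuine element of $T_x{\mathcal L}^{\ro}_{S_+}\cap T_x(\chi_U({\mathcal L}^{\ro}_{S_-}))$, and conversely any such element lifts back uniquely. I expect the main obstacle to be exactly this last equivalence: verifying that the auxiliary variables contribute nothing to the intersection of tangent spaces requires using cleanness of $M_{S_U}$ to pass from the set-level graph description to the tangent-level graph description, and then a small linear-algebra argument that $T_{\tilde x}{\mathcal L}^{\ro}_{\widetilde S_-}$ meets the $\vec{q}_o$-and-$\form{p}_o$ coordinate subspace trivially in the directions that matter. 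Everything else is the disjoint-variable product rule for real Lagrangian parts plus the two observations recorded at the end of Section \ref{sec:stationary_phase}.
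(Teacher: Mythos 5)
Your overall strategy coincides with the paper's: split $S_{tot}$ into two actions on $\R^{2N+N_o}$, invoke Propositions \ref{prop:stat-0} and \ref{prop:clean-0} there, and use the generating-function hypothesis on $S_U$ to eliminate the auxiliary variables in favour of the graph of $\chi_U$. The gap is in the splitting you choose and in the resulting identification of the real Lagrangian parts. The paper takes $\tilde S_+=S_+(\vec{q}_+)-\overline{S}_-(\vec{q}_-)$ and $\tilde S_-=S_U$, so that $\tilde S_-$ is exactly the action the generating-function hypothesis speaks about, while $\tilde S_+$ is a sum over disjoint variables and the product rule gives ${\mathcal L}^r_{\tilde S_+}={\mathcal L}^r_{S_+}\times I({\mathcal L}^r_{S_-})\times\{\form{p}_o=0\}$; the cut by $P_o$ then arises automatically from the $\vec{q}_o$-independence of $\tilde S_+$. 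Your splitting $\widetilde S_+=S_+(\vec{q}_+)$, $\widetilde S_-=S_-(\vec{q}_-)+S_U$ is workable in principle, but as executed two things go wrong. First, the hypothesis on $S_U$ characterizes ${\mathcal L}^r_{S_U}\cap P_o$, not ${\mathcal L}^r_{S_-+S_U}$: the latter mixes the $S_-$ and $S_U$ momenta in the $\vec{q}_-$ slot and is \emph{not} ``cut by $\form{p}_o=0$'' (on it $p_{o,i}/2\pi=\partial S_U/\partial q_o^i$, generically nonzero — that constraint comes from the other factor). Second, and more seriously, ${\mathcal L}^{\ro}_{\widetilde S_+}$ is not ${\mathcal L}^{\ro}_{S_+}\times T^\ast\R^N\times\{\form{p}_o=0\}$: since $\widetilde S_+$ is independent of $\vec{q}_-$, the middle factor is the zero section $\{\form{p}_-=0\}$. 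This is not cosmetic. Under your splitting, the equation $\form{p}_-=0$, combined with $p_{-,i}/2\pi=\partial_{q_-^i}\Re S_-+\partial_{q_-^i}S_U$ from ${\mathcal L}^r_{\widetilde S_-}$ and the generating-function relation $\partial_{q_-^i}S_U=-p^-_i/2\pi$, is the \emph{only} mechanism that produces the condition $(\form{p}^-,\vec{q}_-)\in{\mathcal L}^r_{S_-}$; with $T^\ast\R^N$ in the middle the factor ${\mathcal L}^r_{S_-}$ never enters and neither part of the proposition can be concluded.

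Two smaller points. In part (1) you intersect $I({\mathcal L}^r_{S_+})\times{\mathcal L}^r_{S_-}$ (i.e.\ ${\mathcal L}^r_{-\overline{S}_++S_-}$) with $M_{S_U}$, but Proposition \ref{prop:stat-0} pairs ${\mathcal L}^r_{S_a}$ — not ${\mathcal L}^r_{-\overline{S_a}}$ — with ${\mathcal L}^r_{S_b}$; as written your intersection is the $I$-image of the correct one and evaluates to ${\mathcal L}^r_{S_+}\cap I\chi_U I({\mathcal L}^r_{S_-})$ rather than ${\mathcal L}^r_{S_+}\cap\chi_U({\mathcal L}^r_{S_-})$. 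You flag this as the error-prone step, but it does need to be fixed, and the paper's choice ${\mathcal L}^r_{\tilde S_+}={\mathcal L}^r_{S_+}\times I({\mathcal L}^r_{S_-})\times\{\form{p}_o=0\}$ is the one that makes the two $I$'s cancel against the $I$ in the definition of $M_{S_U}$. Finally, in the tangent-space step the intersection vector need not have vanishing $\vec{q}_o$-component: it satisfies $v_o=D(\tilde\xi_U)(v_-)$ and is \emph{eliminated}, not annihilated; only the $\form{p}_o$-component vanishes. The descent/lift idea is right, but the correct statement is the clean-intersection identity $T_xM_{S_U}=T_x{\mathcal L}^r_{S_U}\cap T_xP_o$ applied exactly as in the paper.
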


\begin{proof}
We apply Propositions \ref{prop:stat-0} and \ref{prop:clean-0} to the action \blue{$\tilde{S}_+$ on $X_+\times X_-\times \R^{N_o}$ and the action $\tilde{S}_-$ on $\R^{N}\times\R^{N}\times \R^{N_o}$} defined by
\begin{align}
&\tilde{S}_+(\vec{q}_+,\vec{q}_-,\vec{q}_o)=S_+(\vec{q}_+)-\overline{S}_-(\vec{q}_-)\,,\\
&\tilde{S}_-(\vec{q}_+,\vec{q}_-,\vec{q}_o)=S_U(\vec{q}_+,\vec{q}_-,\vec{q}_o)\,.
\end{align}
The stationary points are in one-to-one correspondence with the intersection
\begin{equation}
{\mathcal L}^r_{\tilde{S}_+}\cap {\mathcal L}^r_{\tilde{S}_-}\subset T^\ast\R^N\times T^\ast\R^N\times T^\ast\R^{N_o}\,.
\label{eq:intersection_of_Lagra}
\end{equation}
Let us notice that, as $\tilde{S}_+$ does not depend on $\vec{q}_o$ and it is a sum of actions depending on separate sets of variables,
\begin{equation}
{\mathcal L}^r_{\tilde{S}_+}={\mathcal L}^r_{S_+}\times I\left({\mathcal L}^r_{S_-}\right)\times \{\form{p}_o=0\}\,,
\end{equation}
where $\{\form{p}_o=0\}$ denotes the Lagrangian $\{(\vec{q}_o,\form{p}_o)\in T^\ast\R^{N_o} : \form{p}_o=0 \}$.
Then the intersection \eqref{eq:intersection_of_Lagra} can be written as 
\begin{equation}
{\mathcal L}^r_{\tilde{S}_+}\cap {\mathcal L}^r_{\tilde{S}_-}= \left({\mathcal L}^r_{S_+}\times I\left({\mathcal L}^r_{S_-}\right)\times  \{\form{p}_o=0\}\right)\cap {\mathcal L}^r_{S_U}=
\left({\mathcal L}^r_{S_+}\times I\left({\mathcal L}^r_{S_-}\right)\times T^\ast \R^{N_o}\right)\cap \left({\mathcal L}^r_{S_U}\cap P_o\right)\,.    
\end{equation}
Using the condition for $S_U$, $x=(x_+,x_-,x_o)\in {\mathcal L}^r_{\tilde{S}_+}\cap {\mathcal L}^r_{\tilde{S}_-}$ if and only if
\begin{equation}
    x_+=\chi_U(I(x_-)),\quad x_o= \tilde{\xi}_U(x_-),\quad x_+\in {\mathcal L}^r_{S_+},\quad x_-\in I\left({\mathcal L}^r_{S_-}\right)\,,
\end{equation}
 where the map $\tilde{\xi}_U\colon T^\ast \R^N\rightarrow T^\ast \R^{N_o}$ is related to $\vec{\xi}_U$ introduced in the definition of a generating function by
\begin{equation}
    \tilde{\xi}_U(x)=(\form{0},\vec{\xi}_U(x))\,.
\end{equation}
Eliminating $x_-=I(\chi_U^{-1}(x_+))$ and $x_o= \tilde{\xi}_U(x_-)$ (notice that $I^2=\operatorname{id}$), we get
\begin{equation}
    {\mathcal L}^r_{\tilde{S}_+}\cap {\mathcal L}^r_{\tilde{S}_-}=\left\{(x_+,x_-,x_o)\colon x_-=I(\chi_U^{-1}(x_+)),\ x_o= \tilde{\xi}_U(x_-),\ x_+\in {\mathcal L}^r_{S_+}\cap \chi_U({\mathcal L}^r_{S_-})\right\}\,.
\end{equation}
we, therefore, obtain the first statement of the proposition.

For the Hessian, let us consider a vector $(v_+,v_-,v_o)$ in the intersection of the tangent spaces. We notice that (condition to be in the tangent space of ${\mathcal L}^{\ro}_{\tilde{S}_+}$)
\begin{equation}\label{eq:v-in-L}
v_+\in T_{x_+}{\mathcal L}^{\ro}_{S_+},\ v_-\in T_{x_-}\left(I{\mathcal L}^{\ro}_{S_-}\right) ,\ v_o\in T_{x_o}\{\form{p}_o=0\}\,.
\end{equation}
In particular, 
\begin{equation}
    (v_+,v_-,v_o)\in T_xP_o,
\end{equation}
and from the two properties of generating functions of the canonical relations
\begin{equation}\label{eq:S_U-v}
  (v_+,v_-,v_o)\in T_xM_{S_U}=T_xP_o\cap T_x{\mathcal L}^{\ro}_{S_U}\Longleftrightarrow v_+= D_{x_-}(\chi_UI)(v_-),\ v_o=D_{x_o} (\tilde{\xi}_U)(v_-)\,,
\end{equation}
where the map  $D_x(f):T_xX\rightarrow T_{f(x)}Y$ is the derivative of a map $f:X\rightarrow Y$.
We will now look for the intersection of both tangent spaces. The conditions are \eqref{eq:v-in-L} together with \eqref{eq:S_U-v}. After eliminating $v_-$ and $v_o$ by the relations
\begin{equation}
    v_-= \left[D_{x_-}(\chi_UI)\right]^{-1}(v_+)\,,\quad v_o=D_{x_o}(\tilde{\xi}_U)(v_-)\,.
\end{equation}
The conditions for $v_+$ reduce to
\begin{equation}
    v_+\in T_{x+}{\mathcal L}^{\ro}_{S_+},\quad 
     v_+\in D_{x_-}(\chi_UI)\left( T_{x_-}I{\mathcal L}^{\ro}_{S_-}\right)
    =T_{x_+}\left( \chi_U{\mathcal L}^{\ro}_{S_-}\right).
\end{equation}
In the last equality, we have used the fact that both $I$ and $\chi_U$ are diffeomorphisms and $I^2=\operatorname{id}$. This finishes the second part of the proposition.
\end{proof}
This proposition generalizes Proposition \ref{prop:clean-0}.

%%%%%%%%%%%%%%%%%%
\subsection{Partial Hessians}
\label{sec:partial-Hessian}
%%%%%%%%%%%%%%%%%%

We will now describe another important fact about Hessians. Consider an action $S(\vec{q},\vec{Q})$ on $\blue{X\subset\R^{n}\times \R^N}$. Let us denote $S_{red}(\vec{q})=S(\vec{q},\vec{Q}_*)$ for a fixed vector $\vec{Q}_*\in \R^N$. Our goal is to describe the stationary points $\Stat(S_{red})$ of $S_{red}$ and its Hessian ${\bf H}(S_{red})$ in terms of the full action.

Let us introduce another action on $\blue{X\times R^N\subset\R^{n}\times \R^N\times \R^N}$,
\begin{equation}
S_o(\vec{q},\vec{Q},\form{\lambda})=S(\vec{q},\vec{Q})+\sum_{i=1}^N\lambda_i(Q^i-Q^i_*)\,.
\label{eq:S_o}
\end{equation}
The reasoning behind this action is both the method of Lagrange multipliers and the integral formula for delta functions. The relation between the stationary points of both actions can be shown directly:

\begin{lemma}\label{lm:partial-Hessian}
Let $S$ be an action on $\blue{X\subset\R^{n}\times \R^N}$  and $S_o$, $S_{red}$ as described above. Then 
\begin{enumerate}
\item $(\vec{q},\vec{Q},\form{\lambda})\in \Stat(S_o)$ if and only if
\begin{equation}
\vec{Q}=\vec{Q}_\ast,\quad \lambda_i=-\frac{\partial S_o}{\partial Q^i},\ i=1,\ldots, N,\quad \vec{q}\in \Stat(S_{red})\,.
\end{equation}
\item The Hessian ${\bf H}(S_o)$ satisfies
\begin{equation}
\det {\bf H}(S_o)=(-1)^N\det {\bf H}(S_{red})
\label{eq:detS_o}
\end{equation}
at the corresponding stationary points.
\end{enumerate}
\end{lemma}

\begin{proof}
The first part of the lemma is the Lagrange multiplier method. The second part is a direct computation.
\end{proof}

In particular, Lemma \ref{lm:partial-Hessian} allows us to extend the results of Proposition \ref{lm:U-asym} to the case when the integral kernel involves delta functions on part of the variables. Such an extension is straightforward.

%%%%%%%%%%%%%%%%%%
\subsection{Metaplectic group}
%%%%%%%%%%%%%%%%%%

We will now focus on the case when the integral kernel emerges from \blue{a} metaplectic transformation. Our goal is to give a realization of $\chi_U$ defined in Section \ref{sec:integral kernel}, leading to a corresponding version of Proposition \ref{lm:U-asym} ({\it r.f.} Proposition \ref{prop:clean-o}).

In quantum theory, the space of affine operators is very important. For $v=(\form{w},\vec{u})$ and $a\in \R$ we consider a symmetric operator,
\begin{equation}
\hat{H}=a+\sum_i w_i\hat{q}^i-u^i\hat{p}_i,\quad \hat{p}_i=-\frac{2\pi i}{k}\frac{\partial}{\partial q^i},
\end{equation}
where $\hat{q}^i$ denotes multiplication by $q^i$. In our convention,  the Planck constant is $\hbar=\f{2\pi}{k}$.

The Weyl operator $W_k(v,a):=e^{ik\hat{H}}$ is an unitary operator. Its action can be easily computed:
\begin{equation}
W_k(v,a)\phi(\vec{q})=e^{i\left(ka-\pi\sum_i w_i u^i\right)}e^{ik\sum_i w_iq^i}\phi\left(\vec{q}-2\pi\vec{u}\right)\,.
\end{equation}
The Weyl operators satisfy the relation
\begin{equation}
W_k(v,a)W_k(v',a')=W(v+v',a+a'-2\pi^2 \Omega(v,v')), \quad W_k(v,a)=e^{ika}W_k(v,0).
\end{equation}
In particular, the adjoint action has a form
\begin{equation}
W_k(v,a)W_k(v',a')W_k(v,a)^{-1}=W_k(v',a'-4\pi^2\Omega(v,v'))\,.
\end{equation}
Consider affine canonical (symplectic) transformations on $T^\ast\R^N$. They form a group $\Aff(2N,\R)$ that can be identified with
\begin{equation}
\Aff(2N,\R)=\Sp(2N,\R)\ltimes T^\ast\R^N\,,
\end{equation}
\blue{where $\Sp(2N,\R)$ is the symplectic group represented by $2N\times 2N$ matrices with real entries. }
For $(M,v)$ with $M\in \Sp(2N,\R)$ and $v=(\form{w},\vec{u})\in T^\ast\R^N$, the action on $T^\ast\R^N$ is given by
\begin{equation}
(M,v)\cdot(\form{p},\vec{q})=M(\form{p},\vec{q})+(\form{w},\vec{u})\,.
\end{equation}
A metaplectic implementer of $H=(M,v)\in \Aff(2N,\R)$ is a unitary operator $U_{(M,v),k}$ on $L^2(\R^N)$ with the special property that the adjoint action induces the expected automorphisms of the Weyl algebra:
\begin{equation}
U_{(M,v),k}W_k(v',a)U_{(M,v),k}^{-1}=W_k(Mv',a-4\pi^2\Omega(v,Mv'))\,.
\end{equation}
Such $U_{(M,v),k}$ is determined uniquely up to the phase. In particular,
\begin{equation}
U_{({\mathbb I},v),k}=W_k(v,a)\,,
\end{equation}
where $a$ is arbitrary (phase factor). Metaplectic implementers form a group denoted as $\Met_{N}$.
There is a group homomorphism $\met_k\colon\Met_N\rightarrow \Aff(2N,\R)$ defined by the property $\met_k(U)=(M,v)$, where $(M,v)$ satisfies
\begin{equation}
UW_k(v',a)U^{-1}=W_k(Mv',a-4\pi^2\Omega(v,Mv'))\,,
\end{equation}
with the kernel given by the group $U(1)$ of phases.

%%%%%%%%%%%%%%%%%%%%%%%%%%%%%%%%%%%%%%%%%%%%%%%%%%%%%%%%%%%%%%%%%%%%%%%%%%
\subsection{Metaplectic implementers}
\label{subsec:metaplectic}
%%%%%%%%%%%%%%%%%%%%%%%%%%%%%%%%%%%%%%%%%%%%%%%%%%%%%%%%%%%%%%%%%%%%%%%%%%

We will now present some nice properties of metaplectic implementers.
For a metaplectic implementer of $H\in \Aff(2N,\R)$, the integral kernel can be expressed by a Gaussian integral \blue{(see \cite{derezinski2013mathematics}, Chapter 10)}. 
\begin{equation}\label{eq:U-S-formula-0}
U_{H,k}^{a,S_H}(\vec{q}_+,\vec{q}_-)=C_ke^{ika}\int_{\R^{N_o}}d^{N_o}\vec{q}_o\ e^{ikS_H}\,,
\end{equation}
where $C_k$ is the normalization constant (uniquely determined positive constant, that is homogeneous in $k$ of some rational order), $a\in \R$ is a phase and $S_H(\vec{q}_+,\vec{q}_-,\vec{q}_o)$ is a real polynomial of degree at most two. 

Let us describe the actions for metaplectic transformations in some detail. First, we can change the variables $\vec{q}_o$  linearly, such that they are separated into two parts $\vec{q}_o'$ and $\vec{\lambda}$ and that $S_H$ depends quadratically on $\vec{q}_o'$ and linearly on $\vec{\lambda}$. One can perform a Gaussian integration of $\vec{q}_o'$. If the Hessian for the new action is non-degenerate, then it is also non-degenerate for the original action \blue{$S_H$}. Thus, we can always assume that $S_H$ is linear in $\vec{q}_o=\vec{\lambda}$.

Such a minimal version of $S_H$ can be found as follows. 
Consider an affine canonical transformation
\begin{equation}\label{eq:canonical}
\form{p}^+(\form{p}^-,\vec{q}_-),\ \vec{q}_+(\form{p}^-,\vec{q}_-)\,.
\end{equation}
If the functions $\vec{q}_+$ and $\vec{q}_-$ are independent, then they can be used as a coordinate system. In this situation, there exists a generating function for the canonical transformation.  In general, there might be some dependencies between these variables. There always exist independent affine functions $f_\alpha(\vec{q}_+,\vec{q}_-)$, $\alpha=1,\ldots, N_o$ (the set might be empty if $N_o=0$) such that
\begin{equation}
f_\alpha(\vec{q}_+(\form{p}^-,\vec{q}_-),\vec{q}_-)=0,\quad \alpha=1,\ldots, N_o\,.
\end{equation}
The canonical transformation is described by the generalized generating function $S_{H,o}(\vec{q}_+,\vec{q}_-)$ (polynomial of degree at most 2) that satisfies the identity:
\begin{equation}\label{eq:S_H_lambda}
\pm \frac{p^\pm_i}{2\pi}=\frac{\partial S_{H,o}}{\partial q_\pm^i}+\sum_{\alpha=1}^{N_o}\lambda^\alpha \frac{\partial f_\alpha}{\partial q_\pm^i},\ i=1,\ldots, N,\quad f_\alpha=0,\quad \alpha=1,\ldots, N_o\,.
\end{equation}
The action for the implementer, $S_H$ on $\R^N\times \R^N\times \R^{N_o}$, is 
\begin{equation}
S_H(\vec{q}_+,\vec{q}_-,\vec{\lambda})=S_{H,o}+\sum_{\alpha=1}^{N_o}\lambda^\alpha f_\alpha\,,
\end{equation}
where we have denoted $\vec{q}_o=\vec{\lambda}$. For later convenience, we notice that \eqref{eq:S_H_lambda} allows us to determine $\vec{\lambda}$ in terms of $\form{p}^-,\vec{q}_-$,
\begin{equation}
\vec{\lambda}=\vec{\Lambda}(\form{p}^-,\vec{q}_-)\,.
\end{equation}
 Note that $S_H$ satisfies the assumption of Section \ref{sec:integral kernel} with $\chi_U=H$,  $\vec{\xi}_U=\vec{\Lambda}$. Our discussion of reduction to minimal implementer shows that this is also true for any Gaussian action for the implementer.  The kernel of the composition of two implementers of $H_+$ and $H_-$  can be written again as a Gaussian oscillatory integral. This procedure allows us to extend the result on non-degeneracy of the Hessian to the case where $U_{H,k}$ is written as a product of basic implementing operations as in \cite{Han:2021tzw, Han:2025mkc}. In summary, applying Proposition \ref{lm:U-asym}, we obtain the following result:

\begin{prop}\label{prop:clean-o}
Let $S_\pm$ be two actions and $S_H$ be an action for the implementer of $H\in \Aff(2N,\R)$. Denote 
\begin{equation}
S_{tot}(\vec{q}_+,\vec{q}_-,\vec{q}_o)=-\overline{S_+}(\vec{q}_+)+S_-(\vec{q}_-)+S_H(\vec{q}_+,\vec{q}_-,\vec{q}_o)\,.
\end{equation}
Then,
\begin{enumerate}
\item Stationary points for an action $S_{tot}$  are in bijection with
\begin{equation}
{\mathcal L}^r_{S_+}\cap H({\mathcal L}^r_{S_-})\,.
\end{equation}
\item For a point corresponding to $x\in {\mathcal L}^{\ro}_{S_+}\cap H({\mathcal L}^{\ro}_{S_-})$, the Hessian is non-degenerate if and only if
\begin{equation}
T_x{\mathcal L}^{\ro}_{S_+}\cap T_x\left(H({\mathcal L}^{\ro}_{S_-})\right)=\{0\}\,.
\end{equation}
\end{enumerate}
\end{prop}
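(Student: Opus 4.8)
The plan is to obtain Proposition \ref{prop:clean-o} as the specialization of Proposition \ref{lm:U-asym} to the case $S_U=S_H$, $\chi_U=H$. Thus the entire task is to verify that an action $S_H$ for a metaplectic implementer of $H\in\Aff(2N,\R)$ is a generating function of the symplectic transformation $H$ in the sense of the definition of Sec.\ref{sec:integral kernel}. Once this is established, substituting into Proposition \ref{lm:U-asym} gives at once the bijection of the stationary points of $S_{tot}$ with $\mathcal{L}^r_{S_+}\cap H(\mathcal{L}^r_{S_-})$ and the criterion $T_x\mathcal{L}^{\ro}_{S_+}\cap T_x(H(\mathcal{L}^{\ro}_{S_-}))=\{0\}$ for non-degeneracy of the Hessian, which is exactly what the proposition asserts.

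First I would reduce to the minimal form of $S_H$ recalled in Sec.\ref{subsec:metaplectic}. After a linear change of the auxiliary variables, $S_H$ splits as a non-degenerate quadratic form in a block $\vec{q}_o'$ plus a remainder depending only linearly on the complementary variables $\vec{\lambda}$, and integrating out $\vec{q}_o'$ replaces $S_H$ by $S_{H,o}(\vec{q}_+,\vec{q}_-)+\sum_\alpha\lambda^\alpha f_\alpha(\vec{q}_+,\vec{q}_-)$. Since the pure quadratic part of $S_{tot}$ in $\vec{q}_o'$ equals that of $S_H$ and is non-degenerate, completing the square and using the Schur complement identity shows that at corresponding stationary points $\det{\bf H}(S_{tot})$ and $\det{\bf H}(S_{tot}')$ — where $S_{tot}'$ has $S_H$ replaced by its minimal form — differ only by the nonzero determinant of that block; hence the set of stationary points and the non-degeneracy of the Hessian are unaffected. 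The case where $S_H$ is the action for a product of basic implementers is then handled by induction on the number of factors: the kernel of a composition of two Gaussian oscillatory kernels is again a Gaussian oscillatory kernel, and by the morphism property recorded after the definition in Sec.\ref{sec:integral kernel} the corresponding generating functions compose to a generating function of $H_+\circ H_-$. So it suffices to treat a single minimal $S_H$.

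Next I would verify the two clauses of the definition for the minimal $S_H=S_{H,o}+\sum_\alpha\lambda^\alpha f_\alpha$. Since $S_H$ is real, $\mathcal{L}^r_{S_H}$ is the ordinary graph $\{p_i/2\pi=\partial S_H/\partial q^i\}$ over all of $(\vec{q}_+,\vec{q}_-,\vec{\lambda})$, and intersecting with $P_o=\{\form{p}_o=0\}$ imposes $\partial S_H/\partial\lambda^\alpha=f_\alpha=0$, cutting $(\vec{q}_+,\vec{q}_-)$ down to the common zero locus of the $f_\alpha$; the remaining equations are, after flipping the sign of $\form{p}_-$, exactly the generating-function relations \eqref{eq:S_H_lambda} for the affine canonical transformation \eqref{eq:canonical}. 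Because the $f_\alpha$ were chosen independent and \eqref{eq:S_H_lambda} describes that transformation $H$, these equations solve to $(\form{p}_+,\vec{q}_+)=H(I(\form{p}_-,\vec{q}_-))$ and $\vec{\lambda}=\vec{\Lambda}(\form{p}_-,\vec{q}_-)$, the sign flip on $\form{p}_-$ being precisely what the map $I$ absorbs; so $M_{S_H}=\mathcal{L}^r_{S_H}\cap P_o$ has the required graph form with $\chi=H$ and $\xi=\vec{\Lambda}$, and $(\form{p}_-,\vec{q}_-)$ is a global coordinate on it. Cleanness of the intersection is automatic: $S_H$ has degree at most two, so $\mathcal{L}^r_{S_H}$ is an affine subspace of $T^\ast\R^{2N+N_o}$ and $P_o$ is a linear subspace, whence $M_{S_H}$ is an affine subspace and $T_xM_{S_H}=T_x\mathcal{L}^r_{S_H}\cap T_xP_o$ at every point. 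This establishes the transverse-generating-function condition of \cite{guillemin2013semi}, and Proposition \ref{lm:U-asym} applies.

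The step I expect to be the main obstacle is the reduction: carefully tracking how integrating out the non-degenerate block $\vec{q}_o'$ and rewriting a product of implementers as a single Gaussian oscillatory integral interact with the stationary-point correspondence and the Hessian determinant — essentially iterated Schur complements together with Lemma \ref{lm:partial-Hessian} — and ensuring the associated formal oscillatory-integral manipulations are internally consistent. A secondary point requiring care is that the constraint functions $f_\alpha$ can indeed be chosen independent, so that $M_{S_H}$ is genuinely a graph over $T^\ast\R^N$ of the correct dimension; this is where the specific structure of an affine canonical transformation, and hence the metaplectic origin of $S_H$, enters, rather than any property of a generic generating function.
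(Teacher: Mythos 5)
Your proposal is correct and follows essentially the same route as the paper: the paper's own argument (given in the discussion of Sec.\ref{subsec:metaplectic} preceding the proposition) consists precisely of reducing a general Gaussian implementer action to the minimal form $S_{H,o}+\sum_\alpha\lambda^\alpha f_\alpha$, observing via \eqref{eq:S_H_lambda} that this satisfies the generating-function assumption of Sec.\ref{sec:integral kernel} with $\chi_U=H$ and $\vec{\xi}_U=\vec{\Lambda}$, handling products of implementers by the composition/morphism property, and then invoking Proposition \ref{lm:U-asym}. Your additional explicit verification of the two clauses of the definition (the graph form of $M_{S_H}$ after imposing $\form{p}_o=0$, and cleanness being automatic because everything is affine) is a useful elaboration of steps the paper leaves implicit, but it is the same proof.
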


Additionally,  Lemma \ref{lm:partial-Hessian} allows us to extend the results of Proposition \ref{prop:clean-o} to the case when the integral kernel involves delta functions on part of the variables.

%%%%%%%%%%%%%%%%%%%%%%%%%%%%%%%%%%%%%%%%%%%%%%%%
\subsection{\blue{Complex variables}}
\label{sec:complex-phase-space}
%%%%%%%%%%%%%%%%%%%%%%%%%%%%%%%%%%%%%%%%

Let us generalize some of the results above to phase spaces with complex coordinates, which will be shown to be useful in Section \ref{sec:critical_spinfoam}. 
Consider $\FF=\C^{2N}$ with complex coordinates  $p^\C_i,q^i_\C$, $i=1,\ldots, N$ on the corresponding copies of $\C$ and a symplectic form $\Omega$ as follows. 
\begin{equation}\label{eq:symp-Darboux}
\Omega=\frac{1}{4\pi}\left(s\Omega^\C+\overline{s}\overline{\Omega^\C}\right),\quad \Omega^\C=\sum_i\rd p_i^\C\wedge \rd q_\C^i\, ,
\end{equation}
where $s\in \C$ is a fixed parameter.

We identify $\FF$ with the real phase space $T^*\R^{2N}$ by choosing a polarization in which the complex variables $\vec{q}_\C$ serve as the configuration coordinates. This identification, denoted by 
\be
\Iota \colon \C^{2N}\rightarrow T^*\R^{2N}\,,
\label{eq:Iota}
\ee 
is defined explicitly by:
\begin{equation}\label{eq:t-identification}
\vec{q}=(\Re q^1_\C,\ldots, \Re q^N_\C,\Im q^1_\C,\ldots, \Im q^N_\C),\quad \form{p}=\left(\Re \left(sp_1^\C\right),\ldots, \Re\left(sp_N^\C\right),-\Im\left( sp_1^\C\right),\ldots, -\Im\left(sp_N^\C\right)\right).
\end{equation}
Throughout this subsection, we implicitly use the natural identification $\C^N\ni\vec{q}_\C\mapsto \vec{q}\in \R^{2N}$ for the configuration space. We emphasize that the symplectic map $\Iota$ depends on the choice of the parameter $s$, which is held fixed.

\begin{lemma}\label{lm:real-Lagr-complex}
Let $S\colon \blue{X^\C}\rightarrow \C$ be an action \blue{for an open set $X^\C\subset \C^N$}, then
\begin{equation}
\Iota^{-1}\left({\mathcal L}_S^r\right)=\left\{(\form{p}^\C,\vec{q}_\C)\colon \blue{\vec{q}_\C\in X^\C,\ }\frac{s}{4\pi}p_i^\C=\frac{\partial \Re S}{\partial q^i_\C},\ i=1,\cdots,N,\, \Im S=0\right\}.
\end{equation}
\end{lemma}

Here, we have used the holomorphic derivative $\frac{\partial}{\partial q^i_\C}:=\frac{1}{2}\left(\frac{\partial}{\partial \Re q^i_\C}-i\frac{\partial}{\partial \Im q^i_\C}\right)$.

\begin{proof}
A simple computation using the definition of holomorphic derivative shows that 
\begin{align}
&2\Re\left(\frac{s}{4\pi}p_i^\C-\frac{\partial \Re S}{\partial q^i_\C}\right)=\frac{1}{2\pi}\Re\left(sp_i^\C\right)-\frac{\partial \Re S}{\partial \Re q^i_\C}=\frac{1}{2\pi}p_i-\frac{\partial \Re S}{\partial q^i}\,,\\
&2\Im\left(\frac{s}{4\pi}p_i^\C-\frac{\partial \Re S}{\partial q^i_\C}\right)=\frac{1}{2\pi}\Im\left(sp_i^\C\right)+\frac{\partial \Re S}{\partial \Im q^i_\C}=-\left(\frac{1}{2\pi}p_{i+N}-\frac{\partial \Re S}{\partial q^{i+N}}\right)\,.
\end{align}
The conditions stated in the lemma are equivalent to the conditions \eqref{eq:def-Lr} for ${\mathcal L}^r_S$.
\end{proof}

In a similar fashion, the symplectic transformations can be described in terms of the complex variables.

\begin{lemma}\label{lm:sympl-complex}
Let $S \colon \C^N\times \C^N\times \C^{N_o}\rightarrow \R$ be a generating function of a symplectic transformation $\chi_S$. Denote ${\mathcal H}_S=\Iota^{-1} \chi_S\Iota$. Then the following are equivalent for $(\form{p}^{\C\pm},\vec{q}_{\C\pm})\in \C^{2N}$. 
\begin{enumerate}
\item ${\mathcal H}_S(\form{p}^{\C-},\vec{q}_{\C-})=(\form{p}^{\C+},\vec{q}_{\C+})$\,,
\item There exists $\vec{q}_{\C o}$ such that $S(\vec{q}_{\C+},\vec{q}_{\C-},\vec{q}_{\C o})$ satisfies
\begin{equation}
\pm \frac{s}{4\pi}p_i^{\C\pm}=\frac{\partial S}{\partial q^i_{\C\pm}}\,,\, i=1,\cdots,N\quad \frac{\partial S}{\partial q^j_{\C o}}=0\,,\,j=1,\cdots,N_o\,.
\end{equation}
\end{enumerate}
\end{lemma}

\begin{proof}
Let us notice that
\begin{equation}
\frac{\partial S}{\partial q^j_{\C o}}=0\Longleftrightarrow \frac{\partial S}{\partial \Re q^j_{\C o}}=0\text{ and } \frac{\partial S}{\partial \Im q^j_{\C o}}=0\,.
\label{eq:project}
\end{equation}
Due to the properties of generating functions, this defines a map $\xi:T^\ast\R^{2N}\rightarrow \R^{2N_o}$. Further, $\pm \frac{s}{4\pi}p_i^{\C\pm}=\frac{\partial S}{\partial q^i_{\C\pm}}$ is equivalent to a set of real equations:
\begin{equation}
\pm \frac{1}{2\pi}p_i^{\pm}=\frac{\partial S}{\partial q^i_{\pm}},\quad \pm \frac{1}{2\pi}p_{i+N}^{\pm}=\frac{\partial S}{\partial q^{i+N}_{\pm}}
\label{eq:real_symp_tranf}
\end{equation}
as in the proof of Lemma \ref{lm:real-Lagr-complex}. 
Equations \eqref{eq:real_symp_tranf} are precisely the defining relations for the generating function of a symplectic transformation $\chi_S: T^\ast \R^{2N}\rightarrow T^\ast\R^{2N}$ on the real phase space such that $(\form{p}^+,\vec{q}_+)=\chi_S(\form{p}^-,\vec{q}_-)$, 
where $(\form{p}^\pm,\vec{q}_\pm)=\cB(\form{p}^{\C\pm},\vec{q}_{\C\pm})$ are the real phase space coordinates defined by the identification map $\cB$ \eqref{eq:t-identification}. 
This shows that \eqref{eq:project} together with \eqref{eq:real_symp_tranf}, defining the subspace $M_S$ \eqref{eq:M_S} of the real phase space, is equivalent to ${\mathcal H}_S(\form{p}^{\C-},\vec{q}_{\C-})=(\form{p}^{\C+},\vec{q}_{\C+})$. 
\end{proof}

It is also useful to restate Lemma \ref{lm:partial-Hessian} in complex coordinates. We introduce for an action $S(\vec{q}_\C,\vec{Q}_\C)$
\begin{equation}\label{eq:S_o-S_red-complex}
S_o(\vec{q}_\C,\vec{Q}_\C,\form{\lambda}^\C)=S(\vec{q}_\C,\vec{Q}_\C)+\sum_{i=1}^N\Re\left(\lambda_i^\C(Q^i_\C-Q^i_{\C*})\right),\quad S_{red}(\vec{q}_\C,\vec{Q}_{\C*})\, ,
\end{equation}
where $\vec{Q}_{\C*}$ is a constant vector.
We remark that $\Re (z_1z_2)=\Re(z_1)\Re(z_2)-\Im (z_1)\Im (z_2)$, thus the second term in $S_o$ recovers the second term in the $S_o$ action \eqref{eq:S_o}. The following lemma follows directly from Lemma \ref{lm:partial-Hessian}.

\begin{lemma}\label{lm:partial-Hessian-complex}
Let $S$ be an action on $\blue{X^\C\subset}\C^{n}\times \C^N$  and $S_o$, $S_{red}$ are defined in \eqref{eq:S_o-S_red-complex}. Then 
\begin{enumerate}
\item $(\vec{q}_\C,\vec{Q}_\C,\form{\lambda}^\C)\in \Stat(S_o)$ if and only if
\begin{equation}
\vec{Q}_\C=\vec{Q}_{\C\ast},\quad \lambda_i^\C=-\frac{\partial \Re S_o}{\partial Q^i_\C},\ i=1,\ldots, N,\quad \vec{q}_\C\in \Stat(S_{red})\,.
\end{equation}
\item The Hessian ${\bf H}(S_o)$ satisfies
\begin{equation}
\det {\bf H}(S_o)=\det {\bf H}(S_{red})
\label{eq:detHS_0}
\end{equation}
at the corresponding stationary points.
\end{enumerate}
\end{lemma}
Compared to \eqref{eq:detS_o}, the absence of a sign factor in \eqref{eq:detHS_0} is due to the doubling of dimensions when treating complex variables as pairs of real variables, giving $(-1)^{2N}=1$.

%%%%%%%%%%%%%%%%%%%%%%%%%%%%%%%%%%%%%%%%%%%%%%%%
\section{Classical analysis of the $\Lambda$-SF model}
\label{sec:classical}
%%%%%%%%%%%%%%%%%%%%%%%%%%%%%%%%%%%%%%%%%%%%%%%%%%%%%%%%%%%%%%%%%%%%%%%%%%%%%%%%%%%%%%%%%%%%%%%%%%

{\Blue The goal of this section is to prove Theorem \ref{thm:geometric} (restated formally as Theorem \ref{thm:geometric-formal}).  
This theorem asserts that the tangent spaces of the two crucial submanifolds -- $\cL_{\coh}$ (encoding boundary data) and $\cL_{M_3}$ (imposed by the bulk flatness) -- intersect transversally at the critical point. Establishing geometric transversality will be used in the proof of non-degeneracy of the Hessian. We summarize our strategy as follows.

The classical Chern-Simons phase space $\cP_\Sigma$ is defined as the moduli space of flat connections on the boundary $\Sigma$ of a 3-manifold $M_3$ (see \eqref{def:P-sigma}). We introduce the map $\iota$ which relates bulk connections to boundary connections. Instead of calculating intersections in the phase space directly, we reduce the problem to a linear algebra question in two steps. Firstly, Lemma \ref{lm:Hessian-1} establishes that the intersection of the tangent spaces is trivial if and only if a specific condition on the variations of the bulk flat connection holds. Specifically, if a variation of the bulk flat connection vanishes on the boundary 4-holed spheres ($\cS_a$), it must vanish everywhere. Then Lemma \ref{lm:non-deg-1} translates Lemma \ref{lm:Hessian-1} into the language of holonomies. It expresses the condition for transversality as a system of linear equations involving $\sl(2,\C)$ Lie algebra elements $u_{ab}$ and the group elements $g_{ab}$ defining the connection.

In Section \ref{subsec:reconstruction}, we identify the intersection points with {\it geometric flat connections} (see Definitions \ref{df:admissible} and \ref{def:geometric}) in the case of suitable {\it geometric boundary data} (Definition \ref{df:boundary-data}). We show how non-degenerate 4-simplices in constant curvature spaces (de Sitter or Anti-de Sitter) naturally generate these connections. This provides the specific values for the group elements $g_{ab}$ used in our linear equations.

Finally, in Section \ref{subsec:proof}, we finish the proof via tangent vector (variation) analysis. More precisely, we solve the linear system established in Lemma \ref{lm:non-deg-1} using the specific geometry of the 4-simplex. Lemma \ref{lm:group2bivectors} relates the holonomy around a face to a rotation generated by a specific bivector (the wedge product of edge vectors). Proposition \ref{prop:U-U} and Lemma \ref{lm:Biv-1} and \ref{lm:Biv-2} are technical results concerning the linear independence of these bivectors in Lorentzian signature. They ensure that the only solution to the linear system derived in Lemma \ref{lm:non-deg-1} is the trivial solution.

By combining these steps, we prove that the geometric constraints of a non-degenerate 4-simplex force the intersection of the Lagrangian submanifolds to be trivial, thereby proving Theorem \ref{thm:geometric}.
}

%%%%%%%%%%%%%%%%%%%%%%%%%%%%%
\subsection{\blue{Classical} Chern-Simons theory on $\Gamma_5$ graph}
\label{sec:CS_phase_space}

 The definition of the vertex amplitude in the $\Lambda$-SF model is based on the Chern-Simons theory for a special graph in the three-sphere $S^3$\blue{, which we now review}. The three-sphere is homeomorphic to the boundary of a $4$-simplex, and this identification provides a cellular decomposition of $S^3$. The $4$-simplex has $5$ vertices ($0$-cells) denoted by $a\in\{1,\ldots, 5\}$ connected by $10$ edges ($1$-cells) that can be labeled by distinct pairs of vertices that they join. Together, this forms a $1$-skeleton of a cellular decomposition (triangulation) -- the graph $\Gamma_5$ (see fig.\ref{fig:Gamma5}). In our analysis, we will also need other elements of this cellular decomposition. There are $5$ tetrahedra $T_a$, $a=1,\ldots, 5$  that form $3$-cells. Each tetrahedron will be labeled by the only vertex $a$ which does not belong to it. The two tetrahedra intersect in a triangle ($2$-cell)
\begin{equation}
T_a\cap T_b
\end{equation}
which are labeled by a pair of distinct vertices. Finally, the intersection of two triangles belonging to a common tetrahedron gives one of the edges of the graph $\Gamma_5$\footnote{Let us remark that, although in the definition of the vertex amplitude we are using $\Gamma_5$ graph, the semiclassical reconstruction is in terms of another triangulation dual to the one described here. For the details, see \cite{Han:2021tzw,Han:2025mkc}.}.

In the definition of the Chern-Simons theory, we need to introduce tubular neighborhood of graph. Consider an open tubular neighborhood $b\Gamma_5$ of $\Gamma_5$. We define
\begin{equation}
M_3:=S^3\setminus b\Gamma_5,\quad \Sigma:=\partial M_3,
\end{equation}
where $\partial$ denotes the boundary. We remark that $\Sigma$  is a genus-6 oriented Riemann surface.

\begin{figure}[h!]
\centering
\includegraphics[width=0.4\textwidth]{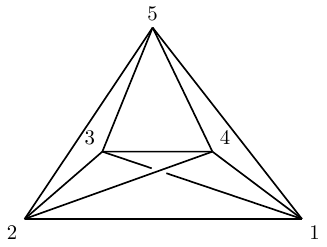}
\caption{
$\Gamma_5$ graph projected on $\R^2$. It forms a triangulation of $S^3$, which is the boundary of a 4-simplex. Numbers $1,\cdots,5$ denote the vertices of the graph. } 
\label{fig:Gamma5}
\end{figure}

The phase space of the Chern-Simons theory is the moduli space $\cM_\Flat(\Sigma,\SL(2,\bC))$ of flat $\SL(2,\bC)$ connections on $\Sigma$, defined as
\be\label{def:P-sigma}
\cP_\Sigma :=\cM_\Flat(\Sigma,\SL(2,\bC))=\Hom(\pi_1(\Sigma),\SL(2,\bC))/\SL(2,\bC)\,,
\ee
where the quotient is by the conjugate action. Except at the thin singular loci, this is a symplectic space of 30 complex dimensions, equipped with the Atiyah-Bott-Goldman symplectic form \blue{\cite{atiyah1983yang, goldman1984symplectic} (see also for explicit formula in FG-FN coordinates \cite{Dimofte:2011gm,Han:2021tzw})}. \blue{It depends on a choice of so-called Chern-Simons level $s\in \C^*$
\begin{equation}
\Omega=\frac{1}{4\pi}\left(s\Omega^\C+\overline{s}\overline{\Omega^\C}\right)
\end{equation}
where $\Omega^\C$ is a complex symplectic form. The details of this form will not be important in our argument, so we postpone some details to the later part of the paper.}

We can pull back the flat connection from $M_3$ to $\Sigma$ (which is a boundary of $M_3$), and this operation is covariant with respect to the gauge transformations, thus we obtain a map 
\begin{equation}
\iota\colon \cM_{\Flat}(M_3,\SLCC)\rightarrow \cM_{\Flat}(\Sigma,\SLCC)\,.
\label{eq:Omega_CS}
\end{equation}
The image of this map is Lagrangian, meaning that
$\iota^\ast\Omega=0$ (the pull-back of sympletic form vanishes). In a general situation, $\iota$ might not be an embedding. In our case, however, the map satisfies this assumption and we denote ${\mathcal L}_{M_3}$ the corresponding image. It is a Lagrangian submanifold on the smooth locus of $\cM_{\Flat}(\Sigma,\SLCC)$ consisting of flat connections that can be obtained by \blue{ a pull-back of} a flat connection on $M_3$ to $\Sigma$.

Let us introduce a small ball $V_a$ around vertex $a$ of $\Gamma_5$ for $a=1,\ldots, 5$. The intersection 
\begin{equation}
\cS_a :=V_a\cap \Sigma
\end{equation}
is a 4-holed sphere. Surface $\Sigma=\partial M_3$ is composed of five 4-holed spheres $\cS_a (a=1,\cdots,5)$ and 10 annuli $(ab)$'s with $a,b=1,\ldots,5$, $a<b$ each connecting a pair of holes from $\cS_a$ and $\cS_b$.

The space of flat connection on $\cS_a$ will be denoted by
\begin{equation}
\cM^0_\Flat(\cS_a,\SL(2,\bC))=\Hom(\pi_1(\cS_a),\SL(2,\bC))/\SL(2,\bC)\,.
\end{equation}
We can restrict flat connections from $\Sigma$ to $\cS_a$, and this operation is also covariant with respect to gauge transformations, thus we obtain
\begin{equation}
\pi_{\cS_a}\colon \cM_\Flat(\Sigma,\SL(2,\bC))\rightarrow \cM^0_\Flat(\cS_a,\SL(2,\bC))\,.
\end{equation}
Moreover, $\pi_{\cS_a}\iota$ can be described as a restriction of flat connection on $M_3$ to $\cS_a$. The space $\cM^0_\Flat(\cS_a,\SL(2,\bC))$ does not possess a natural symplectic structure.

{\Blue
Consider a direct product space $\cM^0$ together with a map from $\cP_\Sigma$:
\begin{equation}
 \cM^0:=\bigtimes_{a=1}^5\cM^0_{\Flat}(\cS_a,\SLCC) , \quad \pi=\bigtimes_{a=1}^5\pi_{\cS_a}\colon \cP_\Sigma\rightarrow \cM^0\,.
 \label{eq:def_M0_pi}
\end{equation}
For $m=(m_1,\ldots,m_5)\in \cM^0$, we define a subset of $\cP_\Sigma$
\begin{equation}\label{def:coh-Lamgr}
{\mathcal L}_{\coh}(m):=\{x\in \cP_\Sigma\colon \pi_{\cS_a}(x)=m_a,\ \forall\, a=1,\ldots, 5\}\,.
\end{equation}
A few remarks on this subset follows. 
\begin{enumerate}
    \item It is non-empty if and only if the conjugacy class of holonomies around the corresponding holes (the ones that wind around the same edge of $\Gamma_5$) of the $4$-hole spheres match.
    \item It is a submanifold for generic $m$. For example, this holds if every $m_a$ belongs to an open subset described by the Fock-Goncharov coordinates (see \cite{Dimofte:2011ju})\footnote{\blue{
We will see in Section \ref{sec:critical_spinfoam} that $\cL_{\coh}(m)$  is a submanifold of $\cP_\Sigma$ using the FG-FN coordinate description. 
Essentially, $\cL_{\coh}(m)$ corresponds to the subspace of flat connections where the geometry (captured by variables $(L_{ab},X_a,Y_a)$, described in Section \ref{subsubsec:FGFN}) of each individual 4-holed sphere $\cS_a$ is fixed, but the ``twisting" relative to each other (captured by the gluing data $T_{ab}$) is allowed to vary freely.}}. In particular, the boundary data considered in this paper (see Definition \ref{df:boundary-data}) satisfy this requirement.
\end{enumerate}
In what follow, we assume that $\cL_{\rm coh}(m)$ is a submanifold of $\cP_{\Sigma}$ (thus, in particular, it is disjoint to singular loci of $\cP_{\Sigma}$).
It depends on a choice of $m$, but we will often skip this label and use ${\mathcal L}_{\coh}$.}
Our approach to proving Theorem \ref{thm:geometric} will be through the following lemma:

\begin{lemma}\label{lm:Hessian-1}
\blue{Let $x\in {\mathcal L}_{\coh}\cap {\mathcal L}_{M_3}$ and let $y\in \cM_{\Flat}(M_3,\SLCC)$ be such that $x=\iota(y)$.} Suppose that the following is true
\begin{equation}
\{v\in T_{y}\cM_{\Flat}(M_3,\SLCC)\colon D_{y}(\pi_{\cS_a}\iota)(v)=0, \, \forall \, a=1,\ldots, 5\}=\{0\}\,.
\label{eq:v=0}
\end{equation}
Then
\begin{equation}
\blue{T_x{\mathcal L}_{\coh}\cap T_x{\mathcal L}_{M_3}=\{0\}\,.}
\end{equation}
\end{lemma}

\begin{proof}
{\Blue The map $\iota$ is a diffeomorphism thus $w\in T_x{\mathcal L}_{M_3}$ if there exists a $v\in T_{y}\cM_{\Flat}(M_3,\SLCC)$ such that $w=D_{y}\iota(v)$. On the other hand, if $w\in T_x{\mathcal L}_{\coh}$ then, by definition \eqref{def:coh-Lamgr}\footnote{In fact, as we will see in Section \ref{sec:critical_spinfoam}, it is if and only if. See particularly Lemma \ref{lm:isom-FG} and Remark \ref{rmk:iso}.}
\begin{equation}
D_x\pi_{\cS_a}(w)=0,\quad \forall\,a=1,\ldots, 5.
\end{equation}
Suppose that $w\in T_x{\mathcal L}_{\coh}\cap T_x{\mathcal L}_{M_3}$ then
\begin{equation}
D_{y}(\pi_{\cS_a}\iota)(v)=D_{x}\pi_{\cS_a}D_{y}\iota(v)=D_{x}\pi_{\cS_a}(w)=0
\end{equation}
However, due to the assumption \eqref{eq:v=0}, this means that $v=0$ and, as a consequence, $w=D_{y}\iota(v)=0$.}
\end{proof}
\blue{This lemma confirms that, if fixing the connection on all the 4-holed spheres $\cS_a$ forces the variation inside the bulk $v$ to vanish, then the geometry is rigid.
In order to apply this result, we need a convenient description of the tangent space to $\cM_{\Flat}(M_3,\SLCC)$ and of the maps $\pi_{\cS_a}\iota$.}

%%%%%%%%%%%%%%%%%%%%%%%%%%%%%%%%%%%%%%%%%%%%%%%%%%%%%%%%%%%%%%%%%%%%%%%%%%%%%%%%%%%%%%%%%%%%%%%%%%
\subsection{Holonomy description}
\label{subsec:flat}
%%%%%%%%%%%%%%%%%%%%%%%%%%%%%%%%%%%%%%%%%%%%%%%%%%%%%%%%%%%%%%%%%%%%%%%%%%%%%%%%%%%%%%%%%%%%%%%%%%

 The above analysis shows that the question of non-degeneracy of the Hessian can be answered by analyzing properties of vectors in the tangent space to the moduli space of flat connections on $M_3$ and its projections on the space of flat connections on the $4$-hole spheres. In order to utilize this observation, we need a convenient description of these spaces. 

We first describe flat connections on a $d$-dimensional ($d\geq 2$) smooth manifold $N$, possibly with a boundary. Introduce a cellular decomposition of $N$, where a cellular complex consists of contractible closed cells. We consider only the case when, for any $0<n\leq d$, the intersection of  two $n$-dimensional cells is a disjoint sum of $(n-1)$-dimensional cells.

On every $d$-cell, we can choose a gauge in which the connection is trivial. The gauge choice is not unique,  but any two such trivializations are related by a constant gauge transformation.  In particular, for every $(d-1)$-cell in the intersection of two $d$-cells, we have two different gauges of these two cells that are related by a constant group element. It is convenient to introduce the orientation of a $(d-1)$-cell. The orientation allows us to distinguish between two $d$-cells separated by a $(d-1)$-cell into the initial and final $d$-cell. The group element associated to the oriented $(d-1)$-cell is given by the change of gauge from the trivialization on the initial cell to the trivialization on the final cell. For the same $(d-1)$-cell but with an opposite orientation, the group element is the inverse of the other one. Every $(d-2)$-cell imposes some consistency condition on the group elements associated to the oriented $(d-1)$-cells\blue{:} changing gauges in a cyclic order around a $(d-2)$-cell should give, after closing the loop, identity. This means that the cyclic product of group elements of $(d-1)$-cells sharing the same $(d-2)$-cell should be equal to identity. This is called the closure condition. These are the only conditions on the group elements associated to oriented $(d-1)$-cells to define a flat connection on $N$. It is not surprising as the curvature is associated to the $(d-2)$-cells.

The discussion above allows us to describe the space of flat connections on a $d$-dimensional manifold $N$ with boundary as follows\footnote{\blue{Description of flat connections in terms of cocycles is standard, see for example \cite{guichard2018introduction}.}}. 
Let ${\mathcal C}_d(N)$ be the set of $d$-cells in the chosen cellular decomposition of $N$, ${\mathcal C}_{d-1}^o(N)$ be the set of oriented $(d-1)$-cells and ${\mathcal C}_{d-2}(N)$ set of $(d-2)$-cells. 
For every $e\in {\mathcal C}_{d-1}^o(N)$, we have initial $d$-cell $i(e)$  and final $d$-cell $f(e)$  of $e$. Moreover, $e^{-1}$ is the $(d-1)$-cell with the reverse orientation. Let
\begin{equation}
\Hol_{\Flat}(N):=\left\{(g_e)\in \SLCC^{{\mathcal C}_{d-1}^o(N)}\colon \forall\, e\in {\mathcal C}_{d-1}^o(N),\ g_{e^{-1}}=g_e^{-1}\,; \,\forall\, 
f\in{\mathcal C}_{d-2}(N),\ \prod_{e\supset f}^{\rightarrow} g_e=1\right\}\,.
\label{eq:Hol_flat-def}
\end{equation}
where we denoted $e\supset f$ if $(d-2)$-cell $f$ belong to $(d-1)$ cell $e$. The gauge action of $\SLCC^{{\mathcal C}_d(N)}$ on $\Hol_{\Flat}$ is by
\begin{equation}
(h_v)\cdot (g_e)=(g_e'=h_{f(e)}g_e h_{i(e)}^{-1})\,,\quad (h_v)\in \SLCC^{{\mathcal C}_d(N)}\,,\quad (g_e)\in \Hol_{\Flat}(N)\,.
\end{equation}
The moduli space of flat connections on $N$ is described by 
\begin{equation}
\cM_{\Flat}(N,\SLCC)=\Hol_{\Flat}(N)/\SLCC^{{\mathcal C}_d(N)}\,.
\end{equation}

We can now describe the tangent vectors at the smooth loci of this space in terms of infinitesimal variations of $g_e$, $e\in {\mathcal C}^o_{d-1}(N)$.
Every vector $t$ at $(g_e)\in \SLCC^{{\mathcal C}_{d-1}^o(N)}$ can be described by matrices $\delta_t g_e$ satisfying $g_e^{-1}\delta_t g_e\in \slcc$. For it to be tangent to $\Hol_{\Flat}(N)$, it must preserve the constraints in \eqref{eq:Hol_flat-def}. That is,
\begin{equation}
\delta_t(g_eg_{e^{-1}})=0,
\quad \delta_t\left(\prod_{e\supset f}^{\rightarrow} g_e\right)=0,
\end{equation}
where we understand the conditions in terms of Leibniz rules. A vector is trivial if it is tangent to a gauge transformation. This means that
\begin{equation}
t=0\Longleftrightarrow \exists \,(u_v)\in \slcc^{{\mathcal C}_d(N)}\colon \delta_tg_e=u_{f(e)}g_e-g_eu_{i(e)}\,.
\end{equation}

Let us focus on the case when $N=M_3$ ($d=3$).  Our cellular decomposition of $S^3$ provides a cellular decomposition of $M_3$. Every cell of this decomposition is obtained by the intersection of a cell from the cellular decomposition of $S^3$ with $M_3$. In particular, the $3$-cells $\tilde{T}_a$, $a=1,\ldots, 5$, are defined as\footnote{\blue{As we will see in Section \ref{subsubsec:FGFN}, $\tilde{T}_a$ is in fact the ideal octahedron $\Oct(a)$ which play pivotal role in quantization of the theory \cite{Han:2021tzw}.}}
\begin{equation}
\tilde{T}_a:=T_a\cap M_3\,,
\label{eq:T_tilde}
\end{equation}
and it is illustrated in fig.\ref{fig:oct_Ta}. 
\begin{figure}[h!]
\centering
\includegraphics[width=0.4\textwidth]{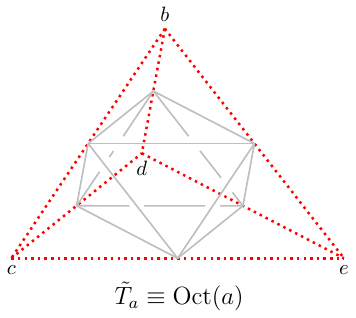}
\caption{\blue{3-cell $\tilde{T}_a$. Here, $a,b,c,d,e$ are distinct. It is the tetrahedron $T_a$ with tubular open neighhorbood of edges ({\it in dotted red}) removed. $\tilde{T}_a$ also coincides with the ideal octahedron $\Oct(a)$ defined in \cite{Han:2021tzw} and reviewed in Section \ref{subsubsec:FGFN}. Here, the cusp boundaries of $\Oct(a)$ are shrunk to vertices of the octahedron ({\it in gray}). See \cite{Han:2021tzw} for more details.} }
\label{fig:oct_Ta}
\end{figure} 
They intersect in oriented $2$-cells
\begin{equation}
\tilde{F}_{ab}:=\tilde{T}_a\cap \tilde{T}_b=T_a\cap T_b\cap M_3\,,
\end{equation}
where the orientation is such that $\tilde{T}_b$ is the initial cell and $\tilde{T}_a$ is the final cell. The set of $1$-cells is empty, because $\Gamma_5\cap M_3=\emptyset$. In summary,
\begin{equation}
{\mathcal C}_{d}(M_3)=L_1,\quad {\mathcal C}_{d-1}^o(M_3)=L_2,\quad {\mathcal C}_{d-2}(M_3)=\emptyset,
\end{equation}
where we introduced the sets $L_1$ (of five $3$-cells) and $L_2$ (of the twenty oriented 2-cells),
\begin{equation}
L_2=\{(a,b)\in\{1,\ldots, 5\}^2, a\not=b\}\text{ and }L_1=\{1,\ldots, 5\}\,.
\end{equation}
 We can now use our description of the flat connections to describe $\Hol_{\Flat}(M_3)$. Then the moduli space of flat connections on $M_3$ can be described as follows.
\begin{equation}
\begin{split}
\Hol_{\Flat}(M_3)=\{(g_{ab})\in\SLCC^{L_2}\colon g_{ab}=g_{ba}^{-1}\}\,,\\
\cM_{\Flat}(M_3,\SLCC)=\Hol_{\Flat}(M_3)/\SLCC^{L_1}\,.
\end{split}
\label{eq:def_holFlatM3}
\end{equation}
Notice that, due to ${\mathcal C}_{d-2}(M_3)$ being an empty set, there is {\it no} closure condition. 
For the generic point of $\Hol_{\Flat}(M_3)$, the stabilizer of the action of $\SLCC^{L_1}$ is discrete, thus the complex dimension of the smooth loci of the set $\cM_{\Flat}(M_3,\SLCC)$ is equal $20/2*3-5*3=15$ as expected.

Similar construction can be made for every $\cS_a$  ($d=2$). Let us fix $a$. 
From the cellular decomposition of $M_3$, we can obtain a cellular decomposition of $\cS_a$. It is done by intersecting cells from $M_3$ with $\cS_a$.
We introduce $2$-cells $F_b$ for $b\not=a$ obtained by intersecting $3$-cells of $M_3$ with $\cS_a$:
\begin{equation}
F_b:=\tilde{T}_b\cap \cS_a\,.
\end{equation}
It touches three holes out of four of $\cS_a$. 
An illustration of $F_b$ and $E_{bc}$ is given in fig.\ref{fig:Fb_Ebc}.
\begin{figure}[h!]
\centering
\begin{minipage}{0.45\textwidth}
\centering
\includegraphics[width=0.6\textwidth]{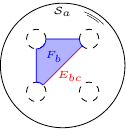}
\end{minipage}
\begin{minipage}{0.45\textwidth}
\centering
\includegraphics[width=0.6\textwidth]{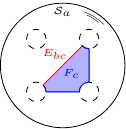}
\end{minipage}
	\caption{2-cells $F_b$ and $F_c$ on a fixed $\cS_a$ ({\it shaded in blue}). Each of them touches three out of four holes of $\cS_a$. Their intersection gives 1-cell $E_{bc}$ ({\it in red}).} 
	\label{fig:Fb_Ebc}
\end{figure}
The set of $1$-cells is obtained through intersecting pairs of $2$-cells. We introduce oriented $1$-cells
\begin{equation}
E_{bc}:=F_b\cap F_c=\tilde{F}_{bc}\cap \cS_a,
\end{equation}
with $b\not=c$ and $b,c\not=a$. \blue{An illustration of $F_b$ and $E_{bc}$ is given in fig.\ref{fig:Fb_Ebc}.} The orientation is such that $F_c$ is the initial 2-cell and $F_b$ is the final 2-cell.  For the same reason as in the case of $M_3$, the intersection of $1$-cells is always empty. We introduce
\begin{equation}
L_2^a=\{(b,c)\in\{1,\ldots, 5\}^2, b\not=c,\ b\not=a,\ c\not=a\}\text{ and }L_1^a=\{1,\ldots, 5\}\setminus \{a\}.
\end{equation}
The moduli space of flat connections on $\cS_a$ is given by
\begin{equation}
 \cM^0_{\Flat}(\cS_a,\SLCC)=\Hol_{\Flat}(\cS_a)/\SLCC^{L_1^a},\quad 
\Hol_{\Flat}(\cS_a)=\{(g_{bc})\in\SLCC^{L_2^a}\colon g_{bc}=g_{cb}^{-1}\}
\,.
\end{equation}
Again, the absence of the closure condition is due to ${\mathcal C}_{d-2}(\cS_a)=\emptyset$. Similarly to the case of $\cM_{\Flat}(M_3,\SLCC)$, one can show that the complex dimension of the smooth loci of this space is equal to 6, as expected.

As the cellular decomposition of $\cS_a$ is obtained from the cellular decomposition of $M_3$, our representation of flat connections allows for a simple description of the restriction of flat connections on $M_3$ to $\cS_a$. It has a very simple representation:
\begin{equation}
\pi_{\cS_a}\iota((g_{bc}))=(g_{bc})_{b,c\not=a}.
\end{equation}

\begin{lemma}\label{lm:non-deg-1}
Let $(g_{ab})\in \cM_{\rm flat}(M_3,\SLCC)$  be such that its image $x$ under $\iota$ belongs to ${\mathcal L}_{\coh}$. Consider two conditions for a tangent vector $t$ at this point of $\cM_{\Flat}(M_3,\SLCC)$:
\begin{enumerate}
\item\label{lm:1-cond} There exist $u_{ab}\in \slcc$, $a\not=b$, such that
\begin{equation}
\delta_t g_{ab}=u_{ca}g_{ab}-g_{ab}u_{cb}
\end{equation}
for every $a$, $b$, $c$ that are pairwise different;
\item\label{lm:2-cond} $t=0$ i.e. there exist $u_{c}\in \slcc$, $c=1,\ldots ,5$, such that
\begin{equation}
\delta_t g_{ab}=u_{a}g_{ab}-g_{ab}u_{b}
\end{equation}
for every $a\not=b$.
\end{enumerate}
If $\eqref{lm:1-cond}\Longrightarrow \eqref{lm:2-cond}$, then {\Blue $T_{x}{\mathcal L}_{\coh}\cap T_x{\mathcal L}_{M_3}=\{0\}$}.
\end{lemma}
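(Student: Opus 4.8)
The plan is to obtain the statement as a direct consequence of Lemma~\ref{lm:Hessian-1}, by rewriting that lemma's hypothesis --- phrased through $T_y{\mathcal L}_{\Flat}$ and the tangent maps $D_y(\pi_{\cS_a})$ --- in the holonomy language of condition~\eqref{lm:1-cond}. First I would set $y:=\iota((g_{ab}))$. By assumption $y\in{\mathcal L}_{\rm coh}$, and $y\in{\mathcal L}_{\Flat}$ automatically since $(g_{ab})\in\cM_{\Flat}(M_3,\SLCC)$; thus $y\in{\mathcal L}_{\rm coh}\cap{\mathcal L}_{\Flat}$ and $x$ is the corresponding stationary point. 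By Lemma~\ref{lm:Hessian-1} it then suffices to prove that every $v\in T_y{\mathcal L}_{\Flat}$ with $D_y(\pi_{\cS_a})(v)=0$ for all $a=1,\ldots,5$ vanishes.

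The key step is a dictionary between such vectors $v$ and tangent vectors at $(g_{ab})$ in $\cM_{\Flat}(M_3,\SLCC)$. On the relevant smooth locus $\iota$ is an embedding with image ${\mathcal L}_{\Flat}$, so $D_{(g_{ab})}\iota$ identifies $T_{(g_{ab})}\cM_{\Flat}(M_3,\SLCC)$ with $T_y{\mathcal L}_{\Flat}$; let $t$, represented by matrices $(\delta_t g_{ab})$ as in Sec.~\ref{subsec:flat}, correspond to $v$. Because the restriction of flat connections from $M_3$ to $\cS_a$ is just the sub-collection map $(g_{bc})\mapsto (g_{bc})_{b,c\neq a}$, the image $D_y(\pi_{\cS_a})(v)$ is represented by $(\delta_t g_{bc})_{b,c\neq a}$, and it vanishes in $\cM^0_{\Flat}(\cS_a,\SLCC)$ exactly when this representative is an infinitesimal $\SLCC^{L_1^a}$-gauge variation, i.e.\ there exist $u^{(a)}_b\in\slcc$, $b\in L_1^a$, with $\delta_t g_{bc}=u^{(a)}_b g_{bc}-g_{bc}u^{(a)}_c$ for all distinct $b,c\neq a$. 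Running this over all five $a$ and relabelling $u_{ab}:=u^{(a)}_b$ turns ``$D_y(\pi_{\cS_a})(v)=0$ for every $a$'' into precisely condition~\eqref{lm:1-cond}. I would also note that this property is independent of the chosen representative of $t$ --- a gauge shift by $(w_d)\in\slcc^{L_1}$ replaces $u_{ab}$ by $u_{ab}+w_b$, still an admissible family --- so it is well defined on $t$. Feeding this into the hypothesis $\eqref{lm:1-cond}\Longrightarrow\eqref{lm:2-cond}$ gives $t=0$, hence $v=0$, which verifies the hypothesis of Lemma~\ref{lm:Hessian-1}.

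I do not expect a genuine obstacle at this stage: the lemma is essentially a repackaging that isolates the single geometric input $\eqref{lm:1-cond}\Longrightarrow\eqref{lm:2-cond}$, to be established in Sec.~\ref{sec:geo_reconstruct} for stationary points arising from non-degenerate curved $4$-simplices. The only point requiring care is the combinatorial bookkeeping: matching the quantifier structure of condition~\eqref{lm:1-cond} --- a single doubly-indexed family $u_{ab}$ whose first index records which $4$-holed sphere $\cS_a$ the $\slcc$-parameter lives on --- with the five a priori independent families $\{u^{(a)}_b\}$ produced by the vanishing of the maps $D_y(\pi_{\cS_a})$, and confirming that $\pi_{\cS_a}\iota$ is indeed the sub-collection projection, so that the kernel of its differential is exactly the space of infinitesimally gauge-trivial variations on $\cS_a$.
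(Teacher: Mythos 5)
Your proposal is correct and follows essentially the same route as the paper, which simply observes that the lemma is Lemma \ref{lm:Hessian-1} restated in the holonomy description of tangent vectors, with condition \eqref{lm:1-cond} encoding $D_y(\pi_{\cS_a})(v)=0$ for all $a$ and condition \eqref{lm:2-cond} encoding $v=0$. Your write-up just makes explicit the bookkeeping (the identification $u_{ca}=u^{(c)}_a$ and the gauge-independence of the criterion) that the paper leaves implicit.
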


\begin{proof}
It is the restated condition from Lemma \ref{lm:Hessian-1} using the description of tangent vectors to spaces of flat connections. 
The variation $\delta_t$ realizes the tangent vector $v'$ in Lemma \ref{lm:Hessian-1}.
The first point describes the vanishing of the tangent vector projected by $\pi_{\cS_a}\iota$ for every $a$. The second point describes the vanishing of the tangent vectors in the space of flat connections on $M_3$.
\end{proof}
 
 This lemma reduces {\Blue the proof of Theorem \ref{thm:geometric}}  to a purely combinatorial problem\blue{, centering at studying the $\SL(2,\bC)$ group elements defining the moduli space $\cM_{\rm flat}(M_3,\SLCC)$ of flat connections and the tangent vectors on it.}
 We will now analyze this problem in the case when group elements $g_{ab}$ are obtained from a stationary point that corresponds to a non-degenerate $4$-simplex.

%%%%%%%%%%%%%%%%%%%%%%%%%%%%%%%%%%%%%%%%%%%%%%%%%%%%%%%%%%%%%%%%%%%%%%%%%%%%%%%%%%%%%%%%%%%%%%%%%%
\subsection{Reconstruction of 4-simplex geometry}
%\label{sec:geo_reconstruct}
\label{subsec:reconstruction}
%%%%%%%%%%%%%%%%%%%%%%%%%%%%%%%%%%%%%%%%%%%%%%%%%%%%%%%%%%%%%%%%%%%%%%%%%%%%%%%%%%%%%%%%%%%%%%%%%%

We now move to  {\Blue the intersection points of ${\mathcal L}_{\coh}\cap {\mathcal L}_{M_3}$ produced by non-degenerate 4-simplex geometry.} Our main tool will be Lemma \ref{lm:non-deg-1}. In order to apply it to the $\Lambda$-SF model, we need to translate the original description of the stationary points from \cite{Haggard:2014xoa,Haggard:2015ima,Haggard:2015nat} into our description of the flat connections on $M_3$ as described in Section \ref{subsec:flat}.

Under mild non-degeneracy conditions, the {\Blue intersection points} can be described in the following way. Consider a homogeneously curved non-degenerate Lorentzian 4-simplex with spacelike tetrahedra and hence also spacelike triangles, whose curvature can be positive or negative. Non-degeneracy means that all tetrahedra are non-degenerate and any four tetrahedron normals at their common vertex are linearly independent. After choosing some spin frames at the vertices, $g_{ab}$ in $\Hol_{\Flat}(M_3)$  is given by spin parallel transport from vertex $b$ to vertex $a$ along the edge of the $4$-simplex connecting these vertices. This describes an {\Blue intersection} point in $\cM_{\Flat}(M_3,\SLCC)$.

%%%%%%%%%%%%%%%%%%%%%%%%%%%%%%%%%%%%%%%%%%%%%%%%%%%%%%%%%%%%%%%%%%%%%%%%%%%%%%%%%%%%%%%%%%%%%%%%%%
%\subsection{Reconstruction of 4-simplex geometry}
%\label{subsec:reconstruction}
%%%%%%%%%%%%%%%%%%%%%%%%%%%%%%%%%%%%%%%%%%%%%%%%%%%%%%%%%%%%%%%%%%%%%%%%%%%%%%%%%%%%%%%%%%%%%%%%%%

We first consider the case of a positive cosmological constant. 
De Sitter space is a hypersurface of $\R^{1,4}$:
\begin{equation}
dS=\{X\in \R^{1,4}\colon \eta_{IJ}X^IX^J=-1\}\,,
\end{equation}
where $\eta=(\rd X^0)^2-\sum_{I=1}^4 (\rd X^I)^2$. Metric $\eta$ restricted to $dS$ gives the de Sitter metric. The tangent space at $Y\in dS$ can be identified with
\begin{equation}\label{eq:tangent}
T_YdS=\{v\in \R^{1,4}\colon \eta_{IJ} Y^Iv^J=0\}\,.
\end{equation}
Suppose two distinct points $X,Y\in dS$ can be connected by a spacelike geodesic of length smaller than $\pi$. This geodesic can be determined as follows:
There exists a unique two-dimensional plane $H\subset \R^{1,4}$ containing the origin such that $X,Y\in H$. This plane is spacelike and $H\cap dS$ is the unique geodesic circle to which $X,Y$ belong. The shorter segment of this circle is the geodesic we are searching for. It is easy to describe the (non-normalized) initial velocity $\gamma$ of this geodesic at point $X$ using \eqref{eq:tangent}:
\begin{equation}
\gamma=Y+(Y\cdot X)X,\quad Y\cdot X=\eta_{IJ}X^IY^J\,.
\end{equation}
The description of geodesics can be extended to totally geodesic surfaces in de Sitter.  

Let $X_\alpha\in dS$ for $\alpha=1,\ldots k+1$ be $k+1$ points on de Sitter. Suppose that $X_\alpha$ as vectors in $\R^{1,4}$ are independent and every two of them can be connected by a geodesic. Let 
\begin{equation}
H=\operatorname{span}\{X_\alpha,\ \alpha=1,\ldots k+1\}
\end{equation}
be a subspace in $\R^{1,4}$. Then the connected component $N$ of $H\cap dS$ containing $X_\alpha$, $\alpha=1,\ldots, k+1$, is the unique $k$-dimensional, totally geodesic and connected submanifold of $dS$ containing all all the points $X_\alpha$, $\alpha=1,\ldots, k+1$. The tangent space of $N$ at $X_\alpha$ is spanned by the initial velocities of the geodesics connecting this point with $X_\beta$, $\beta\not=\alpha$, \ie
\begin{equation}
\gamma_{\beta\alpha}=X_\beta+(X_\beta\cdot X_\alpha)X_\alpha\,.
\end{equation}
Let us notice that $\gamma_{\beta\alpha}$ for $\beta\in \{1,\ldots k+1\}\setminus\{\alpha\}$ are linearly independent due to the linear independence of $X_\alpha$, $\alpha\in \{1,\ldots k+1\}$.

 The same construction can be done for anti-de Sitter space, but on the ambient space $\R^{2,3}$ with signature $(++---)$. 
\begin{equation}
AdS=\{X\in \R^{2,3}\colon \eta'_{IJ}X^IX^J=1\}\,,
\end{equation}
where $\eta'=(\rd X^0)^2+(\rd X_1)^2-\sum_{I=2}^4 (\rd X^I)^2$. Metric $\eta'$ restricted to $AdS$ gives the Anti-de Sitter metric. The difference is that if two points can be connected by a spacelike geodesic then it is unique.

We can now describe non-degenerate $4$-simplices with spacelike tetrahedra:

\begin{definition}\label{df:admissible}
The set of points $X_a\in dS$ (or $X_a\in AdS$), $a\in\{1,\ldots, 5\}$ is {\bf admissible} if
\begin{enumerate}
\item $X_a$, $a\in\{1,\ldots, 5\}$, are linearly independent,
\item for every $a$, $b$ distinct, $X_a$ and $X_b$ can be connected by a geodesic
\item for every $a$,
\begin{equation}
H_a=\operatorname{span}\{X_b,\ b\in\{1,\ldots, 5\}\setminus\{a\}\}
\end{equation}
is a spacelike subspace of $\R^{1,4}$  (dS) or Lorentzian signature subspace (AdS).
\end{enumerate}
\end{definition}

Let $X_a\in dS$  (or $X_a\in AdS$), $a=1,\ldots, 5$ be admissible. The unique short geodesic connecting distinct points provides the edges of $4$-simplex. Let us choose an orthonormal oriented {\Blue and time-oriented} frame at every vertex. Parallel transport from vertex $b$ to vertex $a$ determines a group element $G_{ab}\in \SO_+(1,3)$. 
The change of frames is by the action of $G_a\in \SO_+(1,3)$ at vertex $a$, transforming as
\begin{equation}\label{eq:gauge-SO}
G_{ab}\rightarrow G_a G_{ab}G_b^{-1}\,.
\end{equation}
Let $\Pi$ be a homomorphism of $\SLCC$ onto the Lorentz group $SO_+(1,3)$. 
\blue{\begin{definition}
\label{def:geometric}
An element $(g_{ab})\in \cM_{\Flat}(M_3,\SLCC)$ is called a {\bf geometric flat connection} (in $\cM_{\Flat}(M_3,\SLCC)$) (in short $(g_{ab})$ is geometric) if there exists an admissible set of points $X_a\in dS$ (or $X_a\in AdS$), $a\in\{1,\ldots, 5\}$ and a choice of frame at the corresponding $4$-simplex such that
\begin{equation}
\Pi(g_{ab})=G_{ab}.
\end{equation}
\blue{Every geometric flat connection is an image under $\iota$ of a unique point in $\cM_{\Flat}(\Sigma,\SLCC)$. When no confusion arises, we also call this element of $\cM_{\Flat}(\Sigma,\SLCC)$ a geometric flat connection.}
\end{definition}}

{\Blue
First, observe that if a set of group elements $(g_{ab})$ satisfies the definition of being geometric, then any gauge equivalent set also satisfies it (due to the transformation property in \eqref{eq:gauge-SO}). Therefore, Definition \ref{def:geometric} indeed applies  to elements of the moduli space $\cM_{\Flat}(M_3,\SLCC)$, not just to specific representatives in $\Hol_{\Flat}(M_3)$. 
It is shown in \cite{Haggard:2015ima} that a geometric flat connection uniquely determines a set of admissible points $X_a$ (representing a curved 4-simplex). This geometry is unique up to orientation-preserving isometries of $dS$ (or $AdS$) space, with an additional ambiguity regarding parity reversal ($X_a\rightarrow -X_a$ for all $a$ simultaneously). We refer to the equivalence class of such points as the {\bf reconstructed $4$-simplex} for geometric flat connection $(g_{ab})$.

Conversely, consider an admissible set of points $X_a$, $a=1,\ldots, 5$ determining a fixed 4-simplex geometry (up to isometry). While the Lorentz group elements $G_{ab}$ are fixed by this geometry, their lift to $\SLCC$ is not unique. One can verify that there are exactly $2^{6}$ distinct geometric flat connections corresponding to the same 4-simplex geometry \cite{Haggard:2015ima}\footnote{
\blue{The number of distinct flat connections can be calculated as follows. 10 edges in $\Gamma_5$ give totally $2^{10}$ raw combinations of lifts. We can perform a gauge transformation at the vertices. A sign change at a vertex $a$ flips the signs of all edges connected to $a$, giving $2^5$ gauge choices. However, a global sign change at all vertices simultaneously does not change the relative connection on the edges. Therefore, the effective gauge freedom is $2^{5-1}=2^4$. This gives the number of distinct flat connections: $2^{10}/2^4=2^6$. }
}.}

Finally, we emphasize the dual nature of this construction. In our description of flat connections on $M_3$, the group elements are associated with 2-cells. Consequently, the reconstructed geometric 4-simplex is dual to the graph $\Gamma_5$: the edges of the geometric 4-simplex correspond to the 2-cells of the cellular decomposition, and the vertices of the geometric 4-simplex correspond to the 3-cells of the complex.

{\Blue
We remind that $\cM^0=\bigtimes_{a=1}^5 \cM_{\Flat}(\cS_a,\SLCC)$ and $\pi=\bigtimes_{a=1}^5 \pi_{\cS_a}$ (see Section \ref{sec:CS_phase_space}). 

\begin{definition}\label{df:boundary-data}
An element $m\in \cM^0$ is called {\bf geometric boundary data} if it is an image under $\pi\iota$ of a geometric flat connection in $\cM_{\Flat}(M_3,\SLCC)$.
\end{definition}

A few remarks are in order:
\begin{enumerate}
\item Consider a single component of the boundary data, $m_a\in \cM^0_{\Flat}(\cS_a,\SLCC)$.Suppose that there exists a gauge in which all group elements belong to $\SU(2)$, but there is no gauge in which all of them belong to the abelian subgroup ${\rm U}(1)$ (which would imply a degenerate geometry). It was shown in \cite{Haggard:2015ima} (see also \cite{Haggard:2015sl}) that, by analogous construction to the one described in this section, $m_a$ defines the geometry of flatly embedded tetrahedron in a constantly curved space (sphere or hyperboloid), or, in some cases, $m_a$ corresponds to a ``two-sheeted hyperbolic tetrahedron" (see \cite{Haggard:2015ima} for the definition). In the former case, we call $m_a$ geometric.
\item Suppose $m=(m_1,\ldots m_5)\in \cM^0$ is such that every $m_a$ is geometric (related to a tetrahedron as described above). The set of $y\in \cM_{\Flat}(M_3,\SLCC)$ such that $\pi\iota(y)=m$ was analyzed in \cite{Haggard:2015sl}. If this pre-image set is non-empty, it consists of either one or two-points. In the latter case, both points have the interpretation of either degenerate so-called vector geometries or they are geometric flat connections of two constantly curved $4$-simplices (or ``two-sheeted 4-simplices") related by a parity reflection (orientation reversal).
\end{enumerate}

In this paper, we restrict our analysis to the ``good" sector:  the geometric boundary data (Definition \ref{df:boundary-data}) correspond to standard tetrahedra and a reconstructed non-degenerate, curved 4-simplex. We will briefly comment on the remaining cases in the Conclusion.

We can now state formally Theorem \ref{thm:geometric}. 

\begin{theorem}[Theorem \ref{thm:geometric}]\label{thm:geometric-formal}
Let $m\in \cM^0$ be a geometric boundary data and let $y$ be such that $\pi\iota(y)=m$. Then
\begin{equation}
T_{\iota(y)}{\mathcal L}_{\coh}(m)\cap T_{\iota(y)}{\mathcal L}_{M_3}=\{0\}.
\end{equation}
where $\iota(y)\in {\mathcal L}_{\coh}(m)\cap {\mathcal L}_{M_3}$.
\end{theorem}
}

%%%%%%%%%%%%%%%%%%%%%%%%%%%%%%%%%%%%%%%%%%%%%%%%%%%%%%%%%%%%%%%%%%%%%%%%%%%%%
\subsection{Proof of Theorem \ref{thm:geometric}}
\label{subsec:proof}
%%%%%%%%%%%%%%%%%%%%%%%%%%%%%%%%%%%%%%%%%%%%%%%%%%%%%%%%%%%%%%%%%%%%%%%%%%%%%

We are now ready to prove {\Blue Theorem \ref{thm:geometric}}.
Using our description of flat connection, we reduce a question about the intersection to a question about a \blue{couple} of $\SLCC$ group elements and $\slcc$ Lie algebra elements (assumptions of Lemma  \ref{lm:non-deg-1}). We will now use the properties of the flat connections corresponding to non-degenerate $4$-simplices that were derived in the previous section to show that the assumptions of Lemma 
\ref{lm:non-deg-1} hold for such 4-simplices. 

Firstly, we need to determine some properties of holonomies around the faces of a $4$-simplex in de Sitter and anti-de Sitter spaces.
Recall that we have chosen a frame at vertex $a$. It gives an identification of the tangent space to the de Sitter or anti-de Sitter space at the vertex with $\R^{1,3}$. The parallel transport around the face $abc$ is given in this frame by
\begin{equation}
G_{cba}:=G_{ac}G_{cb}G_{ba}\in \SO_+(1,3)\,.
\end{equation}
Geodesic connecting vertex $a$ with $b$ and vertex $a$ with $c$ have tangent vectors at $a$ given in the frame by
\begin{equation}
\gamma_{ba},\gamma_{ca}\in \R^{1,3}\,.
\end{equation}
We can state some basic properties of this holonomy for a non-degenerate $4$-simplex: 

\begin{lemma}
\label{lm:group2bivectors}
For every distinct $a$, $b$ and $c$ in a non-degenerate $4$-simplex in dS or AdS,
\begin{equation}
G_{cba}=e^{\tau B},\quad B=\gamma_{ba}\wedge\gamma_{ca}\,,
\end{equation}
and moreover, $G_{cba}\not=1$.
\end{lemma}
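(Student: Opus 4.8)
The plan is to reduce the statement to the intrinsic two‑dimensional geometry of the face $abc$. First I would observe that the loop $a\to b\to c\to a$ built from the edges of the $4$‑simplex lies entirely inside the connected totally geodesic surface $N_{abc}$ through $X_a,X_b,X_c$ constructed in Sec.\ref{subsec:reconstruction}. Being totally geodesic inside a constant curvature space, $N_{abc}$ itself has constant curvature, equal to $+1$ for $dS$ and $-1$ for $AdS$; since the triangle is spacelike its induced metric is Riemannian, so $N_{abc}$ is a piece of the round unit sphere ($\Lambda>0$) or of $\mathbb{H}^2$ ($\Lambda<0$). Transported to $\R^{1,3}$ by the chosen frame at $a$, the tangent plane $T_{X_a}N_{abc}$ is the spacelike $2$‑plane $V=\operatorname{span}\{\gamma_{ba},\gamma_{ca}\}$; by the linear independence of $\gamma_{ba}$ and $\gamma_{ca}$ established in Sec.\ref{subsec:reconstruction}, $V$ is genuinely two‑dimensional and the bivector $B=\gamma_{ba}\wedge\gamma_{ca}$, regarded as an element of $\slcc$ via the identification of $2$‑forms on $\R^{1,3}$ with infinitesimal Lorentz transformations, is nonzero.

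Next I would exploit the structure of a totally geodesic submanifold of a space form. Its second fundamental form vanishes, so ambient parallel transport along curves in $N_{abc}$ preserves the splitting $\R^{1,3}=V\oplus V^{\perp}$ (in the frame at $a$); moreover the Ricci equation shows the normal connection is flat, because in a space form $R^{dS}(X,Y)\xi=0$ whenever $X,Y$ are tangent to $N_{abc}$ and $\xi$ is normal. Hence the holonomy $G_{cba}$ around the contractible triangle fixes $V^{\perp}$ pointwise and acts on $V$ as the intrinsic holonomy of $N_{abc}$ around the geodesic triangle $X_aX_bX_c$, which is an orientation‑preserving rotation of the Euclidean plane $V$. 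Since the generator of rotations of $V$ is a scalar multiple of the bivector $\gamma_{ba}\wedge\gamma_{ca}$, this gives $G_{cba}=e^{\tau B}$ for a suitable $\tau\in\R$, which is the claimed form.

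It remains to show $G_{cba}\neq 1$, which amounts to checking that the rotation angle of $e^{\tau B}$ on $V$ is not an integer multiple of $2\pi$. By Gauss--Bonnet applied to the geodesic triangle on $N_{abc}$ (curvature $\pm1$), this rotation angle equals, up to sign, the area $A$ of the triangle: $A=(\text{angle sum})-\pi$ on the sphere and $A=\pi-(\text{angle sum})$ on $\mathbb{H}^2$. Non‑degeneracy of the $4$‑simplex ensures the face is a non‑degenerate triangle, so $A>0$ and its angles lie in $(0,\pi)$ (the edges being the short spacelike geodesics, of length below $\pi$ in the $dS$ case and unique in the $AdS$ case); consequently $A<2\pi$ in the $dS$ case and $A<\pi$ in the $AdS$ case. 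In either case $A\in(0,2\pi)$, so $e^{\tau B}\neq 1$. I expect the main obstacle to be precisely this last step: one must pin down the normalization of $\tau$, carry out the Gauss--Bonnet bookkeeping carefully, and in particular use the length/uniqueness hypotheses on the edges to keep $A$ strictly inside one full turn, thereby ruling out the degenerate coincidence in which the face holonomy winds an integer number of times and is still trivial.
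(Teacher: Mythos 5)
Your proposal is correct and follows essentially the same route as the paper's proof: restrict to the totally geodesic surface through $X_a,X_b,X_c$, note that the holonomy fixes the normal directions and rotates the plane spanned by $\gamma_{ba},\gamma_{ca}$, and then rule out triviality via the angle-equals-area bound ($<2\pi$ spherical, $<\pi$ hyperbolic). Your added detail on the flat normal connection and the explicit Gauss--Bonnet bookkeeping only makes explicit what the paper leaves implicit.
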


\begin{proof}
Let $H$ be a three-dimensional hyperplane containing the origin and $X_a,X_b,X_c$. The two-dimensional submanifold $N=H\cap dS$ (or $N=H\cap AdS$ in case of negative cosmological constant) is totally geodesic. This means that the parallel transport preserves the normal vectors to this hypersurface.

Vectors $\gamma_{ba}$, $\gamma_{ca}$ span the tangent space to $N$ at vertex $a$ (they are independent as the $4$-simplex is non-degenerate). This means that $G_{cba}$ is a rotation in the plane spanned by $\gamma_{ba}$ and $\gamma_{ca}$, and the vectors orthogonal to this plane are preserved by $G_{cba}$. Thus
\begin{equation}
G_{cba}=e^{\tau B},\quad B=\gamma_{ba}\wedge\gamma_{ca}.
\end{equation}
In order to determine whether $G_{cba}=1$, we can restrict the problem to $N$ that is either a sphere (for $dS$) or a hyperbolic 2-plane (for $AdS$).

In two dimensions, $G_{cba}$ is given by rotation by an angle equal to $\pm$ area of the triangle. In spherical geometry, proper triangles have areas less than $2\pi$, while in hyperbolic geometry, the proper triangles have areas less than $\pi$. So $G_{cba}\not=1$.
\end{proof}

We will now prove that the assumptions of Lemma \ref{lm:non-deg-1} are satisfied for $(g_{ab})$ coming from the non-degenerate $4$-simplex. 

\begin{prop}\label{prop:U-U}
Let $(G_{ab})$ be parallel transports for $4$-simplex obtained by an admissible set of vertices either in de Sitter or anti-de Sitter. Suppose that the bivectors
\begin{equation}
U_{ab}\,\in {\bigwedge} ^2 \R^{1,3}\,,\quad (a,b)\in L_2
\end{equation}
satisfy, for distinct $a$, $b$, $c$, $d$ and $d'$, that
\begin{equation}
\Ad_{G_{cba}}U_{da}-U_{da}=\Ad_{G_{cba}}U_{d'a}-U_{d'a}\,.
\end{equation}
Then there exist bivectors $U_a$, $a\in L_1$ such that $U_{da}=U_a$ for every $d\not=a$.
\end{prop}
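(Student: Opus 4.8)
The plan is to prove the statement vertex by vertex: we fix $a$ and show that the four bivectors $U_{da}$ ($d\neq a$) all agree, the common value then being the required $U_a$. The hypothesis
\begin{equation}
\Ad_{G_{cba}}U_{da}-U_{da}=\Ad_{G_{cba}}U_{d'a}-U_{d'a}
\end{equation}
is equivalent to $U_{da}-U_{d'a}\in\ker(\Ad_{G_{cba}}-\mathrm{id})$, and once $a$ is fixed the indices $\{b,c\}$ are forced to be the complementary pair $\{1,\dots,5\}\setminus\{a,d,d'\}$. So everything comes down to the structure of the fixed subspaces of the face holonomies at $a$. By the preceding lemma, $G_{cba}=e^{\tau B_{cba}}$ with $B_{cba}=\gamma_{ba}\wedge\gamma_{ca}$ and $G_{cba}\neq 1$; since the triangle $(abc)$ is spacelike, $G_{cba}$ is a diagonalizable rotation, by a non-zero angle $\theta$, in the spacelike $2$-plane $\operatorname{span}(\gamma_{ba},\gamma_{ca})$, acting trivially on the orthogonal $(1,1)$-plane. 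A direct eigenvalue count for $\Ad_{G_{cba}}$ on $\bigwedge^2\R^{1,3}$ in a frame adapted to this splitting gives the eigenvalue $1$ with multiplicity $2$ and $e^{\pm i\theta}$ each with multiplicity $2$; since $e^{\pm i\theta}\neq 1$ because $G_{cba}\neq 1$, this yields
\begin{equation}
\ker(\Ad_{G_{cba}}-\mathrm{id})=\operatorname{span}_{\R}\!\left(B_{cba},\ \star B_{cba}\right),
\end{equation}
a two-dimensional space, where $\star$ is the Hodge dual (so $\star B_{cba}$ is the bivector of the $2$-plane orthogonal to the triangle).

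Next comes the only genuinely geometric input. Fix a vertex $a$ and an edge $(a,f)$, and look at the three triangles $(a,f,e)$, $e\neq a,f$, sharing it. Non-degeneracy of the $4$-simplex makes $\{\gamma_{ea}:e\neq a\}$ a basis of $\R^{1,3}$, so the tangential generators $\gamma_{fa}\wedge\gamma_{ea}$ of the three fixed subspaces span $S_f:=\gamma_{fa}\wedge\R^{1,3}$, while the orthogonal generators $\star(\gamma_{fa}\wedge\gamma_{ea})$ span $S_f':=\bigwedge^2(\gamma_{fa}^{\perp})$. As $\gamma_{fa}$ is spacelike, $\gamma_{fa}\notin\gamma_{fa}^{\perp}$, so $\omega\mapsto\gamma_{fa}\wedge\omega$ annihilates $S_f$ but is injective on $S_f'$; hence $S_f\cap S_f'=0$ and $S_f\oplus S_f'=\bigwedge^2\R^{1,3}$ by dimension. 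Consequently the six vectors $\{\gamma_{fa}\wedge\gamma_{ea},\ \star(\gamma_{fa}\wedge\gamma_{ea})\}_{e\neq a,f}$ form a basis, so the three two-dimensional fixed subspaces attached to the faces through $(a,f)$ are in direct sum. Now put $W_{de}:=U_{da}-U_{ea}$; these obey the cocycle relation $W_{de}+W_{eg}+W_{gd}=0$ for any distinct $d,e,g\neq a$. For a triple $\{d,e,g\}\subseteq\{1,\dots,5\}\setminus\{a\}$ with complementary singleton $\{f\}$, the hypothesis places $W_{de},W_{eg},W_{gd}$ in the fixed subspaces of $(a,g,f),(a,d,f),(a,e,f)$, all of which share the edge $(a,f)$; by the directness just established, the cocycle relation forces $W_{de}=W_{eg}=W_{gd}=0$. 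Letting $f$ range over its four possible values covers every pair, so all $W_{de}$ vanish, $U_{da}$ is independent of $d$, and we set $U_a:=U_{da}$.

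The step I expect to be the crux is the identification of $\ker(\Ad_{G_{cba}}-\mathrm{id})$: one must be certain this kernel is exactly two-dimensional and not larger. This is precisely where both conclusions of the preceding lemma enter — that $G_{cba}$ is the exponential of the \emph{simple} spacelike bivector $\gamma_{ba}\wedge\gamma_{ca}$ (so it is a diagonalizable rotation whose adjoint has a one-dimensional complex axis) and that $G_{cba}\neq 1$ (so the remaining eigenvalues differ from $1$). Once this is pinned down, the rest is linear algebra in $\bigwedge^2\R^{1,3}$ powered by the single fact that at a non-degenerate vertex the four edge directions form a frame, with the combinatorics of which triangle contains which edge doing the bookkeeping, much as in the flat case of \cite{Kaminski:2019dld}.
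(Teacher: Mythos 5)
Your proposal is correct and follows essentially the same route as the paper: the paper likewise reduces to the differences $V_i=U_{da}-U_{ea}$ summing to zero around a fixed edge at $a$, identifies $\ker(\Ad_{G_{cba}}-\mathrm{id})=\operatorname{span}(B,\ast B)$ (its Lemma \ref{lm:Biv-1}), and concludes via the linear independence of the six bivectors $\gamma_{fa}\wedge\gamma_{ea},\ \ast(\gamma_{fa}\wedge\gamma_{ea})$ (its Lemma \ref{lm:Biv-2}). The only difference is cosmetic: you establish the two-dimensionality of the fixed subspace by an eigenvalue count of the adjoint action, whereas the paper does it by contracting with the preserved timelike normal.
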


We will base the proof on some properties of bivectors. 

\begin{lemma}\label{lm:Biv-1}
Suppose the bivector $U \in {\bigwedge} ^2 \R^{1,3}$ satisfies
\begin{equation}
\Ad_{e^{B}}U=U
\end{equation}
for a simple\footnote{A bivector $B$ is called simple if there exist two vectors $u,v$ such that $B=u\w v$. } spacelike bivector $B$ such that $e^{B}\not=1$. Then
\begin{equation}
U=\alpha B+\beta \ast B
\end{equation}
with $\alpha,\beta\in\R$. 
\end{lemma}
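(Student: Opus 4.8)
The statement says: if a spacelike simple bivector $B$ with $e^B \neq 1$ satisfies $\Ad_{e^B} U = U$, then $U$ lies in the real span of $B$ and $\ast B$. My approach is to decompose the whole exterior square $\bigwedge^2 \R^{1,3}$ (which is $6$-dimensional) into the eigenspaces of the linear operator $\Ad_{e^B}$ and show that the fixed subspace is exactly $\mathrm{span}_\R\{B, \ast B\}$.

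\textbf{Step 1: Normal form for $B$.} Since $B$ is a simple spacelike bivector, $B = u \wedge v$ with $u, v$ spanning a spacelike $2$-plane $\Pi$. After an $\SO_+(1,3)$ change of frame I can assume $\Pi$ is the $e_2$--$e_3$ plane, so $B = \mu\, e_2 \wedge e_3$ for some real $\mu \neq 0$ (nonzero since $e^B \neq 1$; more precisely I should record that $e^B$ is a rotation by angle $\mu$ in $\Pi$, so $e^B \neq 1$ forces $\mu \notin 2\pi\Z$). Then $\ast B$ is proportional to $e_0 \wedge e_1$, the ``orthogonal'' timelike-containing plane. Because $\Ad$ is equivariant under the frame change and $\ast$ is $\SO_+(1,3)$-equivariant, it suffices to prove the claim in this normal form.

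\textbf{Step 2: Diagonalize $\Ad_{e^B}$ on bivectors.} Using the basis $\{e_0 \wedge e_1,\ e_2 \wedge e_3,\ e_0\wedge e_2,\ e_0 \wedge e_3,\ e_1 \wedge e_2,\ e_1 \wedge e_3\}$ of $\bigwedge^2\R^{1,3}$, the rotation $e^B$ by angle $\mu$ in the $e_2$--$e_3$ plane fixes $e_0$ and $e_1$, hence acts trivially on $e_0\wedge e_1$ and $e_2 \wedge e_3$, and acts as a rotation by $\mu$ on each of the two-dimensional blocks $\mathrm{span}\{e_0\wedge e_2, e_0\wedge e_3\}$ and $\mathrm{span}\{e_1\wedge e_2, e_1\wedge e_3\}$. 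So the eigenvalues of $\Ad_{e^B}$ are $1$ (with multiplicity $2$, eigenspace $\mathrm{span}\{e_0\wedge e_1, e_2\wedge e_3\}$) and $e^{\pm i\mu}$ (each with multiplicity $2$). Since $\mu \notin 2\pi\Z$, the eigenvalues $e^{\pm i\mu}$ are $\neq 1$, so the real fixed subspace is precisely $\mathrm{span}_\R\{e_0 \wedge e_1,\ e_2 \wedge e_3\} = \mathrm{span}_\R\{\ast B,\ B\}$. Therefore $\Ad_{e^B}U = U$ forces $U = \alpha B + \beta \ast B$ with $\alpha,\beta \in \R$, as claimed.

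\textbf{Main obstacle.} There is no deep obstacle here; the lemma is essentially linear algebra once the normal form is fixed. The one point requiring a little care is the hypothesis bookkeeping: I must use simplicity and spacelikeness of $B$ to get the clean rotation normal form (a non-simple or mixed-signature $B$ would generate a boost-rotation with a different eigenvalue pattern), and I must use $e^B \neq 1$ to exclude the degenerate case $\mu \in 2\pi\Z$ in which $\Ad_{e^B} = \mathrm{id}$ and $U$ could be arbitrary. I would also note explicitly that the identification $\ast B \propto e_0 \wedge e_1$ depends only on the orientation conventions for the Hodge star on $\bigwedge^2 \R^{1,3}$, so the conclusion is convention-independent up to the sign of $\beta$, which is harmless since $\beta$ ranges over all of $\R$.
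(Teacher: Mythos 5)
Your proof is correct and follows essentially the same route as the paper: both arguments reduce the claim to computing the fixed subspace of $\Ad_{e^B}$ acting on $\bigwedge^2\R^{1,3}$ and showing that, when the rotation is nontrivial, this subspace is exactly $\operatorname{span}_\R\{B,\ast B\}$. The only difference is presentational --- the paper works invariantly with the unit timelike vector $n$ annihilating $B$ and the equivariant identification $W\mapsto(n\llcorner W,\,n\llcorner\ast W)$ of bivectors with two copies of the Euclidean $3$-space fixed by a rotation only along its axis, whereas you pick an adapted orthonormal frame and read off the eigenvalue block structure of $\Ad_{e^B}$ directly.
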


\begin{proof}
As $B$ is simple and spacelike, there exists a unit timelike vector $n$ such that
\begin{equation}
n\llcorner B=0\,,
\end{equation}
and so $e^Bn=n$. We introduce a subspace $V=\{v\in \R^{1,3}\colon v\cdot n=0\}$. It is an Euclidean subspace $\R^3$. Moreover, we can regard $e^B$ as an element in $\SO(V)$ which will be denoted by $O$,
\begin{equation}
O\in \SO(V)\,.
\end{equation}
We can identify the space of bivectors with $V\oplus V$ by the map
\begin{equation}\begin{split}
\phi =(\phi_+,\phi_-) \colon {\bigwedge}^2 \R^{1,3}&\rightarrow V\oplus V\\
 W&\mapsto (\phi_+(W),\phi_-(W))=(n\llcorner W,n\llcorner \ast W)\,.
\end{split}\end{equation}
As $e^Bn=n$ and the rest of the operations is $\SO_+(1,3)$ invariant, the decomposition is equivariant to
\begin{equation}
\phi_\pm(e^BW)=O\phi_\pm(W)\,.
\end{equation}
The only vector preserved by $O$ is its axis of rotation $h=\phi_-(B)$, thus the space of preserved bivectors is given by
\begin{equation}
\phi^{-1}(-\beta\phi_-(B),\alpha\phi_-(B))=\alpha B+\beta\ast B
\end{equation}
for $\alpha$, $\beta\in \R$ arbitrary.
\end{proof}

The second result was proven in \cite{Kaminski:2019dld} (Lemma 20) but not stated in this generality:

\begin{lemma}\label{lm:Biv-2}
Suppose that $v_1,v_2,v_3$ and $e$ are linearly independent vectors in $\R^{1,3}$ and $e$ is spacelike. Then
\begin{equation}
v_i\wedge e,\quad \ast (v_i\wedge e),\ i=1,\ldots, 3
\end{equation}
are linearly independent bivectors.
\end{lemma}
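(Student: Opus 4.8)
The plan is to exploit the linear independence of the spacelike vector $e$ from the $v_i$'s in order to reduce the claimed linear independence of the six bivectors to a statement purely about $V\oplus V$, exactly as in the proof of Lemma \ref{lm:Biv-1}. First I would pick a unit timelike vector $n$ that is orthogonal to $e$; this is possible precisely because $e$ is spacelike, so $e^\perp$ has Lorentzian signature and contains timelike vectors. Using this $n$, I would invoke the same decomposition $\phi=(\phi_+,\phi_-)\colon \bigwedge^2\R^{1,3}\to V\oplus V$ with $V=n^\perp\cong\R^3$ and $\phi_\pm(W)=(n\llcorner W,\, n\llcorner\ast W)$, which is a linear isomorphism.

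Next I would compute $\phi$ on the six bivectors in question. Write each $v_i=\lambda_i n+w_i$ with $w_i\in V$, and note $e\in V$ already. Then $v_i\wedge e=\lambda_i\, n\wedge e + w_i\wedge e$, so $\phi_+(v_i\wedge e)=n\llcorner(v_i\wedge e)$ picks out the $n\wedge(\cdot)$ component and gives (up to normalization) a multiple of $e$, namely $\lambda_i e$ — actually more carefully $\phi_+(v_i\wedge e)= (n\cdot n) e - (n\cdot e) v_i=\pm e$ up to the sign of $n\cdot n$, while $\phi_-(v_i\wedge e)=n\llcorner\ast(v_i\wedge e)$. Since $\ast$ intertwines $\phi_+$ and $\phi_-$ appropriately, the key observation is that $\phi_+(v_i\wedge e)$ is always proportional to $e$, whereas $\phi_+(\ast(v_i\wedge e))=\phi_-(v_i\wedge e)$ is, up to sign, the cross product $w_i\times_V e$ in the Euclidean space $V$ (identifying $\ast$ on $V$ with the $3$-dimensional Hodge star). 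So in the $V\oplus V$ picture the six images are, schematically, $(c_i e,\, d_i)$ and $(-d_i,\, c_i e)$ for suitable scalars $c_i$ and vectors $d_i=w_i\times_V e$.

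The heart of the argument is then to show these six elements of $V\oplus V$ are linearly independent. Because $v_1,v_2,v_3,e$ are linearly independent in $\R^{1,3}$, the three vectors $w_1,w_2,w_3$ together with $e$ span $V$ (they are $4$ vectors in the $3$-space $V$, and their only relation must involve $n$, but $e$ already has no $n$-component, forcing $w_1,w_2,w_3,e$ to span $V$; more precisely the projections modulo $e$ of $w_1,w_2,w_3$ span $V/\langle e\rangle$, a $2$-plane). Hence $d_i=w_i\times_V e$ span the $2$-plane $e^{\perp_V}$ inside $V$ — and at least two of them, say $d_1,d_2$, are linearly independent, while $d_3$ is a combination of them. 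A suitable rechoice (or just tracking the relation) reduces everything to checking independence of $(c_i e, d_i)$ and $(-d_i, c_i e)$; projecting onto the first factor modulo $e$ and onto the second factor modulo $e$ kills the $c_i e$ parts and leaves the $d_i$, which span a $2$-plane, and a short bookkeeping argument (the first-factor span of the $d_i$ and the second-factor span of the $d_i$ together with the two copies of $\langle e\rangle$ exhaust a $6$-dimensional subspace) yields independence. I expect the main obstacle to be precisely this last bookkeeping: one must be careful that the scalars $c_i$ and the vectors $d_i$ interact correctly and that the potential relation among $d_1,d_2,d_3$ does not collapse the count — but this is exactly the computation already carried out in \cite{Kaminski:2019dld}, Lemma 20, so I would either cite it directly or reproduce its few lines, noting that the only new input here is the choice of $n\perp e$, which is available because $e$ is spacelike.
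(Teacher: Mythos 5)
Your strategy is workable, but it is both heavier than necessary and left unfinished at the one point where the content of the lemma actually lives. The paper's own proof does not introduce a timelike $n$ or the splitting $\phi=(\phi_+,\phi_-)$ at all: it simply takes a putative relation $B=\sum_i\alpha^i\, v_i\wedge e+\ast\sum_i\beta^i\, v_i\wedge e=0$ and contracts first $B$ and then $\ast B$ with $e$. The two identities that make this work are $e\llcorner\ast(v\wedge e)=0$ for every $v$, and $e\llcorner(v\wedge e)=0\Leftrightarrow v\in\langle e\rangle$ (this is where spacelikeness of $e$ enters, via $e\cdot e\neq 0$). Contracting $B=0$ with $e$ kills the $\ast$-terms and yields $\sum_i\alpha^i v_i\in\langle e\rangle$, hence $\alpha^i=0$ by the hypothesis of independence; contracting $\ast B$ with $e$ (using $\ast^2=-1$) gives the same for the $\beta^i$. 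That is the entire proof — three lines, no choice of frame, no case analysis on the $d_i$.

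In your version, the decisive step is precisely the ``bookkeeping'' you defer. The parenthetical justification you offer — that the spans of the $d_i$ in each factor together with the two copies of $\langle e\rangle$ exhaust a six-dimensional space — does not establish independence of the six specific vectors $(\lambda_i e, d_i)$ and $(d_i,-\lambda_i e)$: since $d_1,d_2,d_3$ span only the two-plane $e^{\perp}\cap V$, there is a nontrivial relation $\sum_i\mu^i d_i=0$, and you must show that the corresponding combination $\sum_i\mu^i(\lambda_i e,d_i)=(\sum_i\mu^i\lambda_i)\,(e,0)$ is nonzero, i.e.\ that $\sum_i\mu^i\lambda_i\neq 0$. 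This is exactly the re-encoding of the hypothesis that $v_1,v_2,v_3,e$ are independent (if $\sum_i\mu^i\lambda_i=0$ then $\sum_i\mu^i v_i+\nu e=0$ would be a nontrivial relation), and it is the point your sketch flags but does not prove. So the route can be completed, but as written it has a gap at its core; I would recommend replacing it with the direct contraction argument, which also makes transparent the only role played by the spacelikeness of $e$.
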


\begin{proof}
Let us notice an identity for any vector $v$
\begin{equation}
e\llcorner \ast (v\wedge e)=0\,.
\end{equation}
Moreover, for an arbitrary vector $v$,
\begin{equation}
e\llcorner (v\wedge e)=0\Longleftrightarrow v=\gamma e\,.
\end{equation}
Consider a bivector
\begin{equation}
B=\sum_i \alpha^iv_i\wedge e+\ast\sum_i \beta^iv_i\wedge e\,.
\end{equation}
Suppose that $B=0$. Contracting it with $e$, we obtain
\begin{equation}
\sum_i \alpha^iv_i=\gamma e\,.
\end{equation}
Due to the independence of $v_1,v_2,v_3$ and $e$, it means that $\alpha^i=0$ for $i=1,2,3$.

Contracting $\ast B$ with $e$, on the other hand, we obtain (due to $\ast^2=(-1)$ in the Lorentzian signature)
\begin{equation}
\sum_i \beta^iv_i=\gamma' e\,.
\end{equation}
Therefore, $\beta^i=0$ for $i=1,2,3$. The linear independence is hence proven.
\end{proof}

\begin{proof}[Proof of Proposition \ref{prop:U-U} ]
Choose $a,b$ and let $c,d,e$ be the remaining vertices. We have the following identities.
\begin{align}
&\Ad_{G_{cba}}U_{da}-U_{da}=\Ad_{G_{cba}}U_{ea}-U_{ea}\,,\\
&\Ad_{G_{dba}}U_{ea}-U_{ea}=\Ad_{G_{dba}}U_{ca}-U_{ca}\,,\\
&\Ad_{G_{eba}}U_{ca}-U_{ca}=\Ad_{G_{eba}}U_{da}-U_{da}\,.
\end{align}
We introduce bivectors $V_i$ for $i\in\{c,d,e\}$:
\begin{equation}
V_c=U_{da}-U_{ea},\quad V_d=U_{ea}-U_{ca},\quad V_e=U_{ca}-U_{da}\,.
\end{equation}
They satisfy $V_c+V_d+V_e=0$ and 
\begin{equation}
\Ad_{G_{cba}}V_{c}=V_c,\quad \Ad_{G_{dba}}V_{d}=V_d,\quad \Ad_{G_{eba}}V_{e}=V_e\,.
\end{equation}
This means there exist constants $\alpha^i,\beta^i$, $i\in\{c,d,e\}$ such that (due to Lemma \ref{lm:Biv-1})
\begin{equation}
V_i=\alpha^iB_i+\beta^i\ast B_i,\quad i\in\{c,d,e\}\,,
\end{equation}
where $B_i=\gamma_{ia}\wedge\gamma_{ba}$ for $i\in\{c,d,e\}$. Therefore,
\begin{equation}
\sum_{i\in\{c,d,e\}}\alpha^iB_i+\beta^i\ast B_i=0\,.
\end{equation}
Furthermore, due to non-degeneracy of the $4$-simplex, $\gamma_{ca},\gamma_{da},\gamma_{ea}$ and $\gamma_{ba}$ are linearly independent. 
Lemma \ref{lm:Biv-2} now shows that $\alpha^i=\beta^i=0$ for $i\in\{c,d,e\}$, leading to
\begin{equation}
0=V_c=U_{da}-U_{ea},\Longrightarrow U_{da}=U_{ea}\,.
\end{equation}
As the choice of $a,b$ and $c$ vertices was arbitrary,
\begin{equation}
U_{ba}=U_{ca}\text{ for every } a,b,c \text{ distinct}.
\end{equation}
This shows that there exists $U_a$, $a=1,\ldots, 5$ such that
\begin{equation}
U_{ba}=U_a
\end{equation}
for every $b\not=a$.
\end{proof}

Let $(g_{ab})\in \cM_{\Flat}(M_3,\SLCC)$ belong to ${\mathcal L}_{\rm coh}$. Conditions for vanishing of the tangent vector (to the space of flat connections on $M_3$) $t$ projected by $\pi_{\cS_d}\iota$ for $d=1,\ldots, 5$ is given by the existence of $u_{ab}$ such that
\begin{equation}
\delta_t g_{ab}=u_{da}g_{ab}-g_{ab}u_{db}
\end{equation}
for every $(a,b)\in L_2 ^d$. By the chain rule, for $g_{cba}=g_{ac}g_{cb}g_{ba}$, we have
\begin{equation}
\delta_t g_{cba}=u_{da}g_{cba}-g_{cba}u_{da}\,.
\end{equation}
This means that
\begin{equation}
\Ad_{g_{cba}}u_{ea}-u_{ea}=\Ad_{g_{cba}}u_{da}-u_{da}
\end{equation}
for distinct $a,b,c,d,e$. We can identify $\slcc$ with the space of bivectors ${\bigwedge}^2 \R^{1,3}$ (Lie algebras of $\SLCC$ and $\SO_+(1,3)$ are identical), and the adjoint action of $\SLCC$ factorizes through $\SO_+(1,3)$. Thus, 
\begin{equation}
\Ad_{G_{cba}}U_{da}-U_{da}=\Ad_{G_{bca}}U_{d'a}-U_{d'a}
\end{equation}
By Proposition \ref{prop:U-U}, $U_{ba}=U_a$. So, using the identification of bivectors and $\slcc$, there exist $u_f$, $f\in L_1$ such that
\begin{equation}
\delta_t g_{ab}=u_{a}g_{ab}-g_{ab}u_{b}
\end{equation}
for $a$, $b$ distinct. {\Blue Theorem \ref{thm:geometric} (Theorem \ref{thm:geometric-formal}) now follows from Lemma \ref{lm:non-deg-1}.}

%%%%%%%%%%%%%%%%%%%%%%%%%%%%%%%%%%%%%%%%%%%%%%%%
\section{Semiclassical analysis of the $\Lambda$-SF model}
\label{sec:critical_spinfoam}
%%%%%%%%%%%%%%%%%%%%%%%%%%%%%%%%%%%%%%%%%%%%%%%%%%%%%%%%%%%%%%%%%%%%%%%%%%%%%%%%%%%%%%%%%%%%%%%%%%

\blue{
Having established the geometric criterion for non-degeneracy in Section \ref{sec:classical}, we now apply this framework to the concrete definition of the $\Lambda$-SF vertex amplitude $\cA_v$. This amplitude is defined as a scalar product between a boundary coherent state $\Psi_{k,\rm coh}$ and a bulk state $U_k \cZ_{k,M_3}$ within the Hilbert space of Chern-Simons theory, denoted as $\langle \Psi_{k,\rm coh}, U_k \cZ_{k,M_3}\rangle_{\CS}$. Following the formulation in \cite{Han:2021tzw, Han:2025mkc}, the amplitude can be reduced in a semiclassical approximation to an oscillatory integral suitable for stationary phase analysis.
We will consider in this paper the question of non-degeneracy of the Hessian, leaving the overall problem of applicability of the stationary phase approximation to future research.
}

\blue{
The central task of this section is to bridge the gap between the specific coordinates used to define the quantum model and the geometric moduli space analyzed in the previous section. The $\Lambda$-SF model is constructed using Fock-Goncharov and Fenchel-Nielsen (FG-FN) coordinates \cite{Fock:2003alg,Dimofte:2013lba}. To invoke the non-degeneracy result of Theorem \ref{thm:geometric}, we must demonstrate that the real Lagrangian parts defined by the action of the $\Lambda$-SF model (see Section \ref{sec:stationary_phase}) correspond, via a local diffeomorphism, to the geometric submanifolds $\cL_{\coh}$ and $\cL_{M_3}$ in the space of flat connections, which then allows us to use Proposition \ref{prop:clean-o} to prove the non-degeneracy of the Hessian.
}

\blue{
We proceed by first reviewing the Chern-Simons setup in Section \ref{sec:CS_theory}. Section \ref{sec:FN-FG-local} introduces an important lemma related to the FG-FN coordinates, used in a later context. Then, using the real Lagrangian framework from Section \ref{sec:stationary_phase} (specifically Proposition \ref{prop:clean-o}), we analyze the real Lagrangian parts of the action of the model in Section \ref{sec:Hessian-analysis}. Finally, we show in Section \ref{subsec:proof_Hessian} that their image in the symplectic space of flat connections aligns with the geometric structures of Section \ref{sec:classical}, thereby proving that the Hessian is non-degenerate for the $\Lambda$-SF model.
}

%%%%%%%%%%%%%%%%%%%%%%%%%%%%%
\subsection{Vertex amplitude of the $\Lambda$-SF model}
\label{sec:CS_theory}

\blue{In this section, we review some facts about the vertex amplitude defined in \cite{Han:2025mkc}, which are important for the semiclassical analysis. In particular, the action from \cite{Han:2021tzw, Han:2025mkc} will be introduced.}

The vertex amplitude is defined as a constrained partition function of complex Chern-Simons theory on  $M_3$ with gauge group $\SL(2,\bC)$.
The Chern-Simons action is
\be
S_{\CS}[A,\Ab]=\f{t}{8\pi}\int_{M_3} \tr\left[A\w\rd A+ \f23 A\w A\w A\right]+ \f{\tb}{8\pi}\int_{M_3} \tr\left[\Ab\w\rd \Ab+ \f23 \Ab\w \Ab\w \Ab\right]\,,
\label{eq:CS_action}
\ee
where \blue{$A$ is a connection and $\Ab$ is its complex conjugate,}  $t=ks$ is a complex Chern-Simons coupling constant with $k=\f{12\pi}{\ell_\p^2\gamma|\Lambda|}\in\Z_+$ being the integer Chern-Simons level and 
\begin{equation}
 s=1+i\gamma,\quad \gamma\in \R.   
\end{equation}
Here, $\gamma\in \R$ is the Barbero-Immirzi parameter. We denote $\tb$ the complex conjugate of $t$.
 At the semiclassical limit, the Planck length $\ell_{\rm p}\rightarrow 0$, hence $k\rightarrow \infty$. 
This action comes from a formal path integral of the Holst-BF action for 4D gravity with a cosmological constant $\Lambda\neq 0$ after integrating the $B$-field \cite{Haggard:2015sl}. 

This informal definition was made concrete in \cite{Han:2021tzw} then improved in \cite{Han:2025mkc} using a definition of $\SLCC$ Chern-Simons theory developed in a series of works \cite{Gaiotto:2009hg,Dimofte:2011gm,Dimofte:2011ju,Dimofte:2013lba,Dimofte:2014zga,andersen2014complex}, {\Blue  where the Hilbert space associated to the boundary $\Sigma$}, as well as the generalized state $\cZ_{k,M_3}$ corresponding to the Lagrangian submanifold defined by flat connections on $M_3$  were introduced.
The vertex amplitude can be written as
\be
\cA_v=\langle \Psi_{k,\rm coh}, U_k \cZ_{k,M_3}\rangle_{ \CS}\,.
\label{eq:def-Av}
\ee
 where $\langle\cdot,\cdot\rangle_{\CS}$ is a scalar product in Chern-Simons Hilbert space and $\Psi_{k,\rm coh}$ is a family of generalized states (labeled by $k$) introduced in \cite{Han:2021tzw, Han:2025mkc} encoding the geometry of a chosen $4$-simplex. The operators $U_k$ and $\cZ_{k,M_3}$ are parts of the Chern-Simons theory $M_3$. 
 {\Blue We will now state properties of these objects needed in our paper. The details of their definition can be found in \cite{Han:2021tzw, Han:2025mkc}.}
 
\subsubsection{Fock-Goncharov-Felsen-Nielsen coordinates}
\label{subsubsec:FGFN}

We will now describe how the Darboux coordinates can be introduced to describe flat connections on the $\Gamma_5$ graph \cite{Han:2025mkc}. \blue{They are complex coordinates as introduced in Section \ref{sec:complex-phase-space} with $s=1+i\gamma$, and here complex dimension $N=15$.} Using the terminology of \cite{Dimofte:2011ju}, we divide the surface into the so-called cusp boundary component, which consists of a disjoint sum of annuli over every edge and the so-called geodesic boundary component, which consists of a disjoint sum of 4-holed spheres. There are 10 annuli and 5 spheres. We first introduce a framing along every part of the cusp boundary component.  It is a choice of $1$-dimensional subspace of spinors for every annulus, which is preserved by parallel transport on that annulus. A framing is a choice of a 1-dimensional spinor subspace (a line) for every annulus, which is invariant under parallel transport along that annulus. A flat connection, paired with such a framing, is called a framed flat connection. The flat connection, together with the choice of a framing, defines a framed connection. There is a natural map from the space of framed connections to the space of flat $\SLCC$ connections. 

One can check that this map is a $2^{10}$ covering map on a large subset. Indeed, if the holonomy around an annulus has trace non-equal $\pm 2$, then there exist exactly two spinor eigen-subspaces preserved by this holonomy. If this is the situation for every annulus, there are exactly two choices per annulus of the framing and the map is locally a covering. 

The Fock-Goncharov construction, augmented by a choice of Fenchel-Nielsen coordinates for annuli, provides $\C^\ast$ coordinates on a dense open subset of framed connections. Taking the logarithm of these variables, we arrive (under a suitable choice provided by \cite{Han:2025mkc}) at the Darboux coordinates. We emphasize that these Darboux coordinates do not describe framed connections, but there is an infinite covering map given by exponentiation to the open dense subset of framed connections. Taking into account the further map into $\cP_\Sigma$, we arrive at the description of the phase space.

We now shortly describe the  (complex) Darboux coordinates introduced in \cite{Han:2025mkc}.
Using notations of \cite{Han:2025mkc}, the first ten elements of them are called the {\it Fenchel-Nielson coordinates} associated to the annuli of $M_3$, denoted as 
\be\label{eq:P-Q}
{\Blue \lb P_I,Q_I\rb_{I=1,\cdots,10}=\{T_{ab},L_{ab}\}_{a<b}\,.}
\ee 
where $L_{ab}$, called the (Fenchel-Nielsen) length, is the logarithm of the eigenvalue of holonomy around annulus for the chosen spinor framing at the annulus $(ab)$ and $T_{ab}$ is a conjugate twist coordinate \cite{Dimofte:2013lba}. 

The last five elements are called the {\it Fock-Goncharov coordinates} associated to the 4-holed spheres $\cS_a$, denoted as
\be\label{eq:S_a-variables}
\blue{ \lb P_{a+10}, Q_{a+10}\rb_a = \lb Y_a,X_a\rb_a\,,\quad a=1,\cdots,5\,.}
\ee
Let $\FF=\C^{30}$ be the space of the logarithmic FG-FN coordinates.
As described above, there is a map preserving the symplectic form
\begin{equation}
\pi_{FG}\colon \FF\rightarrow \cM_{\Flat}(\Sigma,\SLCC)\,,
\end{equation}
{\Blue The map is neither injective nor surjective.}

We can also describe the space $\cM^0(\cS_a,\SLCC)$ with the above coordinates. \blue{Namely, a local coordinate system is provided by}
\begin{equation}
L_{ab},\ b\not=a,\quad X_a,Y_a\,.
\end{equation}
We use here the convention that, for $a>b$, we define the length variables $L_{ab}=-L_{ba}$. 
The projection $\pi_{\cS_a}$ is given by projecting on these variables along the remaining variables in the local Darboux coordinate system on $\cP_{\Sigma}$. {\Blue It can be stated as follows:

\begin{lemma}\label{lm:FG-s-a}
Let $\Pi_a\colon \C^{30}\rightarrow \C^6$ be a projection
\begin{equation}
\Pi_a(\form{P},\vec{Q})=(\{L_{bc}\}_{ (b,c)\in L^a_2}, Y_a,X_a)\,.
\end{equation}
Then there exists a unique map $\pi^a_{FG}\colon \C^6\rightarrow \cM^0_{\Flat}(\cS_a,\SLCC)$ such that $\pi^a_{FG}\Pi_a=\pi_{\cS_a}\pi_{FG}$.
\end{lemma}

}

In the construction of the partition function, another set of Darboux coordinates, denoted as $\lb \form{\Pi}, \vec{\Phi}\rb$ is relevant.  The coordinates $\lb \form{\Pi}, \vec{\Phi}\rb$ are obtained through an ideal triangulation of $M_3$. In detail, $M_3$ can be decomposed into five ideal octahedra denoted as $\Oct(a)$, $a =1,\ldots, 5$.
Each ideal octahedron is obtained by the intersection of $M_3$ with $T_a$,  see fig.\ref{fig:oct_Ta}.  For every ideal octahedron $\Oct(a)$, we have a set of 6 Fock-Goncharov variables,
\begin{equation}\label{eq:variables-Oct}
\lb \Pi_{3a-i},\Phi^{3a-i}\rb_{i=0,1,2}\,.
\end{equation}
This division allows us to write the phase space $\FF$ as another Cartesian product:
\begin{equation}
\FF=\blue{\bigtimes_{a=1}^5}\FF_{\Oct(a)}^{\times 5}\,,
\end{equation}
where $\FF_{\Oct(a)}=\C^6$ is the phase space corresponding to $\Oct(a)$ with symplectic coordinates \eqref{eq:variables-Oct}. 

{\Blue  For our analysis, a third set of Darboux coordinates will be useful (see Section \ref{sec:Hessian-analysis}). It differs from $(\form{P},\vec{Q})$ in their first 10 elements:
\begin{equation}\label{eq:variable'}
\lb P_I',Q_I'\rb_{I=1,\cdots,10}=\{L_{ab},-T_{ab}\}_{a<b},\ \lb P_{a+10}', Q_{a+10}'\rb_a = \lb Y_a,X_a\rb_a\,,\quad a=1,\cdots,5.
\end{equation}
Any two coordinate systems $(\form{P},\vec{Q})$, $(\form{P}',\vec{Q}')$ and $(\form{\Pi},\vec{\Phi})$ are related by an affine transformation preserving the symplectic form \eqref{eq:Omega_CS}.
\begin{equation}
{\mathcal H}'(\form{P},\vec{Q})=(\form{P}',\vec{Q}'),\quad {\mathcal H}_{M_3}(\form{\Pi},\vec{\Phi})=(\form{P}',\vec{Q}')\,.
\label{eq:H'HM3}
\end{equation}
While the map ${\mathcal H}_{M_3}$ is described in \cite{Han:2025mkc}, the map ${\mathcal H}'$ can be read off by comparing \eqref{eq:variable'} with \eqref{eq:S_a-variables} and \eqref{eq:P-Q}.
}

\subsubsection{\blue{The Hilbert space, states and $U_k$ operator}}

The Hilbert space associated with $\cP_\Sigma$ is constructed by quantizing the logarithmic FG-FN Darboux coordinates. For a system with $N$ degrees of freedom where coordinates are split into positions $q_\C^i$ and momenta $p^\C_i$ for $i=1,\cdots,N$ (in the $\Lambda$-SF model, $N=15$), the corresponding Hilbert space is defined as:
\begin{equation}
\hilb_{N,k}=\otimes_{i=1}^N\hilb_k,\quad \hilb_k=L^2(\R)\otimes \C^k.
\end{equation}
The construction of the vertex amplitude involves two distinct sets of Darboux coordinates: one set $\lb \form{\Pi}, \vec{\Phi}\rb$ adapted to the ideal triangulation of the bulk $M_3$, and another set $\lb \form{P}, \vec{Q}\rb$ adapted to the boundary structure. 

More precisely, in the description using boundary coordinates $(\form{P},\vec{Q})$, the total Hilbert space factors into a product of 5 vertex Hilbert spaces and 10 edge Hilbert spaces. The boundary condition is encoded by a generalized coherent state $\Psi_{k,\coh}$ (in the sense of \cite{Han:2021tzw}). This state is not an element of the Hilbert space but a functional on a dense domain. It is constructed by imposing simplicity constraints and fixing the triangle areas in the edge Hilbert spaces to specific values $j_I$	\cite{Han:2025mkc,Pan:2025sut}:
\be
a_I=\f{3}{|\Lambda|}{\rm min}\left(\f{4\pi}{k}j_I,2\pi-\f{4\pi}{k}j_I\right)\,,\quad j_I=0,\f12,\cdots,\f{k-1}{2}\,.
\ee
These states are eigenstates of the holonomy operators corresponding to the length variables $L_{ab}$.

The Chern-Simons state (describing bulk dynamics) is defined using  $\lb \form{\Pi}, \vec{\Phi}\rb$ coordinates, which are associated with an ideal triangulation of $M_3$ into five ideal octahedra. The Hilbert space in this description is a tensor product of Hilbert spaces associated with every ideal octahedron. The state $\cZ_{k,M_3}$ is the generalized state implementing condition of bulk flatness. Explicitly, it is constructed using the quantum dilogarithm function. For details, we direct readers to \cite{Han:2021tzw, Han:2025mkc} and for general theory to \cite{Dimofte:2011gm}.

The two coordinate systems $(\form{P},\vec{Q})$ and $\lb \form{\Pi}, \vec{\Phi}\rb$ are related classically by an affine symplectic transformation ${\mathcal H}'\circ{\mathcal H}_{M_3}$. In the quantum theory, this transformation is implemented by a unitary operator $U_k$ (family of operators labelled by Chern-Simons level $k$), which maps the bulk Hilbert space to the boundary Hilbert space. This operator is a metaplectic transformation (uniquely defined up to a phase \cite{Dimofte:2011gm}) that identifies the two Hilbert spaces, thereby restoring the covariance of the quantum theory under the change of coordinates. The vertex amplitude is thus formed by mapping the bulk state $\cZ_{k,M_3}$ via  $U_k$ and contracting it with the boundary coherent state $\Psi_{k,\coh}$, giving \eqref{eq:def-Av}.

\medskip
\subsubsection{\blue{Poisson summation and asymptotic analysis}}
\label{sec:thm-1}

The vertex amplitude is not primarily given by the integral to which the method of stationary phase can be applied, due to additional discrete summations in the $\C^k$ part of the Hilbert space. To apply the stationary phase method, this sum must be converted into a continuous oscillatory integral. This is achieved using the Poisson summation formula \cite{Han:2021tzw,Han:2025mkc}. After discarding terms which decay faster than any power of $k$, the asymptotic analysis was reduced to the stationary phase method (after appropriate rescaling of variables) \cite{Han:2021tzw, Han:2025mkc}. The amplitude takes the form of an oscillatory integral governed by a total action $S_{tot}$, which consists of the following terms:
\begin{enumerate}
\item The coherent state action\footnote{See definition in Example \ref{ex:coherent}.} $S_{\coh}=\sum_{a=1}^5S_{a}(X_a)$: These terms enforce the boundary conditions. They satisfy the property $\Im S_{a}\geq 0$, with equality holding only at the specific boundary values $X_a^0$ corresponding to the geometry of the chosen 4-simplex. We denote the holomorphic derivative at these points by $Y_a^0:=\frac{s}{4\pi}\frac{\partial \Re S_a}{\partial X_a}$. 
\item The bulk action $S_{M_3}$: This is the sum of real actions associated with every ideal octahedron in the decomposition of $M_3$. This action is singular on a discrete set of points. However, the stationary points of the total action (corresponding to non-degenerate geometries) are disjoint from these singularities. \blue{We exclude these singular points from the domain of the action.}
\item  The action $S_U$: This is the generating function for the affine symplectic transformation related to the operators $U_k$ discussed above.
\end{enumerate}
The final action is given as a sum
\begin{equation}\label{eq:action-Han-Pan}
S_{tot}=-\sum_{a=1}^5\overline{S_{a}(Q_{10+a})}+S_U(\vec{Q}, \vec{\Phi})+S_{M_3}(\vec{\Phi})\,.
\end{equation}
The constraints on the edge Hilbert spaces are reflected in asymptotic analysis by fixing the length variables $L_{ab}$ to constant values.
The stationary points for the geometric configurations were analyzed in \cite{Han:2021tzw, Han:2025mkc}.

We can now state the main technical result of this paper, which serves as the formal version of Theorem \ref{thm-1}: 

\begin{theorem}\label{thm-1-formal}
The partial Hessian of action \eqref{eq:action-Han-Pan} with fixed $L_{ab}=L_{ab}^0$ variables is non-degenerate for a stationary point of the action if
\begin{equation}
m=(m_1,\ldots m_5),\quad m_a=\pi^a_{FG}\left(\{L_{bc}^0\}_{L_2^a}, Y_a^0,X_a^0\right)\, ,
\end{equation}
is a geometric boundary data.
\end{theorem}
In the rest of this section, we provide the proof of this theorem by analyzing the symplectic geometry of the FG-FN coordinates.

\subsection{\blue{FG-FN coordinates and the space of flat connections}}
\label{sec:FN-FG-local}

Recall from Section \ref{subsubsec:FGFN} that the FG-FN coordinates parametrize the space of framed flat connections, which serves as a covering space for the standard moduli space $\cM_{\Flat}(\Sigma,\SLCC)$. The relationship between the phase space $\FF$ (coordinatized by FG-FN variables) and the physical moduli space is given by the projection map $\pi_{FG}$. The following lemma establishes that this map behaves nicely in the neighborhood of the geometric configurations relevant to our main theorem. 

\begin{lemma}\label{lm:isom-FG}
Every geometric flat connection belongs to the smooth locus of $\cM_{\Flat}(\Sigma,\SLCC)$. Further, the map $\pi_{FG}\colon \FF\rightarrow \cM_{\Flat}(\Sigma,\SLCC)$ is a local diffeomorphism at every pre-image of a geometric flat connection. 
\end{lemma}

\begin{proof}
A point in  $\cM_{\Flat}(\Sigma,\SLCC)$ belongs to a smooth locus if and only if there exists no $1$-dimensional subspace preserved by parallel transport (see for example \cite{goldman1984symplectic}). We argue by contradiction. Suppose that such a subspace exists for a geometric flat connection. Let us restrict flat connection to $\cM^0_{\Flat}(\cS_a,\SLCC)$. As this connection can be put in a $\SU(2)$ gauge, the orthogonal complement of the subspace is also preserved by parallel transport. However, this means that there exists a gauge in which all group elements are in ${\rm U}(1)$, which is a contradiction with assumptions about geometric boundary data. Thus, the connection must lie in the smooth locus.

The second point now follows from the fact that the symplectic form in $\FF$
is a pull-back of the Atiyah-Bott symplectic form. Both forms are non-degenerate (in particular, the dimensions of the tangent spaces agree), thus no vector can be annihilated by $D\pi_{FG}$. Consequently, $\pi_{FG}$ is a local diffeomorphism.
\end{proof}

\subsection{\blue{Hessian analysis}}\label{sec:Hessian-analysis}

We will first simplify the analysis of the Hessian using Lemma \ref{lm:partial-Hessian-complex}. We introduce a new action $S_{tot}^{new}(\vec{Q}',\vec{Q},\vec{\Phi})$
\begin{align}\label{eq:S_tot-new-action}
S_{tot}^{new}&=-\sum_{a=1}^5\overline{S_{a}(Q_{a+10}')}+S_U(\vec{Q}, \vec{\Phi})+S_{M_3}(\vec{\Phi})+\nonumber\\
&-\sum_{I=1}^{10} \Re \left(\frac{s}{2\pi}{Q'}^I(Q^I-Q_0^I)\right)+\sum_{a=1}^{5} \Re \left(\frac{s}{2\pi}\Lambda_a(Q^{10+a}-{Q'}^{10+a})\right)
\end{align}
where $\vec{Q}_0$ are values of fixed $L_{ab}$ ($Q_0^I=-\f{4\pi i}{k}j_I$).
Using $-{Q'}^I$, $I=1,\ldots, 10$ and $\Lambda_a$, $a=1,\ldots, 5$ as Lagrange multipliers, we obtain by Lemma \ref{lm:partial-Hessian-complex} that the Hessian ${\bf H}(S_{tot}^{new})$ is non-degenerate if and only if the Hessian \eqref{eq:action-Han-Pan} is non-degenerate.

The action $S_{tot}^{new}$ can be written as
\begin{equation}
S_{tot}^{new}=-\overline{S_+}+S_{H}+S_-
\end{equation}
where we introduced
\begin{enumerate}
\item $S_+(\vec{Q}')=\sum_{a=1}^5S_a({Q'}^{10+a})-\sum_{I=1}^{10}\Re\left(\frac{s}{2\pi}{Q'}^IQ_{0}^I\right)$ with $Q_0^I=-\f{4\pi i}{k}j_I$;
\item $S_H(\vec{Q}',\vec{Q},\vec{\Phi},\form{\Lambda})=S_U(\vec{Q},\vec{\Phi})+S_{{\mathcal H}'}(\vec{Q}',\vec{Q},\form{\Lambda})$ where
\begin{equation}
S_{{\mathcal H}'}(\vec{Q}',\vec{Q},\form{\Lambda})=-\sum_{I=1}^{10}\Re\left(\frac{s}{2\pi}{Q'}^IQ^I\right)+\sum_{a=1}^{5} \Re \left(\frac{s}{2\pi}\Lambda_a(Q^{10+a}-{Q'}^{10+a})\right)\, ;
\end{equation}
\item $S_-(\vec{\Phi})=S_{M_3}(\vec{\Phi})$.
\end{enumerate}
Our first goal is to reduce the analysis of the Hessian using Proposition \ref{lm:U-asym} to a geometric question about symplectic transformations and real Lagrangian parts.

\subsubsection{\blue{Real Lagrangian part for the Chern-Simons action}}

We begin by analyzing the bulk action $S_-(\vec{\Phi})=S_{M_3}(\vec{\Phi})$ (describing Chern-Simons dynamics) and the symplectic transformation connected to the action $S_U(\vec{Q},\vec{\Phi})$. 

Recall that the action $S_U$ is obtained in the semiclassical limit from the operator $U_k$, which intertwines the bulk and boundary Hilbert spaces. Geometrically, this means $S_U$ should be the generating function for the symplectic transformation between the bulk coordinates $\lb \form{\Pi}, \vec{\Phi}\rb$ and the boundary coordinates $(\form{P},\vec{Q})$. The following lemma, proven (but not stated in this way) in \cite{Han:2021tzw}, makes this identification precise and relates the bulk real Lagrangian part to the geometric manifold $\cL_{M_3}$.

\begin{lemma}\label{lm:Han-Pan-Dimofte}
The following is true:
\begin{enumerate}
\item The action $S_U$ is quadratic, and it is the generating action for an affine symplectic transformation $\chi_{S_U}$. Under the complex polarization identification $\Iota$ (defined in \eqref{eq:Iota}), this transformation implements the change of Darboux coordinates ${\mathcal H}_{M_3}$ defined in \eqref{eq:H'HM3}:
\be
\Iota^{-1}\chi_{S_U}\Iota={\mathcal H}_{M_3}\,. 
\ee
\item Consider the real Lagrangian part ${\mathcal L}_{S_{M_3}}$ for the action $S_{M_3}$. Define its image under the coordinate transformation 
\be
\tilde{\mathcal L}_{M_3}:={\mathcal H}_{M_3}\Iota^{-1}({\mathcal L}_{S_{M_3}})\,.
\label{eq:L_M3}
\ee 
Then
\begin{equation}
\pi_{FG}(\tilde{\mathcal L}_{M_3})\subset {\mathcal L}_{M_3}\,.
\end{equation}
\end{enumerate}
\end{lemma}

\begin{proof}
The action $S_U$ is real and quadratic; the only non-trivial fact is to show that it defines an affine symplectic transformation. This is a consequence of Eq.(136)-(138) in \cite{Han:2021tzw}. 

For the second point, it is shown in \cite{Han:2021tzw} (Eq.(139)-(143) and the reasoning below) that the pre-image $\Iota^{-1}({\mathcal L}_{S_{M_3}})$ of ${\mathcal H}_{M_3}$ in the $(\form{\Pi},\vec{\Phi})$ logarithmic Fock-Goncharov coordinates corresponds exactly to the set of flat connections that extend continuously from the boundary $\Sigma$
 to the bulk $M_3$. The map $\pi_{FG}$ projects these coordinates to the moduli space $\cM_{\Flat}(\Sigma,\SL(2,\C))$, and by definition, the image of such extendable connections belongs to the Lagrangian $\cL_{M_3}$.
 \end{proof}

\begin{remark}
Both $\Iota^{-1}\left(\chi_{S_U}({\mathcal L}_{S_{M_3}})\right)$ and ${\mathcal L}_{M_3}$ are Lagrangian and $\pi_{FG}$ is a local diffeomorphism, thus in fact, the map $\pi_{FG}$ is a local diffeomorphism of this two submanifolds.
\end{remark}

\subsubsection{\blue{Real Lagrangian part for the boundary state}}

We now analyze the boundary action $S_+(\vec{Q}')$, which encodes the boundary conditions. As seen in \eqref{eq:S_tot-new-action}, this action factorizes into a sum over the 15 components of the graph (10 edges and 5 vertices):
\begin{equation}
\FF=\blue{\bigtimes_{I=1}^{15}}\FF_{I}\,,
\end{equation}
where $\FF_I=\C^2$ is the phase space of variables $(P_I',{Q'}^I)$.

For every 4-holed sphere, $I=a+10$ with $a=1,\ldots, 5$, we have a coherent state action $S_a({Q'}^{10+a})$. As established in Section \ref{sec:thm-1} (recall also Example \ref{ex:coherent}), the real Lagrangian part of a coherent state action consists of a single point fixed by the geometric data:
\begin{equation}
\Iota^{-1}({\mathcal L}^{r}_{S_{a}})=\Iota^{-1}({\mathcal L}^{\ro}_{S_{a}})=\{(Y_{a}^0,X_{a}^0)\}\,.
\end{equation}
For every annulus, $I=1,\ldots, 10$, the action is
\begin{equation}
S_I({Q'}^I)= -\Re\left(\frac{s}{2\pi}{Q'}^IQ_{0}^I\right)\,.
\end{equation}
Let us notice that the action is real and $\frac{\partial}{\partial {Q'}^I}\Re\left(\frac{s}{2\pi}{Q'}^IQ_{0}^I\right)=\frac{s}{4\pi}{Q'}^I$, thus the real Lagrangean part is equal
\begin{equation}
\Iota^{-1}\left({\mathcal L}^r_{S_I}\right)=\Iota^{-1}\left({\mathcal L}^{\ro}_{S_I}\right)=\left\{(P_I',{Q'}^I)\colon P_I=-\f{4\pi i}{k}j_I\right\}\,.
\end{equation}
This fixes momenta (to be purely imaginary and proportional to the triangle areas) while leaving the twist coordinates ($Q'_I$) unconstrained. The real Lagrangian part for the total action is a product of the corresponding real Lagrangian parts. It can be described as follows:
\begin{equation}
\Iota^{-1}\left({\mathcal L}^r_{S_{+}}\right)=\Iota^{-1}\left({\mathcal L}^{\ro}_{S_{+}}\right)=\{(\form{P}',\vec{Q}')\colon {Q'}^{a+10}=Q^{a+10}_\ast,\ a=1,\ldots, 5,\ P_I'=P_I^\ast,\ I=1,\ldots 15\}\,,
\end{equation}
where $P_I^\ast=-\f{4\pi i}{k}j_I$ for $I\leq 10$ and $Q_\ast^{a+10}=X_a^0$, $P_{a+10}^\ast=Y_a^0$, $a=1,\ldots, 5$ are fixed.

To relate it to the FG-FN coordinates $(\form{P},\vec{Q})$, we must map it through the affine transformation ${\mathcal H}'$ (defined in \eqref{eq:H'HM3}) that relates the $(\form{P}',\vec{Q}')$ system to the $(\form{P},\vec{Q})$ system. 
We introduce
\begin{equation}
\tilde{\mathcal L}_{\coh}:={{\mathcal H}'}^{-1}\left(\Iota^{-1}\left({\mathcal L}^{\ro}_{S_{+}}\right)\right)\,.
\label{eq:L_coh}
\end{equation}
The following lemma confirms that this manifold corresponds to the geometric boundary data defined in Section \ref{sec:classical}.

\begin{lemma}\label{lm:coh-FG}
The following holds
\begin{equation}
\pi_{FG}\left(\tilde{\mathcal L}_{\coh}\right)\subset {\mathcal L}_{\coh}(m)\,,
\end{equation}
where $m=(m_1,\ldots m_5)$ with $m_a=\pi^a_{FG}\left(\{L_{bc}^0\}_{L_2^a}, Y_a^0,X_a^0\right)$.
\end{lemma}

\begin{proof}
The proof is based on the fact that
\begin{equation}
\tilde{\mathcal L}_{\coh}=\{(\form{P},\vec{Q})\colon {Q}^{I}=Q^{I}_\ast,\ a=1,\ldots 15,\ P^{10+a}=Q^{10+a}_\ast,\ a=1,\ldots, 5\}\,,
\end{equation}
where $Q^I_\ast=-\f{4\pi i}{k}j_I$ for $I\leq 10$ and $Q_\ast^{a+10}=X_a^0$, $P_{a+10}^\ast=Y_a^0$, $a=1,\ldots, 5$ are fixed. This leads to conclusion that, for every point $\tilde{x}\in \tilde{\mathcal L}_{\coh}$, 
\begin{equation}
\pi_{\cS_a}(\pi_{FG}(\tilde{x}))=\pi^a_{FG}\Pi_a(\tilde{x})=\pi^a_{FG}\left(\{L_{bc}^0\}_{L_2^a}, Y_a^0,X_a^0\right)\,,
\end{equation}
where the first equality results from Lemma \ref{lm:FG-s-a}. 
From the definition \eqref{def:coh-Lamgr} of ${\mathcal L}_{\coh}(m)$, $\pi_{FG}(\tilde{x})\in {\mathcal L}_{\coh}(m)$.
\end{proof}

\begin{remark}
\label{rmk:iso}
Lemma \ref{lm:coh-FG} establishes an inclusion, but the relationship is stronger. Since $\pi_{FG}$
 is a local diffeomorphism on the ambient phase space (Lemma \ref{lm:isom-FG}), its restriction to $\tilde{\mathcal L}_{\coh}$ induces a local diffeomorphism onto $\cL_{\coh}(m)$ in the neighborhood of any geometric flat connection. Crucially, this implies that the tangent map $D\pi_{FG}$ provides an isomorphism between the tangent spaces $T_{\tilde{x}}\tilde{\mathcal L}_{\coh}$ and $T_{x}{\mathcal L}_{\coh}(m)$.
\end{remark}

%%%%%%%%%%%%%%%%%%%%%%%%%%%%%%%%%%%%%%%%%%%%%%%%%%
\subsubsection{\blue{Action $S_{{\mathcal H}'}$ and its symplectic transformation}}
%%%%%%%%%%%%%%%%%%%%%%%%%%%%%%%%%%%%%%%%%%%%%%%%%%

We will now show that $S_{{\mathcal H}'}$ is a generating function of an affine symplectic transformation and $\chi_{S_{{\mathcal H}'}}=\Iota{\mathcal H}'\Iota^{-1}$. Direct computation gives
\begin{align}
&\frac{\partial S_{{\mathcal H}'}}{\partial Q^I}=-\frac{s}{4\pi}{Q'}^I,\quad\frac{\partial S_{{\mathcal H}'}}{\partial {Q'}^I}=-\frac{s}{4\pi}Q^I,\quad I=1,\ldots, 10\,,\\
&\frac{\partial S_{{\mathcal H}'}}{\partial Q^{10+a}}=\frac{s}{4\pi}\Lambda_a,\quad\frac{\partial S_{{\mathcal H}'}}{\partial {Q'}^{10+a}}=-\frac{s}{4\pi}\Lambda_a,\quad a=1,\ldots, 5\,,\\
&\frac{\partial S_{{\mathcal H}'}}{\partial \Lambda_a}=\frac{s}{4\pi}(Q^{10+a}-{Q'}^{10+a})\,.
\end{align}
By Lemma \ref{lm:real-Lagr-complex} and under the condition $\frac{\partial S_{{\mathcal H}'}}{\partial \Lambda_a}=0$, $a=1,\ldots, 5$, 
\begin{equation}
P_I'=\left\{\begin{array}{ll}
-Q^I, &I=1,\ldots, 10\\ P_I &I=11,\ldots 15
\end{array}\right.,\quad
{Q'}^I=\left\{\begin{array}{ll}
P_I, &I=1,\ldots, 10\\ Q^I &I=11,\ldots 15
\end{array}\right.,\quad \Lambda_a=-P_{10+a}\,.
\end{equation}
Comparing these relations with the definition of the affine map ${\mathcal H}'$ \eqref{eq:H'HM3}, we confirm that the action generates an affine symplectic transformation $\Iota^{-1}{\mathcal H}'\Iota$.

Since $S_U$ generates $\Iota^{-1}\chi_{S_U}\Iota={\mathcal H}_{M_3}$ (Lemma \ref{lm:Han-Pan-Dimofte}) and ${\mathcal H}'$ generates $\Iota^{-1}{\mathcal H}'\Iota$, 
from properties of generating functions, $S_H=S_{{\mathcal H}'}+S_{U}$ is a generating function for affine sympletic transformation 
\begin{equation}
\chi_{S_H}=\chi_{S_{{\mathcal H}'}}\circ\chi_{S_U}=\Iota{\mathcal H}'{\mathcal H}_{M_3}\Iota^{-1}\,.
\label{eq:chi_SH}
\end{equation}
This transformation maps the bulk coordinates directly to the auxiliary boundary coordinates, allowing us to compute the intersection in the final step of the proof.

%%%%%%%%%%%%%%%%%%%%%%%%%%%%%%%%%%%%%%%%%%%%%%%%%%
\subsection{\blue{Proof of non-degeneracy of the Hessian}}
\label{subsec:proof_Hessian}
%%%%%%%%%%%%%%%%%%%%%%%%%%%%%%%%%%%%%%%%%%%%%%%%%%

We will now prove Theorem \ref{thm-1-formal}. According to Lemma \ref{lm:partial-Hessian-complex}, we should show that the Hessian for $S_{tot}^{new}$ \eqref{eq:S_tot-new-action} is non-degenerate. Proposition \ref{lm:U-asym} reduces this task to showing that
\begin{equation}
T_{\tilde{x}'}{\mathcal L}^{\ro}_{S_+}\cap T_{\tilde{x}'}\left(\chi_{S_H}{\mathcal L}^{\ro}_{S_-}\right)=\emptyset
\label{eq:transverse-1}
\end{equation}
for $\tilde{x}'\in {\mathcal L}_{S_+}^{\ro}\cap \left(\chi_{S_H}{\mathcal L}_{S_-}^{\ro}\right)$ corresponding to a stationary point. Let us notice that ${\mathcal L}_{S_+}^{r}={\mathcal L}_{S_+}^{\ro}$ and ${\mathcal L}_{S_-}^{r}={\mathcal L}_{S_-}^{\ro}$.

We can pull this condition back to the phase space $\FF$ (parametrized by the FG-FN coordinates) using the symplectic maps identified in the previous subsections. Recall the definitions of the submanifolds from \eqref{eq:L_coh} and \eqref{eq:L_M3}:
\begin{equation}
\tilde{\mathcal L}_{\coh}={{\mathcal H}'}^{-1}\Iota^{-1}{\mathcal L}^{\ro}_{S_+},\quad \tilde{\mathcal L}_{M_3}={{\mathcal H}'}^{-1}\Iota^{-1}\left(\chi_{S_H}{\mathcal L}^{\ro}_{S_-}\right)\,.
\end{equation}
Since $\Iota^{-1}$ and ${{\mathcal H}'}^{-1}$ are diffeomorphisms, \eqref{eq:transverse-1} is equivalent to
\begin{equation}\label{cond:intersection-up}
T_{\tilde{x}}\tilde{\mathcal L}_{\coh}\cap T_{\tilde{x}}\tilde{\mathcal L}_{M_3}=\emptyset\,,
\end{equation}
where $\tilde{x}={{\mathcal H}'}^{-1}\Iota^{-1}(\tilde{x}')\in \tilde{\mathcal L}_{\coh}\cap \tilde{\mathcal L}_{M_3}$.

Suppose that $v\in T_{\tilde{x}}\tilde{\mathcal L}_{\coh}$. Then by Lemma \ref{lm:coh-FG},
\begin{equation}
D_{\tilde{x}}\pi_{FG}(v)\in T_{x}{\mathcal L}_{\coh}(m)\,,
\end{equation}
where $x=\pi_{FG}(\tilde{x})\in {\mathcal L}_{\coh}(m)$.

Similarly, if $v\in T_{\tilde{x}}\tilde{\mathcal L}_{M_3}$, then by Lemma \ref{lm:Han-Pan-Dimofte}, 
\begin{equation}
D_{\tilde{x}}\pi_{FG}(v)\in T_{x}{\mathcal L}_{M_3}\,.
\end{equation}
Moreover, as $m$ is a geometric boundary data, $x$ is the geometric flat connection corresponding to a non-degenerate 4-simplex. Theorem \ref{thm:geometric-formal} (Theorem \ref{thm:geometric}) states that, for such geometric data, $T_{x}{\mathcal L}_{M_3}\cap T_{x}{\mathcal L}_{\coh}(m)=\emptyset$. Therefore, $D_{\tilde{x}}\pi_{FG}(v)=0$. 
Finally, Lemma \ref{lm:isom-FG} guarantees that $\pi_{FG}$ is a local diffeomorphism at $x$, meaning $D_{\tilde{x}}\pi_{FG}$ is an isomorphism (it has a trivial kernel). Consequently, $D_{\tilde{x}}\pi_{FG}(v)=0$ implies $v=0$, which shows \eqref{cond:intersection-up}. This proves that the intersection \eqref{eq:transverse-1} is trivial, and therefore the Hessian is non-degenerate.

%%%%%%%%%%%%%%%%%%%%%%
\section{\blue{Conclusion and discussion}}
%%%%%%%%%%%%%%%%%%%%%%

In this paper, we have shown that the Hessian obtained in the stationary phase analysis of the vertex amplitude in the $\Lambda$-SF model introduced in \cite{Han:2021tzw} and later improved in \cite{Han:2025mkc} is non-degenerate, given that the boundary condition describes the geometry of a non-degenerate 4-simplex (with spacelike tetrahedra as required in the model). 

The proof is divided into two distinct parts: Theorem \ref{thm-1} (Theorem \ref{thm-1-formal}) is devoted to the specific construction of the $\Lambda$-SF model developed in \cite{Han:2021tzw, Han:2025mkc}. The second part (Theorem \ref{thm:geometric}, stated precisely in Theorem \ref{thm:geometric-formal}) concerns properties of special submanifolds in Chern-Simons phase space. It does not involve specific details of the $\Lambda$-SF model. Typical construction of a spinfoam model is based on imposing simplicity constraints on the partition function of a TQFT. Thus, we expect that our result will find application also in other models if the TQFT involved is related to Chern-Simons theory.

The significance of this result is twofold. First, it pushes the well-definedness of the standard stationary phase approximation for the geometric sector of the model further, placing the asymptotic analysis on firmer mathematical ground. Second, it confirms that the pathological behavior observed in the Barrett-Crane model \cite{Kaminski:2013yca}, where geometric configurations possess degenerate Hessians, is absent in the $\Lambda$-SF model.

Our main theorems concern geometric boundary data and non-degenerate reconstructed $4$-simplices. 
However, the phase space of the theory allows for other possibilities, which should be investigated in order to fully understand the semiclassical limit of the $\Lambda$-SF model.
\begin{enumerate}
    \item {\bf Generalized tetrahedra}: The boundary data may correspond to ``generalized tetrahedra", defined in \cite{Haggard:2015ima}, which includes not only the tetrahedra of a geometric 4-simplex considered in this paper but two-sheeted hyperbolic tetrahedra. In this case, if the two distinct stationary points correspond to generalized non-degenerate 4-simplices, we expect our proof of transversality to extend straightforwardly to this setting, as the symplectic geometry remains largely unchanged. 
    \item {\bf Vector geometries (degenerate 4-simplices)}: A more subtle case arises when the boundary data correspond to geometric or generalized tetrahedra, but the reconstructed 4-simplex is degenerate (e.g., ``vector geometries").  The Hessian in this situation has not been analyzed in the flat EPRL nor in the $\Lambda$-SF model. In the case of the EPRL model, if there are two degenerate stationary points, one can associate with this situation a non-degenerate Euclidean $4$-simplex \cite{Barrett:2009mw}. Since the Hessian is generically non-degenerate, we conjecture that, in this case, the Hessians in both flat and curved models are also always non-degenerate. We leave it for future work.
    \item {\bf Coalescing stationary Points}: For certain non-generic boundary data, there exists only a single stationary point. This case can be obtained as a limit of the standard situation in which two distinct stationary points may approach each other and merge. At the coalescence point, the Hessian becomes degenerate, marking a breakdown of the standard stationary phase approximation. This scenario requires higher-order asymptotic techniques (\eg Airy functions) to properly capture the behavior of the amplitude.
\end{enumerate}
The question of whether degenerate configurations dominate the path integral remains a critical open problem in spinfoam gravity. If the Hessian is degenerate for a sufficiently big subset of configurations, these configurations could dominate asymptotically in the large spin limit due to their slower decay, potentially spoiling the semiclassical limit. This phenomenon is a known risk in Lorentzian models (see \eg \cite{Barrett:2002ur}). Our result confirms that the geometric sector is safe, but a full understanding of the semiclassical limit requires a comprehensive analysis of the ``degenerate geometry" sector and its contribution to the total amplitude.

Let us compare our method with that used in \cite{Kaminski:2019dld} to prove the non-degeneracy of the Hessian in the EPRL model.
 The construction therein is based directly on the notion of ``positive Lagrangian" and analysis of the Hessians. 
It would be interesting to adapt our geometric intersection method to the EPRL context. Interestingly, the geometrical interpretation of the FG-FN coordinates motivates us to view them as some generalization of twisted geometry \cite{Freidel:2010aq,Freidel:2010bw}.  
The primary challenge in this adaptation lies in the distinct phase space structures. Namely, in the $\Lambda$-SF model, we work directly on the gauge-invariant moduli space, while the EPRL model is defined on the kinematical phase space of BF theory, where gauge-invariance constraints are not yet imposed. The gauge invariance of the partition function is obtained by a further group averaging operation. Consequently, generalizing our method requires lifting the submanifold $\cL_{\rm flat}$
 from the reduced phase space to the kinematical phase space and accounting for the gauge orbits.
We leave it for future investigations. 

Let us also emphasize that, although a degenerate Hessian often signals the presence of invariant directions in the action associated with gauge symmetries (these directions are handled by gauge-fixing or integrating over the group orbit), it is not the case for the $\Lambda$-SF model. This is because our analysis is performed on the moduli space of flat connections, which are already gauge-invariant. Consequently, any degeneracy in the Hessian for the $\Lambda$-SF model would not correspond to a redundancy of variables, but rather to a degeneracy of the geometric sector.

 Finally, let us comment on the rigorous applicability of stationary phase analysis to the $\Lambda$-SF model. As shown in \cite{Han:2021tzw,Han:2025mkc}, the integral analyzed in our work is absolutely convergent. However, this is not a sufficient condition for the stationary phase method to give the right answer about the asymptotic behavior of the integral. This is because the standard theorems about stationary phase approximation assume compact integration domains, which is not the case in $\langle \Psi_{k,\rm coh}, U_k \cZ_{k,M_3}\rangle_{\CS}$. To fully validate the asymptotic expansion, one must control the contributions from the ``tails" of the integral to ensure they are sub-dominant compared to the critical point contribution. \blue{Another problematic contribution may possibly come from points where action become singular. These points should be separately considered and their contributions should be analyzed.}
 We leave the formal proof of this asymptotic validity as an open problem. 
\blue{As a first approach, numerical checks could serve as a robust test of our framework. 
An approach would be to scan a wide range of boundary data, computing the amplitude and comparing the results with contributions from stationary phase analysis using the algorithms for geometric reconstruction of critical points developed in \cite{Pan:2025sut}. Such comparison could provide numerical verification of the applicability of stationary phase analysis.}

\section*{Acknowledgements}
The authors acknowledge IQG at FAU Erlangen-N\"urnberg for the hospitality during
their visits, where work was initiated.
QP receives support from the Jumpstart Postdoctoral Program and the College of Science Research Fellowship at Florida Atlantic University, and the Shuimu Tsinghua Scholar Program of Tsinghua University. WK acknowledges financial support from a grant 2022/47/B/ST2/02735 from the Polish Science Foundation (NCN).

\noindent {\bf Data Availability Statement.} Data sharing is not applicable to this article as no new data were created or analyzed in this study.

\noindent {\bf Coflict of interest statement.} The author certiﬁes that he has no affiliations with or involvement in any organization or entity with any financial interest or non-financial interest in the subject matter discussed in this manuscript.

\noindent {\bf Ethical Statement.} This research does not involve human participants or animals, and therefore no ethical approval is required.

\noindent {\bf Informed Consent.} Not applicable. This study does not involve human participants. 

\noindent {\bf Funding.} This work was supported by Shuimu Tsinghua Scholar Program of Tsinghua University, the Jumpstart Postdoctoral Program and the College of Science Research Fellowship at Florida Atlantic University, and Grant 2022/47/B/ST2/02735 from the Polish Science Foundation (NCN). 

\noindent {\bf Open Access.} This article is distributed under the terms of the Creative Commons Attribution 4.0 International License (\href{https://creativecommons.org/licenses/by/4.0/}{https://creativecommons.org/licenses/by/4.0/}), which permits unrestricted use, distribution, and reproduction in any medium, provided you give appropriate credit to the original author(s) and the source, provide a link to the Creative Commons license, and indicate if changes were made.

\bibliographystyle{bib-style} 
\bibliography{HS.bib}

\providecommand{\href}[2]{#2}\begingroup\raggedright\begin{thebibliography}{10}

\bibitem{Rovelli:2014ssa}
C.~Rovelli and F.~Vidotto, {\em {Covariant Loop Quantum Gravity}: {An
  Elementary Introduction to Quantum Gravity and Spinfoam Theory}}.
\newblock Cambridge Monographs on Mathematical Physics. Cambridge University
  Press, 11, 2014.

\bibitem{Perez:2012wv}
A.~Perez, ``{The Spin Foam Approach to Quantum Gravity},'' Living Rev. Rel.
  {\bf 16} (2013) 3,
  \href{http://arXiv.org/abs/1205.2019}{{\texttt{arXiv:1205.2019}}}.

\bibitem{Baez:1999sr}
J.~C. Baez, ``{An Introduction to Spin Foam Models of $BF$ Theory and Quantum
  Gravity},'' Lect. Notes Phys. {\bf 543} (2000) 25--93,
  \href{http://arXiv.org/abs/gr-qc/9905087}{{\texttt{arXiv:gr-qc/9905087}}}.

\bibitem{Bianchi:2017loop}
E.~Bianchi, ``Spinfoam Gravity,'' in {\em Loop Quantum Gravity:The First 30
  Years}, A.~Abhay and J.~Pullin, eds., pp.~97--124.
\newblock World Scientific, Singapore, 2017.

\bibitem{Engle:2023qsu}
J.~Engle and S.~Speziale, ``Spin foams: foundations,'' in {\em Handbook of
  Quantum Gravity}, pp.~3875--3914.
\newblock Springer, 2024.
\newblock \href{http://arXiv.org/abs/2310.20147}{{\texttt{arXiv:2310.20147}}}.

\bibitem{Livine:2024hhc}
E.~R. Livine, ``{Spinfoam Models for Quantum Gravity: Overview},'' in {\em
  {Encyclopedia of Mathematical Physics (2nd Edition)}}, vol.~1, pp.~507--519.
\newblock Elsevier, 2025.
\newblock \href{http://arXiv.org/abs/2403.09364}{{\texttt{arXiv:2403.09364}}}.

\bibitem{Rovelli:2010qx}
C.~Rovelli and M.~Smerlak, ``{In quantum gravity, summing is refining},''
  Class. Quant. Grav. {\bf 29} (2012) 055004,
  \href{http://arXiv.org/abs/1010.5437}{{\texttt{arXiv:1010.5437}}}.

\bibitem{Han:2025emp}
M.~Han, ``{On the summation and triangulation independence of Lorentzian
  spinfoam amplitudes for all LQG},'' arXiv preprint arXiv: 2510.26926 (10,
  2025) \href{http://arXiv.org/abs/2510.26926}{{\texttt{arXiv:2510.26926}}}.

\bibitem{DePietri:1999bx}
R.~De~Pietri, L.~Freidel, K.~Krasnov, and C.~Rovelli, ``{Barrett-Crane model
  from a Boulatov-Ooguri field theory over a homogeneous space},'' Nucl. Phys.
  B {\bf 574} (2000) 785--806,
  \href{http://arXiv.org/abs/hep-th/9907154}{{\texttt{arXiv:hep-th/9907154}}}.

\bibitem{Reisenberger:2000zc}
M.~P. Reisenberger and C.~Rovelli, ``{Space-time as a Feynman diagram: The
  Connection formulation},'' Class. Quant. Grav. {\bf 18} (2001) 121--140,
  \href{http://arXiv.org/abs/gr-qc/0002095}{{\texttt{arXiv:gr-qc/0002095}}}.

\bibitem{Freidel:2005qe}
L.~Freidel, ``{Group field theory: An Overview},'' Int. J. Theor. Phys. {\bf
  44} (2005) 1769--1783,
  \href{http://arXiv.org/abs/hep-th/0505016}{{\texttt{arXiv:hep-th/0505016}}}.

\bibitem{Oriti:2006se}
D.~Oriti, ``The group field theory approach to quantum gravity,'' Approaches to
  Quantum Gravity, Editor D. Oriti, Cambridge University Press, Cambridge
  (2009) 310--331.

\bibitem{Ponzano:1968se}
G.~Ponzano and T.~E. Regge, ``Semiclassical limit of Racah coefficients,''
  Spectroscopic and group theoretical methods in physics (1968) 1--58.

\bibitem{Roberts:1998zka}
J.~Roberts, ``{Classical 6j-symbols and the tetrahedron},'' Geom. Topol. {\bf
  3} (1999), no.~1, 21--66,
  \href{http://arXiv.org/abs/math-ph/9812013}{{\texttt{arXiv:math-ph/9812013}}}.

\bibitem{Schulten:1975sem}
K.~Schulten and R.~G. Gordon, ``Semiclassical approximations to 3 j-and 6
  j-coefficients for quantum-mechanical coupling of angular momenta,'' Journal
  of Mathematical Physics {\bf 16} (1975), no.~10, 1971--1988.

\bibitem{Freidel:2002mj}
L.~Freidel and D.~Louapre, ``{Asymptotics of 6j and 10j symbols},'' Class.
  Quant. Grav. {\bf 20} (2003) 1267--1294,
  \href{http://arXiv.org/abs/hep-th/0209134}{{\texttt{arXiv:hep-th/0209134}}}.

\bibitem{Barrett:1993db}
J.~W. Barrett and T.~J. Foxon, ``{Semiclassical limits of simplicial quantum
  gravity},'' Class. Quant. Grav. {\bf 11} (1994) 543--556,
  \href{http://arXiv.org/abs/gr-qc/9310016}{{\texttt{arXiv:gr-qc/9310016}}}.

\bibitem{taylor20066}
Y.~U. Taylor and C.~T. Woodward, ``6 j symbols for and non-Euclidean
  tetrahedra,'' Selecta Mathematica {\bf 11} (2006), no.~3, 539.

\bibitem{Barrett:1997gw}
J.~W. Barrett and L.~Crane, ``{Relativistic spin networks and quantum
  gravity},'' J. Math. Phys. {\bf 39} (1998) 3296--3302,
  \href{http://arXiv.org/abs/gr-qc/9709028}{{\texttt{arXiv:gr-qc/9709028}}}.

\bibitem{Barrett:1999qw}
J.~W. Barrett and L.~Crane, ``{A Lorentzian signature model for quantum general
  relativity},'' Class. Quant. Grav. {\bf 17} (2000) 3101--3118,
  \href{http://arXiv.org/abs/gr-qc/9904025}{{\texttt{arXiv:gr-qc/9904025}}}.

\bibitem{Barrett:1998gs}
J.~W. Barrett and R.~M. Williams, ``{The Asymptotics of an amplitude for the
  four simplex},'' Adv. Theor. Math. Phys. {\bf 3} (1999) 209--215,
  \href{http://arXiv.org/abs/gr-qc/9809032}{{\texttt{arXiv:gr-qc/9809032}}}.

\bibitem{Barrett:2011xa}
J.~W. Barrett, R.~J. Dowdall, W.~J. Fairbairn, F.~Hellmann, and R.~Pereira,
  ``{Asymptotic analysis of Lorentzian spin foam models},'' PoS {\bf QGQGS2011}
  (2011) 009.

\bibitem{Barrett:2009mw}
J.~W. Barrett, R.~J. Dowdall, W.~J. Fairbairn, F.~Hellmann, and R.~Pereira,
  ``{Lorentzian spin foam amplitudes: Graphical calculus and asymptotics},''
  Class. Quant. Grav. {\bf 27} (2010) 165009,
  \href{http://arXiv.org/abs/0907.2440}{{\texttt{arXiv:0907.2440}}}.

\bibitem{Freidel:2004vi}
L.~Freidel and D.~Louapre, ``{Ponzano-Regge model revisited I: Gauge fixing,
  observables and interacting spinning particles},'' Class. Quant. Grav. {\bf
  21} (2004) 5685--5726,
  \href{http://arXiv.org/abs/hep-th/0401076}{{\texttt{arXiv:hep-th/0401076}}}.

\bibitem{Bonzom:2009zd}
V.~Bonzom, E.~R. Livine, and S.~Speziale, ``{Recurrence relations for spin foam
  vertices},'' Class. Quant. Grav. {\bf 27} (2010) 125002,
  \href{http://arXiv.org/abs/0911.2204}{{\texttt{arXiv:0911.2204}}}.

\bibitem{Bonzom:2012mb}
V.~Bonzom and M.~Smerlak, ``{Gauge symmetries in spinfoam gravity: the case for
  'cellular quantization'},'' Phys. Rev. Lett. {\bf 108} (2012) 241303,
  \href{http://arXiv.org/abs/1201.4996}{{\texttt{arXiv:1201.4996}}}.

\bibitem{Goeller:2019zpz}
C.~Goeller, E.~R. Livine, and A.~Riello, ``{Non-Perturbative 3D Quantum
  Gravity: Quantum Boundary States and Exact Partition Function},'' Gen. Rel.
  Grav. {\bf 52} (2020), no.~3, 24,
  \href{http://arXiv.org/abs/1912.01968}{{\texttt{arXiv:1912.01968}}}.

\bibitem{Barrett:2008wh}
J.~W. Barrett and I.~Naish-Guzman, ``{The Ponzano-Regge model},'' Class. Quant.
  Grav. {\bf 26} (2009) 155014,
  \href{http://arXiv.org/abs/0803.3319}{{\texttt{arXiv:0803.3319}}}.

\bibitem{Turaev:1992hq}
V.~G. Turaev and O.~Y. Viro, ``{State sum invariants of 3 manifolds and quantum
  6j symbols},'' Topology {\bf 31} (1992) 865--902.

\bibitem{Noui:2002ag}
K.~Noui and P.~Roche, ``{Cosmological deformation of Lorentzian spin foam
  models},'' Class. Quant. Grav. {\bf 20} (2003) 3175--3214,
  \href{http://arXiv.org/abs/gr-qc/0211109}{{\texttt{arXiv:gr-qc/0211109}}}.

\bibitem{Han:2010pz}
M.~Han, ``{4-dimensional Spin-foam Model with Quantum Lorentz Group},'' J.
  Math. Phys. {\bf 52} (2011) 072501,
  \href{http://arXiv.org/abs/1012.4216}{{\texttt{arXiv:1012.4216}}}.

\bibitem{Fairbairn:2010cp}
W.~J. Fairbairn and C.~Meusburger, ``{Quantum deformation of two
  four-dimensional spin foam models},'' J. Math. Phys. {\bf 53} (2012) 022501,
  \href{http://arXiv.org/abs/1012.4784}{{\texttt{arXiv:1012.4784}}}.

\bibitem{Han:2021tzw}
M.~Han, ``{Four-dimensional spinfoam quantum gravity with a cosmological
  constant: Finiteness and semiclassical limit},'' Phys. Rev. D {\bf 104}
  (2021), no.~10, 104035,
  \href{http://arXiv.org/abs/2109.00034}{{\texttt{arXiv:2109.00034}}}.

\bibitem{Han:2025mkc}
M.~Han and Q.~Pan, ``{Complex Chern-Simons theory with $k=8\mathbb{N}$ and an
  improved spinfoam model with a cosmological constant},'' Phys. Rev. D {\bf
  112} (2025), no.~2, 026015,
  \href{http://arXiv.org/abs/2504.06427}{{\texttt{arXiv:2504.06427}}}.

\bibitem{SupernovaSearchTeam:1998fmf}
{\bf Supernova Search Team}, A.~G. Riess {\em et al.}, ``{Observational
  evidence from supernovae for an accelerating universe and a cosmological
  constant},'' Astron. J. {\bf 116} (1998) 1009--1038,
  \href{http://arXiv.org/abs/astro-ph/9805201}{{\texttt{arXiv:astro-ph/9805201}}}.

\bibitem{SupernovaCosmologyProject:1998vns}
{\bf Supernova Cosmology Project}, S.~Perlmutter {\em et al.}, ``{Measurements
  of $\Omega$ and $\Lambda$ from 42 High Redshift Supernovae},'' Astrophys. J.
  {\bf 517} (1999) 565--586,
  \href{http://arXiv.org/abs/astro-ph/9812133}{{\texttt{arXiv:astro-ph/9812133}}}.

\bibitem{Planck:2018vyg}
{\bf Planck}, N.~Aghanim {\em et al.}, ``{Planck 2018 results. VI. Cosmological
  parameters},'' Astron. Astrophys. {\bf 641} (2020) A6,
  \href{http://arXiv.org/abs/1807.06209}{{\texttt{arXiv:1807.06209}}}.
  [Erratum: Astron.Astrophys. 652, C4 (2021)].

\bibitem{Horowitz:1989ng}
G.~T. Horowitz, ``{Exactly Soluble Diffeomorphism Invariant Theories},''
  Commun. Math. Phys. {\bf 125} (1989) 417.

\bibitem{Cotta-Ramusino:1994nhr}
P.~Cotta-Ramusino and M.~Martellini, ``{BF theories and two knots},'' in {\em
  {Knots and Quantum Gravity}}, {Oxford Lecture Series in Mathematics and Its
  Applications}.
\newblock {Clanderon Press}, 8, 1994.
\newblock
  \href{http://arXiv.org/abs/hep-th/9407097}{{\texttt{arXiv:hep-th/9407097}}}.

\bibitem{Cattaneo:1995tw}
A.~S. Cattaneo, P.~Cotta-Ramusino, J.~Frohlich, and M.~Martellini,
  ``{Topological BF theories in three-dimensions and four-dimensions},'' J.
  Math. Phys. {\bf 36} (1995) 6137--6160,
  \href{http://arXiv.org/abs/hep-th/9505027}{{\texttt{arXiv:hep-th/9505027}}}.

\bibitem{Haggard:2015sl}
H.~M. Haggard, M.~Han, W.~Kami{\'n}ski, and A.~Riello, ``SL (2, C)
  Chern--Simons theory, a non-planar graph operator, and 4D quantum gravity
  with a cosmological constant: Semiclassical geometry,'' Nuclear Physics B
  {\bf 900} (2015) 1--79.

\bibitem{Haggard:2015yda}
H.~M. Haggard, M.~Han, W.~Kami\'nski, and A.~Riello, ``{Four-dimensional
  Quantum Gravity with a Cosmological Constant from Three-dimensional
  Holomorphic Blocks},'' Phys. Lett. B {\bf 752} (2016) 258--262,
  \href{http://arXiv.org/abs/1509.00458}{{\texttt{arXiv:1509.00458}}}.

\bibitem{Haggard:2015nat}
H.~M. Haggard, M.~Han, W.~Kami{\'n}ski, and A.~Riello,
  ``{$\operatorname{SL}(2,\mathbb{C})$ Chern-Simons theory, flat connections,
  and four-dimensional quantum geometry},'' Adv. Theor. Math. Phys. {\bf 23}
  (2019), no.~4, 1067--1158,
  \href{http://arXiv.org/abs/1512.07690}{{\texttt{arXiv:1512.07690}}}.

\bibitem{Han:2023hbe}
M.~Han and Q.~Pan, ``{Melonic radiative correction in four-dimensional spinfoam
  model with a cosmological constant},'' Phys. Rev. D {\bf 109} (2024), no.~12,
  124050, \href{http://arXiv.org/abs/2310.04537}{{\texttt{arXiv:2310.04537}}}.

\bibitem{Han:2024reo}
M.~Han and Q.~Pan, ``{Deficit angles in 4D spinfoam with a cosmological
  constant: de Sitter-ness, anti\textendash{}de Sitter-ness and more},'' Phys.
  Rev. D {\bf 109} (2024), no.~8, 084040,
  \href{http://arXiv.org/abs/2401.14643}{{\texttt{arXiv:2401.14643}}}.

\bibitem{Pan:2025sut}
Q.~Pan, ``{Geometrical reconstruction of spinfoam critical points with a
  cosmological constant},'' Phys. Rev. D {\bf 112} (2025), no.~2, 026008,
  \href{http://arXiv.org/abs/2504.06428}{{\texttt{arXiv:2504.06428}}}.

\bibitem{Gaiotto:2009hg}
D.~Gaiotto, G.~W. Moore, and A.~Neitzke, ``{Wall-crossing, Hitchin systems, and
  the WKB approximation},'' Adv. Math. {\bf 234} (2013) 239--403,
  \href{http://arXiv.org/abs/0907.3987}{{\texttt{arXiv:0907.3987}}}.

\bibitem{Dimofte:2011gm}
T.~Dimofte, ``{Quantum Riemann Surfaces in Chern-Simons Theory},'' Adv. Theor.
  Math. Phys. {\bf 17} (2013), no.~3, 479--599,
  \href{http://arXiv.org/abs/1102.4847}{{\texttt{arXiv:1102.4847}}}.

\bibitem{Dimofte:2011ju}
T.~Dimofte, D.~Gaiotto, and S.~Gukov, ``{Gauge Theories Labelled by
  Three-Manifolds},'' Commun. Math. Phys. {\bf 325} (2014) 367--419,
  \href{http://arXiv.org/abs/1108.4389}{{\texttt{arXiv:1108.4389}}}.

\bibitem{Dimofte:2013lba}
T.~Dimofte, D.~Gaiotto, and R.~van~der Veen, ``{RG Domain Walls and Hybrid
  Triangulations},'' Adv. Theor. Math. Phys. {\bf 19} (2015) 137--276,
  \href{http://arXiv.org/abs/1304.6721}{{\texttt{arXiv:1304.6721}}}.

\bibitem{Dimofte:2014zga}
T.~Dimofte, ``{Complex Chern\textendash{}Simons Theory at Level k via the
  3d\textendash{}3d Correspondence},'' Commun. Math. Phys. {\bf 339} (2015),
  no.~2, 619--662,
  \href{http://arXiv.org/abs/1409.0857}{{\texttt{arXiv:1409.0857}}}.

\bibitem{andersen2014complex}
J.~E. Andersen and R.~Kashaev, ``Complex Quantum Chern-Simons,'' arXiv preprint
  arXiv:1409.1208 (2014)
  \href{http://arXiv.org/abs/1409.1208}{{\texttt{arXiv:1409.1208}}}.

\bibitem{Smolin:1995vq}
L.~Smolin, ``{Linking topological quantum field theory and nonperturbative
  quantum gravity},'' J. Math. Phys. {\bf 36} (1995) 6417--6455,
  \href{http://arXiv.org/abs/gr-qc/9505028}{{\texttt{arXiv:gr-qc/9505028}}}.

\bibitem{Barrett:2002ur}
J.~W. Barrett and C.~M. Steele, ``{Asymptotics of relativistic spin
  networks},'' Class. Quant. Grav. {\bf 20} (2003) 1341--1362,
  \href{http://arXiv.org/abs/gr-qc/0209023}{{\texttt{arXiv:gr-qc/0209023}}}.

\bibitem{Baez:2002rx}
J.~C. Baez, J.~D. Christensen, and G.~Egan, ``{Asymptotics of 10j symbols},''
  Class. Quant. Grav. {\bf 19} (2002) 6489,
  \href{http://arXiv.org/abs/gr-qc/0208010}{{\texttt{arXiv:gr-qc/0208010}}}.

\bibitem{Baez:2002aw}
J.~C. Baez, J.~D. Christensen, T.~R. Halford, and D.~C. Tsang, ``{Spin foam
  models of Riemannian quantum gravity},'' Class. Quant. Grav. {\bf 19} (2002)
  4627--4648,
  \href{http://arXiv.org/abs/gr-qc/0202017}{{\texttt{arXiv:gr-qc/0202017}}}.

\bibitem{Christensen:2007rv}
J.~D. Christensen, E.~R. Livine, and S.~Speziale, ``{Numerical evidence of
  regularized correlations in spin foam gravity},'' Phys. Lett. B {\bf 670}
  (2009) 403--406,
  \href{http://arXiv.org/abs/0710.0617}{{\texttt{arXiv:0710.0617}}}.

\bibitem{Christensen:2009bi}
J.~D. Christensen, I.~Khavkine, E.~R. Livine, and S.~Speziale, ``{Sub-leading
  asymptotic behaviour of area correlations in the Barrett-Crane model},''
  Class. Quant. Grav. {\bf 27} (2010) 035012,
  \href{http://arXiv.org/abs/0908.4476}{{\texttt{arXiv:0908.4476}}}.

\bibitem{Engle:2007wy}
J.~Engle, E.~Livine, R.~Pereira, and C.~Rovelli, ``{LQG vertex with finite
  Immirzi parameter},'' Nucl. Phys. B {\bf 799} (2008) 136--149,
  \href{http://arXiv.org/abs/0711.0146}{{\texttt{arXiv:0711.0146}}}.

\bibitem{Livine:2007vk}
E.~R. Livine and S.~Speziale, ``{A New spinfoam vertex for quantum gravity},''
  Phys. Rev. D {\bf 76} (2007) 084028,
  \href{http://arXiv.org/abs/0705.0674}{{\texttt{arXiv:0705.0674}}}.

\bibitem{Livine:2007ya}
E.~R. Livine and S.~Speziale, ``{Consistently Solving the Simplicity
  Constraints for Spinfoam Quantum Gravity},'' EPL {\bf 81} (2008), no.~5,
  50004, \href{http://arXiv.org/abs/0708.1915}{{\texttt{arXiv:0708.1915}}}.

\bibitem{Freidel:2007py}
L.~Freidel and K.~Krasnov, ``{A New Spin Foam Model for 4d Gravity},'' Class.
  Quant. Grav. {\bf 25} (2008) 125018,
  \href{http://arXiv.org/abs/0708.1595}{{\texttt{arXiv:0708.1595}}}.

\bibitem{Haggard:2015ima}
H.~M. Haggard, M.~Han, and A.~Riello, ``{Encoding Curved Tetrahedra in Face
  Holonomies: Phase Space of Shapes from Group-Valued Moment Maps},'' Annales
  Henri Poincare {\bf 17} (2016), no.~8, 2001--2048,
  \href{http://arXiv.org/abs/1506.03053}{{\texttt{arXiv:1506.03053}}}.

\bibitem{Haggard:2014xoa}
H.~M. Haggard, M.~Han, W.~Kami\'nski, and A.~Riello, ``{SL(2,C)
  Chern\textendash{}Simons theory, a non-planar graph operator, and 4D quantum
  gravity with a cosmological constant: Semiclassical geometry},'' Nucl. Phys.
  B {\bf 900} (2015) 1--79,
  \href{http://arXiv.org/abs/1412.7546}{{\texttt{arXiv:1412.7546}}}.

\bibitem{Alesci:2008ff}
E.~Alesci, E.~Bianchi, and C.~Rovelli, ``{LQG propagator: III. The New
  vertex},'' Class. Quant. Grav. {\bf 26} (2009) 215001,
  \href{http://arXiv.org/abs/0812.5018}{{\texttt{arXiv:0812.5018}}}.

\bibitem{Bianchi:2009ri}
E.~Bianchi, E.~Magliaro, and C.~Perini, ``{LQG propagator from the new spin
  foams},'' Nucl. Phys. B {\bf 822} (2009) 245--269,
  \href{http://arXiv.org/abs/0905.4082}{{\texttt{arXiv:0905.4082}}}.

\bibitem{Bianchi:2011hp}
E.~Bianchi and Y.~Ding, ``{Lorentzian spinfoam propagator},'' Phys. Rev. D {\bf
  86} (2012) 104040,
  \href{http://arXiv.org/abs/1109.6538}{{\texttt{arXiv:1109.6538}}}.

\bibitem{Kaminski:2013gaa}
W.~Kami{\'n}ski and S.~Steinhaus, ``{Coherent states, 6j symbols and properties
  of the next to leading order asymptotic expansions},'' J. Math. Phys. {\bf
  54} (2013) 121703,
  \href{http://arXiv.org/abs/1307.5432}{{\texttt{arXiv:1307.5432}}}.

\bibitem{Kaminski:2013yca}
W.~Kami{\'n}ski and S.~Steinhaus, ``{The Barrett{\textendash}Crane model:
  asymptotic measure factor},'' Class. Quant. Grav. {\bf 31} (2014) 075014,
  \href{http://arXiv.org/abs/1310.2957}{{\texttt{arXiv:1310.2957}}}.

\bibitem{Dittrich:2007wm}
B.~Dittrich, L.~Freidel, and S.~Speziale, ``{Linearized dynamics from the
  4-simplex Regge action},'' Phys. Rev. D {\bf 76} (2007) 104020,
  \href{http://arXiv.org/abs/0707.4513}{{\texttt{arXiv:0707.4513}}}.

\bibitem{Han:2020fil}
M.~Han, Z.~Huang, H.~Liu, and D.~Qu, ``{Numerical computations of
  next-to-leading order corrections in spinfoam large-$j$ asymptotics},'' Phys.
  Rev. D {\bf 102} (2020), no.~12, 124010,
  \href{http://arXiv.org/abs/2007.01998}{{\texttt{arXiv:2007.01998}}}.

\bibitem{Han:2021kll}
M.~Han, Z.~Huang, H.~Liu, and D.~Qu, ``{Complex critical points and curved
  geometries in four-dimensional Lorentzian spinfoam quantum gravity},'' Phys.
  Rev. D {\bf 106} (2022), no.~4, 044005,
  \href{http://arXiv.org/abs/2110.10670}{{\texttt{arXiv:2110.10670}}}.

\bibitem{Han:2023cen}
M.~Han, H.~Liu, and D.~Qu, ``{Complex critical points in Lorentzian spinfoam
  quantum gravity: Four-simplex amplitude and effective dynamics on a
  double-\ensuremath{\Delta}3 complex},'' Phys. Rev. D {\bf 108} (2023), no.~2,
  026010, \href{http://arXiv.org/abs/2301.02930}{{\texttt{arXiv:2301.02930}}}.

\bibitem{Kaminski:2019dld}
W.~Kami{\'n}ski and H.~Sahlmann, ``{The hessian in spin foam models},'' Annales
  Henri Poincare {\bf 20} (2019), no.~12, 3927--3953,
  \href{http://arXiv.org/abs/1906.05258}{{\texttt{arXiv:1906.05258}}}.

\bibitem{hormander1973existence}
L.~V. H{\"o}rmander, ``On the existence and the regularity of solutions of
  linear pseudodifferential equations,'' Enseignement Math. {\bf 2} (1971),
  no.~17, 99--163.

\bibitem{guillemin2013semi}
V.~Guillemin and S.~Sternberg, {\em Semi-classical analysis}.
\newblock International Press of Boston, Incorporated, 2013.

\bibitem{Mellin-Sjostrand-75}
A.~Melin and J.~Sj{\"o}strand, ``Fourier integral operators with complex-valued
  phase functions,'' in {\em Fourier Integral Operators and Partial
  Differential Equations}, J.~Chazarain, ed., pp.~120--223.
\newblock Springer Berlin Heidelberg, Berlin, Heidelberg, 1975.

\bibitem{hormander-IV}
L.~V. H{\"o}rmander, {\em The analysis of linear partial differential operators
  IV: Fourier integral operators}, vol.~4.
\newblock Springer-Verlag Berlin Heidelberg NewYork Tokyo, 1985.

\bibitem{Grigis_Sjoestrand_1994}
A.~Grigis and J.~Sj{\"o}strand, {\em Microlocal Analysis for Differential
  Operators: An Introduction}.
\newblock London Mathematical Society Lecture Note Series. Cambridge University
  Press, 1994.

\bibitem{derezinski2013mathematics}
J.~Derezi{\'n}ski and C.~G{\'e}rard, {\em Mathematics of quantization and
  quantum fields}.
\newblock Cambridge University Press, 2013.

\bibitem{atiyah1983yang}
M.~F. Atiyah and R.~Bott, ``The yang-mills equations over riemann surfaces,''
  Philosophical Transactions of the Royal Society of London. Series A,
  Mathematical and Physical Sciences {\bf 308} (1983), no.~1505, 523--615.

\bibitem{goldman1984symplectic}
W.~M. Goldman, ``The symplectic nature of fundamental groups of surfaces,''
  Advances in Mathematics {\bf 54} (1984), no.~2, 200--225.

\bibitem{guichard2018introduction}
O.~Guichard, ``An introduction to the differential geometry of flat bundles and
  of Higgs bundles,'' The Geometry, Topology and Physics of Moduli Spaces of
  Higgs Bundles, Lect. Notes Ser. Inst. Math. Sci. Natl. Univ. Singap {\bf 36}
  (2018) 1--63.

\bibitem{Fock:2003alg}
V.~Fock and A.~Goncharov, ``Moduli spaces of local systems and higher
  Teichm{\"u}ller theory,'' Publications Math{\'e}matiques de l'IH{\'E}S {\bf
  103} (2006) 1--211,
  \href{http://arXiv.org/abs/math/0311149}{{\texttt{arXiv:math/0311149}}}.

\bibitem{Freidel:2010aq}
L.~Freidel and S.~Speziale, ``{Twisted geometries: A geometric parametrisation
  of SU(2) phase space},'' Phys. Rev. D {\bf 82} (2010) 084040,
  \href{http://arXiv.org/abs/1001.2748}{{\texttt{arXiv:1001.2748}}}.

\bibitem{Freidel:2010bw}
L.~Freidel and S.~Speziale, ``{From twistors to twisted geometries},'' Phys.
  Rev. D {\bf 82} (2010) 084041,
  \href{http://arXiv.org/abs/1006.0199}{{\texttt{arXiv:1006.0199}}}.

\end{thebibliography}\endgroup
%\appendix

\end{document}